\renewcommand*{\backref}[1]{}
\renewcommand*{\backrefalt}[4]{(%
    \ifcase #1 %
          \or p.~#2%
          \else pp.~#2%
    \fi%
    )}
\pgfplotsset{compat=1.18}
\theoremstyle{plain}
\newtheorem{theorem}{Theorem}
\newtheorem*{theorem*}{Theorem}
\newtheorem{lemma}[theorem]{Lemma}
\newtheorem{proposition}[theorem]{Proposition}
\newtheorem{corollary}[theorem]{Corollary}
\newtheorem*{corollary*}{Corollary}
\theoremstyle{definition}
\newtheorem{definition}[theorem]{Definition}
\theoremstyle{remark}
\newtheorem{remark}[theorem]{Remark}
\DeclareMathOperator{\diag}{\mathrm{diag}}
\DeclareMathOperator{\SO}{SO}
\DeclareMathOperator{\Orth}{O}
\newcommand{\Cq}[2]{\mathcal{C}^{(#1)}_{#2}}
\newcommand{\Cn}[1]{\Cq{n}{#1}}
\newcommand{\GMq}[2]{\mathcal{G}^{(#1)}_{#2}}
\newcommand{\Mq}[2]{\mathcal{M}^{(#1)}_{#2}}
\newcommand{\GMn}[1]{\GMq{n}{#1}}
\newcommand{\Mn}[1]{\Mq{n}{#1}}
\newcommand{\U}{\mathcal{U}}
\newcommand{\Even}[1]{\mathcal{E}^{(#1)}}
\newcommand{\En}[1]{\Even{n}}
\newcommand{\Odd}[1]{\mathcal{O}^{(#1)}}
\newcommand{\On}[1]{\Odd{n}}
\newcommand{\Alt}[1]{\mathcal{A}^{(#1)}}
\newcommand{\An}[1]{\Alt{n}}
\newcommand{\fSWAP}{\textsf{fSWAP}}
\newcommand{\SWAP}{\textsf{SWAP}}
\newcommand{\tc}{\tilde{c}}
\newcommand{\setdef}  [2]{\mathopen{ \{ } #1 \mid #2 \mathclose{ \} }}             
\newcommand{\Setdef}  [2]{\left\{ #1 \mid #2 \right\}}             
\newcommand{\defeq}{\colonequals}
\newcommand{\NN}{\mathbb{N}}
\newcommand{\ZZ}{\mathbb{Z}}
\newcommand{\RR}{\mathbb{R}}
\newcommand{\CC}{\mathbb{C}}
\renewcommand{\vec}[1]{\mathbf{#1}}
\newcommand{\veca}{\vec{a}}
\newcommand{\vy}{\vec{y}}
\newcommand{\vz}{\vec{z}}
\newcommand{\vyslashk}{\vec{y}\backslash k}
\newcommand{\ek}{\vec{e}_k}
\newcommand{\vyek}{\vec{y} + \ek}
\newcommand{\Fy}{F_{\vy}}
\newcommand{\Fyek}{F_{\vyek}}
\newcommand{\Fyslashk}{F_{\vyslashk}}
\newcommand{\one}{\mathbb{1}}
\newcommand{\ie}{i.e.~}
\newcommand{\eg}{e.g.~}
\title{
\vspace{-1.04cm}
Matchgate hierarchy: \\  \mbox{\Large A Clifford-like hierarchy for
deterministic gate teleportation  in matchgate circuits}
}
\author[1,2]{Angelos Bampounis}
\author[1]{Rui Soares Barbosa}
\author[3]{Nadish de Silva}
\affil[1]{{\normalsize INL -- International Iberian Nanotechnology Laboratory, Braga, Portugal}}
\affil[2]{{\normalsize Centre of Mathematics, Universidade do Minho, Braga, Portugal}}
\affil[3]{{\normalsize Department of Mathematics, Simon Fraser University, Burnaby, B.C., Canada}}
\date{}
\definecolor{amber}{rgb}{1.0, 0.49, 0.0}
\newcommand{\lmu}{\mathfrak{m}}
\newcommand{\lnu}{\mathfrak{n}}
\newcommand{\llam}{\mathfrak{l}}
\begin{document}

\clearpage

\maketitle

\begin{abstract}\noindent
The Clifford hierarchy, introduced by Gottesman and Chuang in 1999, is an increasing sequence of sets of quantum gates crucial to the gate teleportation model for fault-tolerant quantum computation. Gates in the hierarchy can be deterministically implemented, with increasing complexity, via gate teleportation using (adaptive) Clifford circuits with access to magic states.

We propose an analogous gate teleportation protocol and a related hierarchy in the context of matchgate circuits, another restricted class of quantum circuits that can be efficiently classically simulated but are promoted to quantum universality via access to `matchgate-magic' states.
The protocol deterministically implements any $n$-qubit gate in the hierarchy using adaptive matchgate circuits with magic states, with the level in the hierarchy indicating the required depth of adaptivity and thus number of magic states consumed.
It also provides a whole family of novel deterministic matchgate-magic states.

We completely characterise the gates in the matchgate hierarchy for two qubits,
with the consequence that, in this case, the required number of resource states grows linearly with the target gate's level in the hierarchy.
For an arbitrary number of qubits, we propose a characterisation of the matchgate hierarchy by leveraging the fermionic Stone--von Neumann theorem.
It places a polynomial upper bound on the space requirements for representing gates at each level.
\end{abstract}

\setcounter{secnumdepth}{1}
\setcounter{tocdepth}{2}

\tableofcontents

\clearpage
\section{Introduction}

\subsection{Motivation}
Besides the well-known stabiliser subtheory, where circuits are built out of Clifford gates, another intriguing classically simulable class of quantum circuits consists of those built out of a special set of two-qubit gates, called matchgates, acting on nearest-neighbour qubits.
A matchgate is a two-qubit unitary operator $G(A, B) = A \oplus B$, where unitaries $A, B \in \U(2)$ act on the even and odd parity subspaces, respectively, and have equal determinant.
This class of circuits was originally proposed by Valiant and is closely connected to the problem of counting perfect matchings in graphs~\cite{valiant2001matchgates}.
It turns out to have profound physical significance too, in that these circuits correspond to quantum evolutions of non-interacting (``free'') fermions.
This correspondence is given explicitly by the Jordan--Wigner (JW) transformation, a two-way mapping between
$n$-mode fermionic creation and annihilation operators and $n$-qubit unitary operators.

The analogue of the Gottesman--Knill theorem for matchgate circuits was proved by Valiant~\cite{valiant2001matchgates},
establishing that circuits of nearest-neighbour matchgates can be classically efficiently simulated.
As for Clifford circuits, a way to promote matchgate circuits to quantum universality is using so-called `magic' states \cite{bravyi2006universal,hebenstreit_2019}.
Through a `gate-gadget' construction, it has been shown that any pure fermionic state\footnote{An $n$-qubit state is said to be \emph{fermionic} if it is an eigenstate of the parity operator $Z^{\otimes n}$; equivalently, if it can be prepared by applying a fermionic unitary (see \Cref{sec:matchgate_computation}) to a computational basis state.} that
is not Gaussian, i.e. that cannot be prepared by applying a matchgate circuit to a computational basis state,
can be used as a magic state for matchgate computation~\cite{hebenstreit_2019}.

The promotion of restricted classes of circuits to quantum universality
through quantum gate teleportation was put forward by Shor \cite{shor1997fault} and, in more generality, by Gottesman and Chuang~\cite{Gottesman_1999}.
The latter, in the same work, also introduced the Clifford hierarchy:
an increasing family of sets of unitary gates
that can be implemented \emph{deterministically} using gate teleportation on stabiliser circuits.
They can be executed fault-tolerantly on qubits encoded via the most common stabiliser quantum error-correcting codes.
The level of a unitary in the hierarchy is a measure of the complexity of implementing it in this gate-teleportation model of computation.
The first level of the hierarchy is the Pauli group and the second is the Clifford group.
Mathematically, $(k+1)$-level gates are defined recursively as those unitaries that map Pauli operators to $k$-level gates under conjugation.
Gates in the third level acting on $n$ qubits can be implemented deterministically using magic states of (at most) $2n$ qubits and adaptive Clifford circuits.
More generally, gates at any level in the hierarchy can be implemented fault-tolerantly by a similar construction using gates from strictly lower levels in the hierarchy.

\subsection{Contributions}
We propose an analogous gate teleportation scheme for matchgate circuit computation,
and
develop a corresponding gate hierarchy, which we dub the \emph{matchgate hierarchy} (\Cref{sec:matchgate_hierarchy}) .
The first level consists of normalised real linear combinations of Majorana operators.
These are a set $\{c_\mu\}_{\mu=1}^{2n}$ of $2n$ unitary Hermitian operators
that satisfy the canonical anticommutation relations (\Cref{eq:CAR}).
The second level of the hierarchy consists of the unitaries implemented by generalised matchgate circuits.
Mathematically, these conjugate each Majorana operator to a linear combination of Majoranas,
with the coefficients forming a $2n \times 2n$ orthogonal matrix.\footnotemark 

\footnotetext{Generalised matchgate circuits extend matchgate circuits by adding to the gate set the single-qubit gate $X$,
or equivalently, any Majorana operator; see \Cref{sec:matchgate_computation} for details. 
The matrix of coefficients is special orthogonal for \emph{proper} matchgate circuits; it is orthogonal for generalised matchgate circuits.}

To implement gates in higher levels of the hierarchy, we propose (\Cref{sec:teleportation_protocol}) a gate teleportation protocol inspired by that of Ref.~\cite{Gottesman_1999}.
It is an alternative to the protocol proposed by Hebenstreit et al.~\cite{hebenstreit_2019} to implement the non-matchgate $\SWAP$,
and it applies more broadly to implement a large set of gates \emph{deterministically}.

We explicitly characterise the matchgate hierarchy in the two-qubit case, and draw consequences regarding the efficiency of the protocol (\Cref{sec:2qubit_characterisation}).
For arbitrary number of qubits,
we prove a `hierarchy-aware' fermionic Stone--von Neumann theorem, 
which paves the way for a clearer understanding of the structure of the matchgate hierarchy (\Cref{sec:fermionic_stone_von_neumman_theorem}).

\paragraph{Outline}
The remainder of this article is organised as follows.
\Cref{sec:matchgate_computation} presents the background material and fundamental concepts used throughout the paper:
it begins with an overview of matchgate quantum computation, followed by a brief introduction to the theory of Majorana fermions that highlights their connection to matchgate circuits (via fermionic linear optics).
\Cref{sec:matchgate_hierarchy} introduces the matchgate hierarchy and establishes some of its basic properties.
\Cref{sec:teleportation_protocol} presents the gate teleportation protocol that deterministically implements any $n$-qubit gate in the hierarchy using magic states.
\Cref{sec:2qubit_characterisation} offers a characterisation of the matchgate hierarchy in the case of two-qubit gates.
\Cref{sec:fermionic_stone_von_neumman_theorem} establishes a hierarchy-aware version of a fermionic Stone--von Neumann theorem. 
Finally, \Cref{sec:outlook} concludes with some suggested directions for future research.

\section{Matchgate quantum computation}\label{sec:matchgate_computation}

\subsection{Matchgate circuits}

A \emph{matchgate} is a
two-qubit unitary gate of the form
\begin{equation}\label{eq:G_AB}
    G(A, B) \defeq \begin{pmatrix}
A_{11} & 0 & 0 & A_{12} \\
0 & B_{11} & B_{12} & 0 \\
0 & B_{21} & B_{22} & 0 \\
A_{21} & 0 & 0 & A_{22} 
\end{pmatrix} \in \U(4)
\qquad \text{ for $A, B \in \U(2)$}\quad\text{with
$| A | = | B |$.}
\end{equation}
A \emph{matchgate circuit} is a quantum circuit comprised of matchgates acting only on \emph{nearest-neighbour} qubits, \ie on qubits in consecutive lines.
The $n$-qubit unitaries implemented by matchgate circuits form a subgroup of $\mathcal{U}(2^n)$, which we later denote $\Mn{2}$.
It is the subgroup generated by the unitaries of the form 
\[
G(A,B)_{[k,k+1]} \; \defeq \; \one^{\otimes (k-1)} \otimes G(A,B) \,\otimes\, \one^{\otimes (n-k-1)} ,
\]
with $G(A,B)$ a matchgate.
Note that, unlike the Clifford group, this subgroup is not finite up to phase.

The action of a gate of the form $G(A, B)$ -- thus, in particular, of a matchgate -- amounts to the unitary $A \in \U(2)$ acting on the `even-parity' subspace, spanned by computational basis states $\{\ket{00},\ket{11}\}$, and the unitary $B \in \U(2)$ on the `odd-parity' subspace, spanned by $\{\ket{01}, \ket{10}\}$.
As such, it is a parity-preserving operator.

More generally, one can decompose the Hilbert space of $n$ qubits into a direct sum of the even- and odd-parity subspaces, spanned by computational basis states corresponding to bit strings of even and odd parity, respectively.
These are the two eigenspaces of the parity operator $Z^{\otimes n}$, associated with eigenvalues $+1$ and $-1$, respectively.
An $n$-qubit operator (a $2^n \times 2^n$ complex matrix) $M$ is said to be:
\begin{itemize}
    \item \emph{parity-preserving}, or \emph{even}, if it preserves the parity subspaces, that is,
if $[M,Z^{\otimes n}]=0$, or equivalently $Z^{\otimes n}MZ^{\otimes n} = M$;
    \item \emph{parity-reversing}, or \emph{odd}, if it exchanges the parity subspaces, that is, if $\{M,Z^{\otimes n}\}=0$, or equivalently $Z^{\otimes n}MZ^{\otimes n} = -M$.
\end{itemize}
We write $\Even{n}$ and $\Odd{n}$ for the subspaces of $n$-qubit even and odd operators, respectively.
In the context of matchgate computation, for reasons that will become more apparent in the next section, an operator that is either even or odd is known as a \emph{fermionic} operator.
Both the class of parity-preserving operators and that of fermionic operators are closed under composition and tensor products; that is, $M,M' \in \Even{m}$ and $N \in \Even{n}$ implies $M M' \in \Even{m}$ and $M \otimes N \in \Even{m+n}$ and similarly for fermionic operators.

All matchgate circuits are thus parity-preserving (even), since each matchgate is.
It turns out to be convenient to expand this class of circuits to encompass also their parity-reversing (odd) analogues.
To that end, one may \eg
extend the gate set by including $X_1$, the Pauli $X$ gate acting on the first qubit,\footnote{Instead of $X_1$, one may add (i) any other $X_k$, as in Ref.~\cite{Helsen_2022}, or (ii) any other single Majorana operator $c_\mu$, given in \Cref{eq:JW} ahead, or indeed (iii) any nearest-neighbour `odd matchgate', a two-qubit gate of the form $J(A,B)$, as per \Cref{eq:J_AB} ahead, with $|A|=|B|$ (cf.~\Cref{prop:2qubit_hierarchy_characterisation}).} 
resulting in a class of circuits known as \emph{generalised matchgate circuits}, later denoted by $\GMn{2}$. 
These form a larger subgroup of $\mathcal{U}(2^n)$, which contains only fermionic operators.
A strong motivation for this generalisation is that allowing for the possibility of adjoining ancillary qubits
enlarges the class of unitaries that is implementable by matchgate circuits to that of generalised matchgate circuits.\footnotemark

\footnotetext{\label{fn:ancilla}
 The parity-flipping of the first qubit, $X_1$, enough to generate all odd operations, can be implemented by adding an ancillary mode $0$, applying
the unitary $Z_0Z_1 = Z \otimes Z \otimes \one^{(n-1)}$, and then discarding the unentangled ancilla.
The unitary $Z_0Z_1$ is not only even but it is moreover implementable by a matchgate circuit, since $Z \otimes Z = G(\mathbb{1},-\mathbb{1})$.
 See \cite[Section C]{Spee_2018} for more details using the language of fermions.
 }

(Generalised) matchgate circuits with computational basis state inputs and single-qubit computational basis measurements
were shown to be efficiently simulable by a classical computer \cite{valiant2001matchgates, jozsa2008matchgates, terhal2002classical}.
Surprisingly, this class of circuits gets promoted to quantum universality simply through the ability of swapping two neighbouring qubits, a seemingly trivial resource \cite{jozsa2008matchgates}. 
The added gate, $\SWAP = G(\one,X)$, despite being even, is not a matchgate as it fails the determinant condition.\footnotemark\ 
In fact, adding any other even non-matchgate unitary to the gate set would do just as well \cite{Brod_2011}.
This is analogous to classically simulable Clifford computations being lifted to quantum universality by including any non-Clifford gate, such as the $T$ gate.

\footnotetext{\label{fn:fswap}As it happens, there is a matchgate that almost swaps two qubits, the \emph{fermionic swap} $\fSWAP = G(Z,X)$. 
It doesn't quite swap them as it picks up a $-1$ phase for the component over the computational basis state $\ket{11}$.
In the language of fermions to be introduced below: if both modes being swapped are occupied, swapping them picks up a phase of $-1$, as per the anticommutation relations in \Cref{eq:CARgroup}.}

As in the Clifford setting, a way to promote matchgate circuits to quantum universality is via access to a so-called `magic' state, used in a quantum gate teleportation protocol \cite{bravyi2006universal,hebenstreit_2019}.
Magic states for matchgate computations have been completely characterised in Ref.~\cite[Theorem 2]{hebenstreit_2019}.
But the corresponding protocol for gate teleportation is probabilistic and may require multiple rounds to succeed, consuming multiple copies of the resource magic state; see~\cite[Lemma 4]{hebenstreit_2019}.
Here, we focus on those magic states that can be used to implement gates \emph{deterministically}, in that the implementation always works in a single round \emph{without the need for postselection}.

\subsection{Majorana fermions and fermionic linear optics} \label{sec:majorana_fermions}

An interesting aspect of matchgate circuits is that they correspond to quantum evolutions of non-interacting fermions~\cite{terhal2002classical}. The correspondence is given explicitly through the Jordan--Wigner (JW) transformation~\cite{jordanwigner1928}, which we now summarise.
The ensuing perspective not only justifies the physical relevance of matchgate circuits,
but also comes in tandem with a convenient algebraic language and useful intuitions.

Consider a system of $n$ fermionic modes with each mode $k$ associated with annihilation and creation operators $a_k$ and $a_k^{\dagger}$, respectively.
For a neater, more symmetric presentation,
it is useful to define a set of $2n$ Hermitian unitary operators $\{c_\mu\}_{\mu=1}^{2n}$, known as the \emph{Majorana operators}, given by \[c_{2k-1} \defeq a_k + a^{\dagger}_k  \qquad \text{and} \qquad   c_{2k} \defeq -i(a_k - a^{\dagger}_k).\]
These operators satisfy the canonical anticommutation relations (CAR): for $\mu, \nu \in \{1, \ldots, 2n\}$,
\begin{equation}\label{eq:CAR}
    \{c_{\mu}, c_{\nu}\} = 2 \delta_{\mu\nu} \one \qquad \text{where} \qquad  \{c_{\mu}, c_{\nu}\} \defeq c_{\mu} c_{\nu} + c_{\nu} c_{\mu},
\end{equation}
or equivalently, 
\begin{equation}\label{eq:CARgroup}
c_\mu^2 = \one \qquad \text{and} \qquad c_\mu c_\nu = -c_\nu c_\mu \quad \text{for $\mu \neq \nu$}.
\end{equation}

\Cref{eq:CARgroup} may be regarded as a presentation of a group by generators and relations, where besides the generators $c_\mu$ one implicitly considers an additional generator, $-1$, central and of order $2$.
A general element of this \emph{Majorana group} has the form $\pm c_{\mu_1}\cdots c_{\mu_\lmu}$ with $\mu_1 < \cdots < \mu_\lmu$.
\Cref{eq:CAR} or \Cref{eq:CARgroup} may also be thought of as a presentation of an algebra over the complex field;
its elements are polynomials on the Majorana operators subject to the anticommutation relations, having the generic form
 \[
 \sum_{\lmu=0}^{2n} \, \sum_{\mu_1 < \cdots < \mu_\lmu} \, \alpha_{\mu_1, \dots, \mu_\lmu} c_{\mu_1} \cdots c_{\mu_\lmu}.
\]
 
The Jordan--Wigner transformation \cite{jordanwigner1928,nielsen_2005} translates between fermionic systems and one-dimensional chains of spin-$\sfrac{1}{2}$ particles, also known as qubits, mapping back and forth between $n$-qubit Pauli operators and $n$-mode fermionic operators.
One of its original applications was to solve (\ie calculate the energy spectrum of) certain interacting spin-chain Hamiltonians
by reducing to an equivalent, easy-to-solve non-interacting fermionic Hamiltonian.
But one might also think of the transformation going, in a sense, in the opposite direction:
as a way to simulate fermionic systems in a quantum computer.
The transformation represents $n$-mode fermionic systems in the space of $n$ qubits.
Recall that the Pauli exclusion principle \cite{pauli_1925}, which can be derived from the anticommutation relations, states that no two indistinguishable fermions can occupy the same mode.
Therefore, specifying a state in the Fock basis, or occupancy number basis, requires a single bit of information about each mode, indicating whether it is occupied.
The state of an $n$-mode fermionic system can thus be represented by an $n$-qubit state,
with the computational basis state
\[\ket{\vec{z}} \;=\; \ket{z_1, \ldots, z_n}
\qquad \text{for  $\vec{z} \in \ZZ_2^n$}
\]
representing the Fock state 
\[
\ket{\Psi_{\vec{z}}} \;=\; (a^{\dagger}_1)^{z_1} \cdots (a^{\dagger}_n)^{z_n} \ket{\vec{0}},
\]
where $\ket{\vec{0}}$ denotes the vacuum state.

One may think of the JW transformation as a unitary representation of the Majorana group (or algebra) introduced above.
The Majorana operators $\{c_\mu\}_{\mu=1}^{2n}$ are represented as $n$-qubit unitaries, \ie in $\U(2^n)$,
as follows:
for $k = 1, \ldots, n$, 
\begin{equation}\label{eq:JW}
    c_{2k-1} \;=\; \left(\textstyle\prod\nolimits_{i=1}^{k-1}Z_i\right)X_k, \qquad\qquad c_{2k} \;=\; \left(\textstyle\prod\nolimits_{i=1}^{k-1}Z_i\right)Y_k . 
\end{equation}

The monomials of the form
$c_{\mu_1} \cdots c_{\mu_\lmu}$ with $\mu_1 < \cdots < \mu_\lmu$ are linearly independent
and form a basis of the whole matrix space $M_{2^n}(\CC)$ of $n$-qubit matrices \cite[Theorem 2]{lee1947clifford}.

This language provides an alternative perspective on even and odd operators. An operator is even (resp.\ odd) if it can be written as the linear combination of even (resp.\ odd) degree monomials of Majorana operators.
That is, $\Even{n}$ (resp.\ $\Odd{n}$) is spanned by the monomials of the form above with $\lmu$ even (resp.\ odd).
Even and odd unitaries are significant from a physical perspective, too. 
In various situations of interest, physical states are constrained by a \emph{parity superselection rule},
which forbids coherent superpositions between even and odd particle number parity states \cite{wigner_1952,Vidal_2021};
in other words, physical states are constrained to be an eigenstate of the parity operator $Z^{\otimes n}$, also known as a \emph{fermionic state}.
Parity is then a classically conserved quantity: under any physically realisable evolution the state is confined to remain in the same eigenspace, so that the corresponding unitary operators ought to be parity-preserving (a.k.a.\ even).
However, allowing for the possibility of adjoining ancillary modes enlarges the physically implementable evolutions to also include the parity-reversing unitaries; see \cref{fn:ancilla}.
In any case, the physically allowed operations are parity-defined.
Accordingly, as mentioned in the previous section, unitaries that are either parity-preserving (even) or parity-reversing (odd) are referred to as \emph{fermionic} \cite{Spee_2018, Vidal_2021}.

A particularly simple type of evolution of a fermionic system is that of non-interacting (``free'') fermions \cite{terhal2002classical}, so called because it can be reduced to a single-particle description.
Such evolutions are governed by a quadratic Hamiltonian 
$H = i \sum_{\mu, \nu=1}^{2n} h_{\mu\nu} c_{\mu}c_{\nu}$ 
where $h = (h_{\mu\nu})$ is a $2n \times 2n$ real anti-symmetric matrix.
An $n$-qubit unitary $U$ is called  \emph{Gaussian} (or fermionic linear optical) if it has the form 
$U = e^{i H}$ for such a quadratic Hamiltonian $H$.
It turns out that Gaussian unitaries are precisely those that can be implemented by matchgate circuits \cite{jozsa2008matchgates}.

A key fact -- and in some sense the justification for the term `linear' -- is that any Gaussian unitary $U$ acts linearly on single Majorana operators by conjugation, that is,
\begin{equation}\label{eq: conjugation_by_matchgate}
   U c_\mu U^{\dagger} = \sum_{\nu=1}^{2n} R_{\mu \nu} c_\nu,    
\end{equation}
for some $R \in \SO(2n)$, a real special orthogonal matrix.
Conversely, it can be shown that for any special orthogonal matrix $R \in \SO(2n)$, there exists an anti-symmetric matrix $h$ such that $R = e^h$.
This determines a quadratic Hamiltonian $H$ and therefore a Gaussian unitary $U = e^{i H}$ that applies the rotation $R$ to the Majorana operators via conjugation, \ie that satisfies \Cref{eq: conjugation_by_matchgate}.
In a nutshell, Gaussian operations over $n$ fermionic modes, or equivalently unitaries implemented by $n$-qubit matchgate circuits,
up to a global phase,
are in one-to-one correspondence with $\SO(2n)$, the $2n\times 2n$ special orthogonal matrices.
Note the compactness of this encoding: an $n$-qubit unitary is a $2^n \times 2^n$ complex matrix!
This underlies the efficient simulability of Gaussian operations or matchgate circuits.

The extension to generalised matchgate circuits, by \eg\ adding to the gate set a Majorana operator $c_\mu$,
corresponds at the level of the conjugation in \Cref{eq: conjugation_by_matchgate} to adding reflections,
real orthogonal matrices with determinant $-1$.\footnotemark
This extends the one-to-one mapping to one between unitaries implemented by generalised matchgate circuits and orthogonal matrices in $\Orth(2n)$.
For a more comprehensive explanation of this mapping, see Ref.~\cite[Section 2.2]{Dias_2024}.

\footnotetext{The unitary $c_\lambda$ corresponds the reflection about the $\lambda$-th coordinate axis of $\mathbb{R}^{2n}$, since $c_\lambda c_\mu c_\lambda = (-1)^{\delta_{\lambda \mu}}c_\mu$, yielding the matrix $R = \diag(1, \ldots, 1,-1,1,\ldots,1)$ with the $-1$ on the $\lambda$-th position.}

The terminology extends also to states. 
An $n$-qubit state is called \emph{Gaussian} if it arises as the action of a Gaussian operation on a Fock state,
or equivalently, of a matchgate circuit on a computational basis state.

Finally, alternative equivalent conditions \cite{bravyi2004lagrangian} for
a fermionic unitary or state to be Gaussian are given in terms of the operator
\[
\Lambda_n = \sum_{k=1}^{2n}c_k \otimes c_k.
\]
A fermionic operator $U$ is (generalised) Gaussian if and only if $[\Lambda_n, U^{\otimes 2}] = 0$, while
a fermionic state $\ket{\psi}$ is Gaussian if and only if $\Lambda_n \ket{\psi}^{\otimes 2} = 0$.\footnotemark

\footnotetext{
This suggests a more general definition of Gaussianity for a $2^n \times 2^m$ matrix $M$, which maps the $m$-qubit space to the $n$-qubit space, as $\Lambda_n  M^{\otimes 2} = M^{\otimes 2} \Lambda_m$; cf. Equation (5) in Ref.~\cite{generalized_matchgates}.}

\section{Matchgate hierarchy} \label{sec:matchgate_hierarchy}

\subsection{Definition}
Recall the definition of the \emph{Clifford hierarchy}, introduced by Gottesman and Chuang~\cite{Gottesman_1999} to analyse resources in the gate teleportation model of quantum computation.
For a fixed number $n$ of qubits, it consists in an increasing sequence of sets of $n$-qubit gates $\{\Cn{k}\}_{k\in\NN}$.
The first level of the hierarchy is the $n$-qubit Pauli group and the second level is the Clifford group. The latter is defined as the normalizer of the Pauli group in the unitary group $\U(2^n)$, \ie the unitaries that conjugate Paulis to Paulis.
Subsequent levels of the hierarchy are defined recursively: for $k \geq 1$, $\Cn{k+1}$ is the set of unitaries that map the Paulis into $\Cn{k}$ by conjugation.

The  (generalised) matchgate hierarchy is defined analogously.
\begin{definition}\label{def:matchgate_hierarchy}
    Fix any $n>1$, the number of qubits (or fermionic modes).
    The $n$-qubit \emph{generalised matchgate hierarchy} is the sequence of $n$-qubit gates $\{\GMn{k}\}_{k\in\NN}$ defined as follows:
    \begin{itemize}
    \item the first level is the set of unit-norm, real linear combinations of Majorana operators,
    \[
    \GMn{1} \;\defeq\; \Setdef{\textstyle\sum\nolimits_{\mu=1}^{2n} a_\mu c_\mu}{\veca \in \RR^{2n}, \|\veca\| = 1} ;
    \]
    \item higher levels are defined recursively: for any $k \geq 1$,
    \[
    \GMn{k+1} \;\defeq\; \Setdef{U \in \U(2^n)}{\forall{\mu  \in \{1, \ldots, 2n\}.}\; U c_{\mu} U^\dagger \in \GMn{k}  \cap \Odd{n}}.
    \]
    \end{itemize}
\end{definition}

A couple of comments regarding this definition are in order.
First, in defining the first level,
normalisation of the vector $\veca$ is there to guarantee unitarity: note that real linear combinations of Majoranas are self-adjoint and their square satisfies
\begin{equation} \label{eq:squarelinci}
\left(\sum_\mu a_\mu c_\mu\right)^2
\;=\;
\sum_{\mu}a_\mu^2 \overbrace{c_\mu^2}^{\one} + \sum_{\lambda < \nu} a_\lambda a_\nu \overbrace{(c_\lambda c_\nu + c_\nu c_\lambda)}^{0}
\;=\;
\|a\| \one .
\end{equation}
Hence, we could equivalently define
$\GMn{1} = \Setdef{U \in \U(2^n)}{U = \textstyle\sum\nolimits_{\mu=1}^{2n} a_\mu c_\mu \text{ for }\veca \in \RR^{2n}}$.

Secondly, in the recursive clause, the restriction that the Majorana operators $c_\mu$ be conjugated to odd operators, \ie that $U c_\mu U^\dagger \in \Odd{n}$, might strike one as odd.
The purpose of this requirement, as \Cref{lemma:unitary_parity_preserving_conjugation_implies_evenorodd} below elucidates, is to force all the gates in the hierarchy to be fermionic.
Thus, equivalently, one could have set
\[\GMn{k+1} \;\defeq\; \Setdef{U \in \Even{n} \cup \Odd{n}}{\forall\mu.\; U c_{\mu} U^\dagger \in \GMn{k}} .\]
In earlier versions of this work \cite{matchgatehierarchy_qpl2023},
we introduced the matchgate hierarchy without imposing this restriction.
That also matches the definition adopted by Cudby and Strelchuk~\cite{matchgate_hierarchy_sergii},
of which we became aware while preparing the present version of this text.
Subsequent results required us to add the further caveat of restricting to the \emph{fermionic} unitaries in the hierarchy.

Two reasons motivate such restriction to fermionic operators.
The first is foundational: as mentioned earlier, the valid physical operations are parity constrained and correspond to fermionic operators.
The second reason is more practical and directly related to the gate teleportation protocol that motivates this hierarchy, presented in \Cref{sec:teleportation_protocol} ahead.
The magic states used as a resource to implement non-Gaussian gates must be swapped into the appropriate place in the circuit where the gate is to be applied. Such swapping must be accomplished via the use of `free' gates, \ie matchgate circuits, only -- which crucially do not include the $\SWAP$ gate. 
By \cite[Lemma 1]{hebenstreit_2019}, this is doable if and only if the state is fermionic.
Accordingly, the gates that can be implemented by the gate teleportation protocol using a fermionic resource state must themselves be fermionic; see \Cref{rem:using_matchgate_magic} ahead for more details.

So, the following lemma characterises fermionic gates as precisely those that preserve matrix parity under conjugation, justifying the requirement that the Majorana operators, which are odd, be conjugated to odd operators.

\begin{lemma}\label{lemma:unitary_parity_preserving_conjugation_implies_evenorodd}
Let $U \in \mathcal{U}(2^n)$ be an $n$-qubit unitary.
If conjugation by $U$ preserves the parity of operators,
in that it maps even operators to even operators and odd operators to odd operators, i.e.\
\[M \in \Even{n} \implies U M U^\dagger \in \Even{n} \qquad \text{ and } \qquad M \in \Odd{n} \implies U M U^\dagger \in \Odd{n},\]
then $U$ is fermionic (\ie either even or odd).
\end{lemma}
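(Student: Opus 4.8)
The plan is to reduce the statement to a single operator identity involving the parity operator $P \defeq Z^{\otimes n}$. Recall that $P$ is Hermitian and unitary with $P^2 = \one$, and that it detects parity through conjugation: by the criteria recalled above, $M \in \Even{n}$ exactly when $PMP = M$, and $M \in \Odd{n}$ exactly when $PMP = -M$. Thus $\Even{n}$ and $\Odd{n}$ are the $+1$- and $-1$-eigenspaces of the involutive linear map $\mathrm{Ad}_P \colon M \mapsto PMP$ on the matrix algebra $M_{2^n}(\CC)$, and since the even- and odd-degree Majorana monomials form a basis, these two eigenspaces together span all of $M_{2^n}(\CC)$.

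First I would restate the hypothesis as a commutation of superoperators. Conjugation by $U$, written $\mathrm{Ad}_U \colon M \mapsto U M U^\dagger$, is assumed to map $\Even{n}$ into $\Even{n}$ and $\Odd{n}$ into $\Odd{n}$, \ie it preserves both eigenspaces of $\mathrm{Ad}_P$. Because $\mathrm{Ad}_P$ acts as a fixed scalar ($+1$ or $-1$) on each eigenspace, any linear map preserving both of them commutes with $\mathrm{Ad}_P$; hence $\mathrm{Ad}_U \circ \mathrm{Ad}_P = \mathrm{Ad}_P \circ \mathrm{Ad}_U$, that is $\mathrm{Ad}_{UP} = \mathrm{Ad}_{PU}$. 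Now $\mathrm{Ad}_A = \mathrm{Ad}_B$ holds precisely when $B^{-1}A$ lies in the centre of $M_{2^n}(\CC)$, which consists only of scalars; applied with $A = UP$ and $B = PU$ this yields $U^\dagger P U P = \mu \one$ for some $\mu \in \CC$, with $|\mu| = 1$ by unitarity. Rearranging gives $PUP = \mu U$, and applying $\mathrm{Ad}_P$ once more (using $P^2 = \one$) gives $U = P(PUP)P = \mu^2 U$, so $\mu^2 = 1$ and $\mu = \pm 1$. The case $\mu = +1$ reads $PUP = U$, so $U$ is even; the case $\mu = -1$ reads $PUP = -U$, so $U$ is odd. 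In either case $U$ is fermionic.

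The one step that deserves care is the implication ``preserves both eigenspaces $\Rightarrow$ commutes with $\mathrm{Ad}_P$''; the rest is bookkeeping with $P^2 = \one$ and the fact that the centre of a full matrix algebra is scalar. If one prefers to avoid superoperators, the same identity can be obtained directly: set $V \defeq U^\dagger PUP$ and feed the hypothesis into the even case ($PMP = M$) and the odd case ($PMP = -M$) separately. In both, the parity-preservation of $UMU^\dagger$ simplifies, via $(PUP)(PMP)(PU^\dagger P) = PUMU^\dagger P$, to the single condition $VMV^\dagger = M$, \ie $[V,M]=0$. As $M$ ranges over all even and all odd operators, which span $M_{2^n}(\CC)$, this forces $V$ to be central, hence scalar, recovering $PUP = \mu U$ and the dichotomy above.
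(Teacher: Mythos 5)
Your proof is correct and takes essentially the same route as the paper's: your commutation of $\mathrm{Ad}_U$ with $\mathrm{Ad}_P$ is exactly the paper's identity $UMU^\dagger = (Z^{\otimes n} U Z^{\otimes n}) M (Z^{\otimes n} U Z^{\otimes n})^\dagger$, obtained there via the even/odd decomposition $M = M_E + M_O$, after which both arguments use that equal inner automorphisms force equality up to a scalar (the centre of $M_{2^n}(\CC)$) and then conjugate by $Z^{\otimes n}$ once more to pin the scalar to $\pm 1$. The superoperator phrasing and the concluding variant with $V = U^\dagger Z^{\otimes n} U Z^{\otimes n}$ central are cosmetic repackagings of the same argument, so no further comparison is needed.
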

\begin{proof}
An arbitrary $n$-qubit operator $M$ can be decomposed as a sum of its even and odd parts:
$M = M_E + M_O$
where
\[
    M_E = \frac{M + Z^{\otimes n} M Z^{\otimes n}}{2} \qquad \text{ and } \qquad M_O = \frac{M - Z^{\otimes n} M Z^{\otimes n}}{2}.
\]
Conjugation of such an operator $M$ by $U$ yields
\begin{align*}
    U M U^\dagger &= U\Bigg(\frac{M + Z^{\otimes n} M Z^{\otimes n}}{2} + \frac{M - Z^{\otimes n} M Z^{\otimes n}}{2}\Bigg)U^{\dagger} \\
    &= U\Bigg(\frac{M + Z^{\otimes n} M Z^{\otimes n}}{2}\Bigg)U^{\dagger} + U\Bigg(\frac{M - Z^{\otimes n} M Z^{\otimes n}}{2}\Bigg)U^{\dagger} \\
    &= Z^{\otimes n} U\Bigg(\frac{M + Z^{\otimes n} M Z^{\otimes n}}{2}\Bigg)U^{\dagger} Z^{\otimes n} - Z^{\otimes n}U\Bigg(\frac{M - Z^{\otimes n} M Z^{\otimes n}}{2}\Bigg)U^\dagger{2}Z^{\otimes n} \\
    &= Z^{\otimes n}  U (Z^{\otimes n} M Z^{\otimes n}) U^{\dagger} Z^{\otimes n} \\
    &= (Z^{\otimes n}  U Z^{\otimes n}) M (Z^{\otimes n} U^{\dagger} Z^{\otimes n}) \\
    &= (Z^{\otimes n}  U Z^{\otimes n}) M (Z^{\otimes n} U Z^{\otimes n})^\dagger,
\end{align*}
where for the third equality we used the fact that conjugation by $U$ preserves parity.

This shows that $U$ and $Z^{\otimes n}  U Z^{\otimes n}$ determine the same inner automorphism (by conjugation), implying that they are equal up to a phase, that is, $Z^{\otimes n} U Z^{\otimes n} = e^{i \phi} U$ for some $\phi \in [0,2\pi)$.
Conjugating by $Z^{\otimes n}$ one obtains
\[
U = (Z^{\otimes n}Z^{\otimes n}) U (Z^{\otimes n}Z^{\otimes n}) = Z^{\otimes n}(Z^{\otimes n} U Z^{\otimes n})Z^{\otimes n} =  e^{i \phi} Z^{\otimes n}UZ^{\otimes n} = e^{2i\phi} U.
\]
Hence $2 \phi = 0 \pmod{2\pi}$,
i.e.\ $e^{i \phi} = \pm 1$ and so
$Z^{\otimes n} U Z^{\otimes n} = \pm U$.
\end{proof}

The upshot, as anticipated above, is that all gates in the matchgate hierarchy are fermionic.

\begin{corollary}
For all $n, k \geq 1$, $\GMq{n}{k} \subset \Even{n} \cup \Odd{n}$.
\end{corollary}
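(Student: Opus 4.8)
The plan is to argue by induction on the level $k$, invoking \Cref{lemma:unitary_parity_preserving_conjugation_implies_evenorodd} at the inductive step. For the base case $k=1$, every element of $\GMn{1}$ is a real linear combination of the Majorana operators $c_\mu$. Since each $c_\mu$ is a degree-one monomial and hence odd, and $\Odd{n}$ is a linear subspace, any such combination lies in $\Odd{n} \subseteq \Even{n} \cup \Odd{n}$, so the base case holds trivially.

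For the inductive step, I would assume $\GMn{k} \subseteq \Even{n} \cup \Odd{n}$ and take an arbitrary $U \in \GMn{k+1}$. By \Cref{def:matchgate_hierarchy}, we have $U c_\mu U^\dagger \in \GMn{k} \cap \Odd{n} \subseteq \Odd{n}$ for every $\mu$; that is, conjugation by $U$ sends each single Majorana to an odd operator. The crux is to bootstrap this from single Majoranas to arbitrary operators, establishing that conjugation by $U$ preserves matrix parity in the sense required by the lemma. Here I would use the fact, already recorded in the excerpt, that the monomials $c_{\mu_1} \cdots c_{\mu_m}$ with $\mu_1 < \cdots < \mu_m$ form a basis of $M_{2^n}(\CC)$, with the even-degree monomials spanning $\Even{n}$ and the odd-degree ones spanning $\Odd{n}$. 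For such a monomial, inserting $U^\dagger U = \one$ between factors gives $U (c_{\mu_1} \cdots c_{\mu_m}) U^\dagger = (U c_{\mu_1} U^\dagger) \cdots (U c_{\mu_m} U^\dagger)$, a product of $m$ odd operators. Since parity is multiplicative (a product of two odd operators is even, and so on), this product has parity $m \bmod 2$, the same parity as the original monomial. By linearity, conjugation by $U$ therefore maps $\Even{n}$ into $\Even{n}$ and $\Odd{n}$ into $\Odd{n}$.

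With parity-preservation in hand, \Cref{lemma:unitary_parity_preserving_conjugation_implies_evenorodd} applies directly and yields that $U$ is fermionic, i.e.\ $U \in \Even{n} \cup \Odd{n}$, completing the induction. I do not expect a genuine obstacle: the only mildly delicate point is the bootstrapping step, but it reduces to the elementary multiplicativity of the $\ZZ_2$-grading on operators together with the Majorana-monomial basis, both of which are available from the preceding discussion. The restriction built into the recursive clause of \Cref{def:matchgate_hierarchy} — that each $c_\mu$ be conjugated specifically into $\Odd{n}$, rather than merely into $\GMn{k}$ — is precisely what makes this argument go through, since it supplies the odd-to-odd hypothesis needed to feed the lemma.
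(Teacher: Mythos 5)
Your proof is correct and takes essentially the same route as the paper: the base case holds because the Majoranas are odd and $\Odd{n}$ is a linear subspace, and higher levels follow by feeding parity-preservation of conjugation into \Cref{lemma:unitary_parity_preserving_conjugation_implies_evenorodd}, with your monomial-basis bootstrapping merely making explicit a step the paper leaves implicit. One simplification: your induction hypothesis is never actually used (the recursive clause of \Cref{def:matchgate_hierarchy} directly gives $U c_\mu U^\dagger \in \Odd{n}$ for any $U \in \GMn{k+1}$), so the argument works uniformly for all $k \geq 2$ and the induction wrapper can be dropped.
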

\begin{proof}
The Majorana operators are odd:
this can be seen from \Cref{eq:JW} and the commutation relations $ZX  = -XZ$ and $ZY = - YZ$.
The odd gates $\Odd{n}$ form a linear subspace, hence $\GMq{n}{1} \subset \Odd{n}$.
The result follows
from
\Cref{lemma:unitary_parity_preserving_conjugation_implies_evenorodd}.
\end{proof}

We write  $\Mn{k} \defeq \GMn{k} \cap \Even{n}$ for the $k$-level even gates,
The set $\Mn{2}$ comprises those unitaries realisable by matchgate circuits,
while $\GMn{2}$ corresponds to generalised matchgate circuits.

\subsection{Examples}

The $\SWAP$ gate, which acts as $\ket{x,y} \mapsto \ket{y,x}$ and can be written as $G(\one,X)$, is an example of an even two-qubit third-level gate, \ie a gate in $\Mq{2}{3}$, since
\begin{align*}
\SWAP \, c_1 \, \SWAP &= -i c_1 c_2 c_3 = J(\one,\one) 
&
\SWAP \, c_3 \, \SWAP &= -i c_1 c_3 c_4 = i J(Y, -Y) 
\\
\SWAP \, c_2 \, \SWAP &= -i c_1 c_2 c_4 = i J(-Z,Z) 
&
\SWAP \, c_4 \, \SWAP &= -i c_2 c_3 c_4 = i J(-X, X),
\end{align*} 
all of which are odd matchgates: note that $|-A|=|A|$ for any one-qubit matrix $A \in \CC^{2\times 2}$.
Another example is the controlled-$Z$ gate, $CZ = G(Z,\one) = \diag(1,1,1,-1)$, which acts as $\ket{x,y} \mapsto (-1)^{xy}\ket{x,y}$ on computational basis states.
More generally, the controlled-phase gate
$C_{2\pi / 2^{k-2}} \defeq \diag(1,1,1,e^{\frac{2 \pi i}{2^{k-2}}})$
which acts as $\ket{x,y} \mapsto e^{\frac{2 \pi i}{2^{k-2}}} \ket{x,y}$ is a two-qubit $k$-level gate --
and, in a sense made precise by \Cref{cor:2qubit_evenhierarchy_MGequi}, a paradigmatic example in $\Mq{2}{k}$,

As for three-qubit gates, here are two examples of gates in the third level, $\Mq{3}{3}$:
\[
\begin{pmatrix}
1 & 0 & 0 & 0 & 0 & 0 & 0 & 0 \\
0 & 1 & 0 & 0 & 0 & 0 & 0 & 0 \\
0 & 0 & 1 & 0 & 0 & 0 & 0 & 0 \\
0 & 0 & 0 & 1 & 0 & 0 & 0 & 0 \\
0 & 0 & 0 & 0 & 1 & 0 & 0 & 0 \\
0 & 0 & 0 & 0 & 0 & -1 & 0 & 0 \\
0 & 0 & 0 & 0 & 0 & 0 & 1 & 0 \\
0 & 0 & 0 & 0 & 0 & 0 & 0 & -1 \\
\end{pmatrix}
\qquad\qquad\qquad
\begin{pmatrix}
1 & 0 & 0 & 0 & 0 & 0 & 0 & 0 \\
0 & 0 & 0 & 0 & 1 & 0 & 0 & 0 \\
0 & 0 & 1 & 0 & 0 & 0 & 0 & 0 \\
0 & 0 & 0 & 0 & 0 & 0 & 1 & 0 \\
0 & 1 & 0 & 0 & 0 & 0 & 0 & 0 \\
0 & 0 & 0 & 0 & 0 & -1 & 0 & 0 \\
0 & 0 & 0 & 1 & 0 & 0 & 0 & 0 \\
0 & 0 & 0 & 0 & 0 & 0 & 0 & -1 \\
\end{pmatrix}
\]
On the left is a diagonal gate that picks up a phase when both the first and third modes/qubits are occupied,
acting as $\ket{x, y, z} \mapsto (-1)^{xz}\ket{x,y,z}$.
The matrix on the right corresponds to the fermionic swap (\cref{fn:fswap}) applied to the first and third qubits, $\fSWAP_{[1,3]}$,
which acts as $\ket{x,y,z} \mapsto (-1)^{xz}\ket{z,y,x}$.
Note that, even though $\fSWAP$ is a matchgate, $\fSWAP_{[1,3]}$ is not a valid matchgate circuit because the gate is not being applied to neighbouring qubits.
The assertions can be routinely verified by calculating, for each of the two matrices $U$,
the commutators $[\Lambda_3, (U c_\mu U^\dagger)^{\otimes 2}]$ for $\mu=1, \ldots, 6$ and checking that these vanish.

Another interesting family of examples at every level of the hierarchy is given by the multi-qubit controlled $Z$ gates: $CZ$, $CCZ$, $CCCZ$, \ldots.
The $n$-qubit $C^{n-1}Z$ gate is the diagonal gate on the computational basis that picks a phase of $-1$ on the computational basis state $\ket{1^n} = \ket{1 \ldots 1}$ while leaving the remaining $2^{n} - 1$ computational basis states unchanged.
We show below that this gate belongs to $\GMq{n}{n+1}$, the $(n+1)$-th level of the matchgate hierarchy.
In fact, the result applies to a more general class of diagonal gates, to which we are led by attempting to prove it: in short, these gates pick up a $-1$ phase on computational basis states that match a fixed pattern on a subset of the qubits (for the $C^{n-1}Z$ gate, the pattern is the bit string $1^n$ on \emph{all} of the $n$ qubits).
Describing these gates requires a little setting up.
Consider a vector $\vy \in \{0,1,\star\}^n$, which can be seen as a \emph{partial} $n$-bit string,
\ie a bit string where some components might not have a well-defined bit value $0$ or $1$. 
We think of $\vy$ as describing a pattern for $n$-bit strings, with $\star$ being interpreted as a `wild card'.
Formally, a bit string $\vz \in \{0,1\}^n$ \emph{matches} $\vy$, written $\vy \prec \vz$,
if for all $k \in \{1, \ldots, n\}$, $\vy_k = \star$ or $\vz_k = \vy_k$.
The \emph{length} of a pattern $\vy$, written $|\vy|$, is the number of components where it has a well-defined bit value, \ie
$|\vy| \defeq |\setdef{k \in \{1,\ldots,n\} }{\vy_k \neq \star}|$.
Note that a pattern $\vy$ is matched by precisely $2^{n-|\vy|}$ bit strings.
The pattern $\vy$ defines an $n$-qubit diagonal gate $\Fy$ which adds a $-1$ phase to the computational basis states that match it. More explicitly, $\Fy$ acts on computational basis states as $\Fy \ket{z} = (-1)^{[\vy \prec \vz]} \ket{z}$ where $[\vy \prec \vz]$ equals $1$ or $0$ depending on whether $\vy \prec \vz$ holds.

\begin{proposition}\label{prop:CnZ_level}
For any pattern vector $\vy \in \{0,1,\star\}^n$ with $|\vy|>0$, $\Fy \in \Mq{n}{|\vy|+1}$.
\end{proposition}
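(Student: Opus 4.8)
The plan is to induct on the length $m \defeq |\vy|$ of the pattern. Throughout, write $S \defeq \setdef{k}{\vy_k \neq \star}$ for the set of constrained positions, so $|S| = m$. The gate $\Fy$ is diagonal in the computational basis with entries $\pm 1$; in particular it is Hermitian, involutive ($\Fy^2 = \one$), and commutes with $Z^{\otimes n}$, so it is automatically even. Thus, to place $\Fy$ in $\Mq{n}{m+1} = \GMn{m+1}\cap\Even{n}$ it suffices, by \Cref{def:matchgate_hierarchy}, to show that $\Fy c_\mu \Fy \in \GMn{m}\cap\Odd{n}$ for every $\mu \in \{1,\dots,2n\}$. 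Oddness of each $\Fy c_\mu \Fy$ is immediate since $\Fy$ is even and $c_\mu$ is odd, so the real content is membership in $\GMn{m}$.

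The key computation is the conjugation of a single Majorana by $\Fy$. Writing $c_{2k-1} = (\prod_{i<k}Z_i)X_k$ and $c_{2k} = (\prod_{i<k}Z_i)Y_k$, the diagonal $Z$-string $\prod_{i<k}Z_i$ commutes with $\Fy$, so everything reduces to $\Fy X_k \Fy$ and $\Fy Y_k \Fy$. Evaluating on a basis state $\ket{\vz}$ and flipping the $k$-th bit picks up the phase $(-1)^{[\vy\prec\vz] + [\vy\prec\vz \oplus \ek]}$, and I will verify the parity identity $[\vy\prec\vz] + [\vy\prec\vz \oplus \ek] \equiv [\vyslashk\prec\vz] \pmod 2$: if $k \notin S$ the two indicators agree and the phase is trivial, whereas if $k\in S$ exactly one of them fires, precisely when the pattern with position $k$ deleted is matched. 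This yields the clean identity
\[
\Fy c_\mu \Fy = \begin{cases} c_\mu & \text{if } k\notin S, \\ \Fyslashk\, c_\mu & \text{if } k \in S, \end{cases}
\qquad \mu \in \{2k-1,\, 2k\}.
\]
The base case $m=1$ is then immediate: here $\Fyslashk$ is the empty-pattern gate $-\one$, so $\Fy c_\mu \Fy = \pm c_\mu \in \GMn{1}\cap\Odd{n}$, giving $\Fy \in \Mq{n}{2}$ (consistently, $\Fy = \pm Z_j$ is Gaussian).

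For the inductive step ($m\geq 2$), the positions $k \notin S$ contribute $c_\mu \in \GMn{1} \subseteq \GMn{m}$ by nesting of the hierarchy, while the positions $k \in S$ contribute $\Fyslashk\,c_\mu$, where $|\vyslashk| = m-1 \geq 1$ so the induction hypothesis gives $\Fyslashk \in \Mq{n}{m} \subseteq \GMn{m}$. The crux --- and the main obstacle --- is that the levels of the hierarchy are \emph{not} closed under multiplication (as for the Clifford hierarchy), so I cannot simply assert that the product $\Fyslashk\,c_\mu$ of a level-$m$ gate and a Majorana stays at level $m$. Instead I will verify membership directly from the recursive definition: for any $\nu$,
\[
(\Fyslashk c_\mu)\, c_\nu\, (\Fyslashk c_\mu)^\dagger = \Fyslashk\,(c_\mu c_\nu c_\mu)\,\Fyslashk = \pm\,\Fyslashk\, c_\nu\, \Fyslashk,
\]
using $c_\mu^\dagger = c_\mu$ and the anticommutation relations $c_\mu c_\nu c_\mu = \pm c_\nu$, which keep the middle factor a single Majorana. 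Since $\Fyslashk \in \GMn{m}$, the definition gives $\Fyslashk c_\nu \Fyslashk \in \GMn{m-1}\cap\Odd{n}$, a set invariant under the overall sign; hence $(\Fyslashk c_\mu)c_\nu(\Fyslashk c_\mu)^\dagger \in \GMn{m-1}\cap\Odd{n}$ for all $\nu$, so $\Fyslashk c_\mu \in \GMn{m}$. Collecting both cases shows $\Fy c_\mu \Fy \in \GMn{m}\cap\Odd{n}$ for all $\mu$, whence $\Fy \in \GMn{m+1}$, and being even, $\Fy \in \Mq{n}{m+1}$.

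The only auxiliary facts invoked are the nesting $\GMn{k}\subseteq\GMn{k+1}$ and the phase-invariance of the defining conjugation (so each level is closed under multiplication by $-1$); both are elementary consequences of \Cref{def:matchgate_hierarchy}. The substantive insight enabling the induction is the reduction in the last display: multiplying a level-$m$ gate by a single Majorana only rescales each $c_\nu$ by $\pm 1$ before the level-$m$ gate acts by conjugation, which is exactly what prevents the product from climbing the hierarchy.
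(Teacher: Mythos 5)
Your proof is correct and follows essentially the same route as the paper's: the same commutation identity reducing $\Fy c_\mu \Fy$ to $\Fyslashk c_\mu$ (or $c_\mu$) via pattern deletion, and the same induction on $|\vy|$ with the base case $|\vy|=1$ landing in $\Mq{n}{2}$. The only difference is that where the paper cites its \Cref{prop:GMnk-closed-phase} and \Cref{prop:GMnk-closed-leftmultiplicationMajorana}, you re-derive the closure of $\GMn{m}$ under Majorana multiplication inline (by the same conjugation computation as the paper's \Cref{prop:GMnk-closed-multiplicationMajorana}); note also that your displayed parity identity, as a blanket claim, fails for $k \notin S$ (there the left side is even while $[\vyslashk \prec \vz] = [\vy \prec \vz]$ may be $1$), but your subsequent casework treats that case separately and correctly, so this is a phrasing slip, not a gap.
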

\begin{proof}
We first show how the $\Fy$ matrices commute past the Majorana operators:
for each $k \in \{1, \ldots, n\}$, 
\[\Fy \, c_{2k-1} \;=\; c_{2k-1} \, \Fyek \quad \text{ and } \quad \Fy \, c_{2k} \;=\; - c_{2k} \, \Fyek ,\]
where $\ek$ stands for the bit string with a single $1$ in the $k$-th position, \ie $(\ek)_j \defeq \delta_{k,j}$,
and the addition of partial bit strings is defined componentwise, with $(\vy+\ek)_j = \star$ whenever $\vy_j = \star$.
To show this, recall that $c_{2k-1} = Z_{1}\cdots Z_{k-1} X_k$ and
so it acts on a computational basis state $\ket{\vz}$ as
$c_{2k-1} \ket{\vz} = (-1)^{\vz_1 + \cdots + \vz_{k-1}}\ket{\vz+\ek}$.
Hence, for any $\vz \in \{0,1\}^n$,
\begin{align*}
\Fy \, c_{2k-1} \, \ket{\vz}
&=
 (-1)^{\vz_1 + \cdots + \vz_{k-1}} \, \Fy \, \ket{\vz+\ek}
\\
&=
(-1)^{[\vy \prec \vz +\ek ]} \, (-1)^{\vz_1 + \cdots + \vz_{k-1}} \, \ket{\vz+\ek}
\\
&=
(-1)^{[\vy \prec \vz +\ek ]} \, c_{2k-1} \, \ket{\vz}
\\
&=
(-1)^{[\vy + \ek \prec \vz ]} \, c_{2k-1} \, \ket{\vz}
\\
&= 
c_{2k-1} \, \Fyek \, \ket{\vz}
\end{align*}
where the crucial step uses the fact that $\vy \prec \vz +\ek$ if and only if $\vy + \ek \prec \vz$.
The proof for $c_{2k}$ is similar.

Secondly, we consider the product $\Fy \, \Fyek$ for a $k \in \{1, \ldots, n\}$.
We distinguish two cases.
If $\vy_k = \star$, then $\Fyek = \Fy$ and so $\Fy \, \Fyek = \Fy^2 = \one$.
Otherwise, $\Fy \, \Fyek  = \Fyslashk$ where
$\vyslashk$ is the pattern obtained from $\vy$ by `forgetting' about (\ie setting to $\star$) the $k$-th component;
formally,
\[(\vyslashk)_j = \begin{cases} \star & \text{if $j = k$} \\ \vy_j & \text{if $j \neq k$}\end{cases}.\]
This holds because each bit string $\vz$ matching $\vyslashk$ also matches exactly one of $\vy$ and $\vyek$, so that it will either `fire' $\Fy$ or $\Fyek$.
Note that $|\vyslashk| = |\vy| - 1$, and so there are double the number of bit strings matching $\vyslashk$ as there are matching $\vy$, in agreement with the previous observation.

Finally, we combine these observations to establish the result by induction on the length $|\vy|$ of the pattern.
If $|\vy| = 1$ then $\Fy = \pm Z_k =  \mp i c_{2k-1}c_{2k}$ for some $k \in \{1, \ldots, n\}$, which is a matchgate, \ie in $\Mq{n}{2}$.\footnote{For $|\vy|=0$, one has $\Fy= -\one$. According to \Cref{def:matchgate_hierarchy}, all gates in the first level are odd, so $-\one$, an even gate, is not there. But this is perhaps an indication that a more natural definition of the the hierarchy ought to include a non-empty $\Mq{n}{1}$ (which would leave the remaining levels unaffected).}
For the induction step, suppose that the result holds for patterns of length $|\vy|-1$.
Conjugation of Majorana operators by $\Fy$ yields
\[\Fy \, c_{2k-1} \, \Fy \;=\; c_{2k-1} \, \Fyek \, \Fy , \qquad  
\Fy \, c_{2k} \, \Fy \;=\; - c_{2k} \, \Fyek \, \Fy .\]
Each $\Fyek \, \Fy$ is either the identity, when $\vy_k=\star$, or it equals $\Fyslashk$.
Since $|\vyslashk| = |\vy| - 1$, by the induction hypothesis, $\Fyslashk \in \Mq{n}{|\vy|}$,
and, by \Cref{prop:GMnk-closed-phase} and \Cref{prop:GMnk-closed-leftmultiplicationMajorana} ahead,
so is the result of the conjugation of any Majorana operator.
Consequently, $\Fy \in \Mq{n}{|\vy|+1}$.
\end{proof}

\subsection{Basic properties}

We establish some basic properties of the sets of gates in the hierarchy, and relationships between them.
We first consider properties of each set of gates in the hierarchy, showing that it is closed under scaling by a phase, and under (left or right) multiplication or conjugation by Majorana operators.

\begin{proposition}\label{prop:GMnk-closed-phase}
For $n \geq 1$ and $k\geq 2$, $U \in \GMn{k}$ implies $e^{i \phi}U \in \GMn{k}$ for all $\phi \in [0,2\pi)$.
\end{proposition}
\begin{proof}
The inner automorphism determined by a unitary is unchanged by a phase.
In particular, conjugation of $c_\mu$ by $e^{i \phi}U$ yields
$(e^{i\phi}U)c_\mu(e^{i \phi}U)^\dagger = (e^{i\phi}U)c_\mu(e^{-i \phi}U^\dagger) = U c_\mu U^\dagger$.
\end{proof}

\begin{proposition}\label{prop:GMnk-closed-multiplicationMajorana}
For $n \geq 1$ and $k\geq 2$,
$U \in \GMq{n}{k}$ implies $U c_{\mu} \in \GMq{n}{k}$ for all $\mu$.
\end{proposition}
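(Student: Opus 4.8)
The plan is to show that post-multiplying a level-$k$ gate $U$ by a Majorana operator $c_\mu$ keeps it in level $k$, by directly checking the defining recursive condition from \Cref{def:matchgate_hierarchy}: for every $\nu$, conjugation of $c_\nu$ by $Uc_\mu$ must land in $\GMn{k-1} \cap \Odd{n}$. The key computation is to expand the conjugation
\[
(U c_\mu) \, c_\nu \, (U c_\mu)^\dagger
\;=\;
U \, c_\mu \, c_\nu \, c_\mu \, U^\dagger
\;=\;
(-1)^{\delta_{\mu\nu}} \, U \, c_\nu \, U^\dagger,
\]
where I use $c_\mu^\dagger = c_\mu$ together with the anticommutation relations \Cref{eq:CARgroup}: when $\nu \neq \mu$ the two transpositions produce a sign $(-1)^{1} = -1$ from moving $c_\nu$ past $c_\mu$ and then $c_\mu^2 = \one$, while when $\nu = \mu$ we get $c_\mu c_\mu c_\mu = c_\mu$ with sign $+1$. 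In either case the result is $\pm\,U c_\nu U^\dagger$.

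\textbf{The remaining steps are closure arguments.} Since $U \in \GMn{k}$, by definition $U c_\nu U^\dagger \in \GMn{k-1} \cap \Odd{n}$ for every $\nu$. I then invoke \Cref{prop:GMnk-closed-phase} (closure of each level under multiplication by a phase, in particular by the phase $-1$) to conclude that $(-1)^{\delta_{\mu\nu}} U c_\nu U^\dagger$ is also in $\GMn{k-1}$; and since $\Odd{n}$ is a linear subspace, scaling by $\pm 1$ preserves oddness, so the result stays in $\GMn{k-1} \cap \Odd{n}$. As this holds for all $\nu$, the defining condition for $U c_\mu \in \GMn{k}$ is satisfied. (One should note that \Cref{prop:GMnk-closed-phase} is stated for $k \geq 2$, which matches the hypothesis here; the base level $\GMn{1}$ is handled separately, but the statement as given assumes $k \geq 2$.)

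\textbf{The one genuine subtlety} — and the step I would be most careful about — is confirming that $U c_\mu$ is itself unitary, so that it is a legitimate candidate for membership in $\GMn{k}\subseteq\U(2^n)$. This is immediate since $c_\mu$ is a Hermitian unitary ($c_\mu^2 = \one$ by \Cref{eq:CARgroup}) and the product of unitaries is unitary, but it is worth stating explicitly because the definition of each level is a subset of $\U(2^n)$. Beyond this, the proof is essentially a one-line sign computation wrapped in the two closure properties; there is no real obstacle, and the analogous statement for \emph{left} multiplication $c_\mu U$ would be the natural companion result (proved the same way, conjugating $c_\nu$ by $c_\mu U$ and using that $U c_\nu U^\dagger$ being odd allows the $c_\mu$ factors to be moved through with a controlled sign).
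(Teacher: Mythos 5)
Your proof is correct and essentially identical to the paper's: the same one-line computation $(Uc_\mu)\,c_\nu\,(Uc_\mu)^\dagger = U(c_\mu c_\nu c_\mu)U^\dagger = \pm\, U c_\nu U^\dagger$, followed by closure of the lower level under the phase $-1$ and the observation that $\Odd{n}$ is preserved. The one point the paper makes cleanly that your parenthetical only gestures at is the case $k=2$: there the conjugate lands in $\GMn{1}$, to which \Cref{prop:GMnk-closed-phase} (stated only for levels $\geq 2$) does not apply, so closure under $-1$ must instead be read off directly from the definition of $\GMn{1}$ (negating the coefficient vector preserves unit norm).
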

\begin{proof}
Let $U \in \GMq{n}{k}$. Conjugation of $c_{\nu}$ by $U c_{\mu}$ yields
\[
    (U c_{\mu}) c_{\nu} (U c_{\mu})^\dagger
    \;=\;
    (U c_{\mu}) c_{\nu} (c_{\mu} U^{\dagger})
    \;=\;
    U (c_{\mu} c_{\nu} c_{\mu}) U^{\dagger}
    \;=\;
    (-1)^{\delta_{\mu \nu}} U c_{\nu} U^{\dagger} 
    .
\]
Since $U c_{\nu} U^\dagger$ is in $ \GMq{n}{k-1} \cap \Odd{n}$, then so is this gate, by the definition of $\GMq{n}{1}$ (when $k=2$) or by \Cref{prop:GMnk-closed-phase} (when $k > 2$).
\end{proof}

Multiplication by a Majorana operator is an involutive operation that witnesses a bijection between $\Even{n}$ and $\Odd{n}$.
\Cref{prop:GMnk-closed-multiplicationMajorana} above shows that this restricts to a bijection between
$\Mq{n}{k} = \GMq{n}{k}\cap \Even{n}$ and $\GMq{n}{k}\cap \Odd{n}$,
the even and odd operators at each level of the hierarchy,
except for the first level which only has odd operators.

\begin{proposition}\label{prop:GMnk-closed-conjugationMajorana}
For $n, k \geq 1$,
$U \in \GMq{n}{k}$ implies $c_{\mu} U c_{\mu} \in \GMq{n}{k}$ for all $\mu$.
\end{proposition}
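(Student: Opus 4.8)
The plan is to prove the statement by induction on the level $k$, exploiting the fact that conjugation by a Majorana operator interacts cleanly with the recursive clause of \Cref{def:matchgate_hierarchy}, much as right multiplication does in \Cref{prop:GMnk-closed-multiplicationMajorana}. I first note that $c_\mu$ is a Hermitian unitary with $c_\mu^2 = \one$, so $c_\mu U c_\mu = c_\mu U c_\mu^\dagger$ is a genuine unitary conjugate of $U$ and hence lies in $\U(2^n)$; the content of the proposition is that it stays at the \emph{same} level.

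For the base case $k=1$, I would write $U = \sum_\nu a_\nu c_\nu$ and compute directly, using the relation $c_\mu c_\nu c_\mu = (-1)^{\delta_{\mu\nu}} c_\nu$ (as recorded in the footnote describing $c_\mu$ as a reflection), that $c_\mu U c_\mu = \sum_\nu (-1)^{\delta_{\mu\nu}} a_\nu c_\nu$. This is again a real linear combination of Majoranas whose coefficient vector differs from $\veca$ only by a sign in the $\mu$-th entry, hence has the same unit norm; so it lies in $\GMn{1}$.

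For the inductive step, I assume the claim at level $k-1$ and take $U \in \GMn{k}$. The crucial computation is the single identity
\[
(c_\mu U c_\mu)\, c_\nu\, (c_\mu U c_\mu)^\dagger
\;=\; c_\mu\, U\, (c_\mu c_\nu c_\mu)\, U^\dagger\, c_\mu
\;=\; (-1)^{\delta_{\mu\nu}}\, c_\mu\, (U c_\nu U^\dagger)\, c_\mu,
\]
which reduces conjugating $c_\nu$ by $c_\mu U c_\mu$ to conjugating the known element $W \defeq U c_\nu U^\dagger$ by $c_\mu$. By definition of $\GMn{k}$ we have $W \in \GMn{k-1} \cap \Odd{n}$, so the induction hypothesis yields $c_\mu W c_\mu \in \GMn{k-1}$. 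It then remains to verify two closure facts: that conjugation by the odd operator $c_\mu$ preserves oddness, so $c_\mu W c_\mu \in \Odd{n}$, which follows from $Z^{\otimes n} c_\mu Z^{\otimes n} = -c_\mu$ together with $Z^{\otimes n} W Z^{\otimes n} = -W$; and that the scalar $(-1)^{\delta_{\mu\nu}} = \pm 1$ keeps us inside $\GMn{k-1} \cap \Odd{n}$, using that $\Odd{n}$ is a linear subspace and that $\GMn{k-1}$ is closed under $\pm 1$ (by \Cref{prop:GMnk-closed-phase} when $k-1 \geq 2$, and by the norm-preservation observation of the base case when $k-1=1$). Since this holds for every $\nu$, the defining condition for level $k$ is satisfied and $c_\mu U c_\mu \in \GMn{k}$.

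I expect no serious obstacle: the whole argument hinges on the one conjugation identity, which the anticommutation relation $c_\mu c_\nu c_\mu = (-1)^{\delta_{\mu\nu}} c_\nu$ renders transparent. The only points demanding care are bookkeeping ones -- tracking the $\pm 1$ sign (and recalling that the first level is closed only under $\pm 1$, not under arbitrary phases) and confirming that oddness is preserved under conjugation by $c_\mu$ -- both of which are routine.
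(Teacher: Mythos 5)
Your proposal is correct and follows essentially the same route as the paper's proof: induction on the level, the same base-case computation $c_\mu\bigl(\sum_\nu a_\nu c_\nu\bigr)c_\mu = \sum_\nu (-1)^{\delta_{\mu\nu}} a_\nu c_\nu$, and the same key identity $(c_\mu U c_\mu)c_\nu(c_\mu U c_\mu)^\dagger = (-1)^{\delta_{\mu\nu}} c_\mu (U c_\nu U^\dagger) c_\mu$ combined with the induction hypothesis and closure under scaling by $-1$. The only difference is that you spell out two details the paper leaves implicit -- that conjugation by the odd $c_\mu$ preserves oddness, and that the first level is closed under the sign flip -- which is careful but does not change the argument.
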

\begin{proof}
We establish the claim by induction on the level $k$.
For the base case $k=1$,
$U \in \GMq{n}{1}$ means that $U = \sum_{\nu}^{2n} \alpha_{\nu} c_{\nu}$ for $\alpha \in \mathbb{R}^{2n}$ with $\|\alpha\| = 1$. Conjugation of $U$ by $c_{\mu}$ yields 
\[
    c_{\mu} \left(\sum_{\nu} \alpha_{\nu} c_{\nu}\right) c_{\mu}
    \;=\;
    \sum_{\nu} \alpha_{\nu} (c_{\mu} c_{\nu} c_{\mu})
    \;=\;
    \sum_{\nu} (-1)^{\delta_{\mu \nu}}\alpha_{\nu} c_{\nu}
    \;\in\; \GMq{n}{1}.
\]
For the inductive step,
we assume that the property holds for a given $k$, \ie that $U \in \GMq{n}{k}$ implies $c_{\mu} U c_{\mu} \in \GMq{n}{k}$, and show it for $k+1$.
Given $U \in \GMq{n}{k+1}$, we consider the gate $c_{\mu} U c_{\mu}$.
Conjugation of a Majorana operator $c_{\nu}$ by $c_{\mu} U c_{\mu}$ yields:
\[
   (c_{\mu} U c_{\mu}) c_{\nu} (c_{\mu} U c_{\mu})^\dagger
   \;=\;
   (c_{\mu} U c_{\mu}) c_{\nu} (c_{\mu} U^{\dagger} c_{\mu})
   \;=\;
   c_{\mu} U (c_{\mu} c_{\nu} c_{\mu}) U^{\dagger} c_{\mu}
   \;=\;
   (-1)^{\delta_{\mu \nu}} c_{\mu}(U c_{\nu} U^{\dagger})c_{\mu}
   .
\]
Since $U c_{\nu} U^{\dagger}$ is in $\GMq{n}{k} \cap \Odd{n}$, then by the induction hypothesis and the fact that the set $\GMq{n}{k} \cap \Odd{n}$ is closed under scaling by $-1$, so is this gate.
\end{proof} 

\begin{corollary}\label{prop:GMnk-closed-leftmultiplicationMajorana}
For $n \geq 1$ and $k\geq 2$,
$U \in \GMq{n}{k}$ implies $c_{\mu} U\in \GMq{n}{k}$ for all $\mu$.
\end{corollary}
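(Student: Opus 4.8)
The plan is to reduce left multiplication by $c_\mu$ to a composition of two operations already shown to preserve each level of the hierarchy, namely conjugation by $c_\mu$ (\Cref{prop:GMnk-closed-conjugationMajorana}) and right multiplication by $c_\mu$ (\Cref{prop:GMnk-closed-multiplicationMajorana}). The key algebraic observation is that, since $c_\mu^2 = \one$ by the anticommutation relations (\Cref{eq:CARgroup}), one can insert a pair $c_\mu c_\mu$ and regroup:
\[
c_\mu U \;=\; c_\mu U c_\mu c_\mu \;=\; (c_\mu U c_\mu)\, c_\mu .
\]

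Concretely, given $U \in \GMq{n}{k}$, I would first apply \Cref{prop:GMnk-closed-conjugationMajorana} to conclude that the conjugate $c_\mu U c_\mu$ again lies in $\GMq{n}{k}$. I would then apply \Cref{prop:GMnk-closed-multiplicationMajorana}, taking $c_\mu U c_\mu$ in the role of the hierarchy element, to deduce that $(c_\mu U c_\mu)\, c_\mu \in \GMq{n}{k}$. By the displayed identity, this product is exactly $c_\mu U$, which is the desired conclusion.

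I expect essentially no obstacle here: the statement is a genuine corollary, immediate once the two preceding closure properties are in hand, with the involutivity $c_\mu^2 = \one$ as the only extra ingredient. The sole point worth verifying is the compatibility of the hypotheses' ranges — \Cref{prop:GMnk-closed-multiplicationMajorana} requires $k \geq 2$, which coincides with the hypothesis of the present statement, while \Cref{prop:GMnk-closed-conjugationMajorana} holds for all $k \geq 1$ and so covers this case a fortiori. Hence no boundary case at $k=1$ intervenes, and the argument goes through uniformly for all $k \geq 2$.
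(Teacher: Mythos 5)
Your proposal is correct and coincides with the paper's own argument: the paper likewise chains \Cref{prop:GMnk-closed-conjugationMajorana} with \Cref{prop:GMnk-closed-multiplicationMajorana} and observes $(c_\mu U c_\mu)\,c_\mu = c_\mu U c_\mu^2 = c_\mu U$. Your extra remark on the ranges of $k$ is accurate and harmless.
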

\begin{proof}
Chain \Cref{prop:GMnk-closed-conjugationMajorana,prop:GMnk-closed-multiplicationMajorana}
to obtain
$U \in \GMq{n}{k}$ implies $c_\mu U c_\mu \in \GMq{n}{k}$ implies $(c_\mu U c_\mu) c_\mu \in \GMq{n}{k}$,
and observe $(c_\mu U c_\mu) c_\mu = c_\mu U c_\mu^2 = c_\mu U$.
\end{proof}

Moving on to properties relating different sets in the hierarchy,
we show that for a fixed number of qubits, 
the levels of the hierarchy are nested.
\begin{proposition}\label{prop:nested}
For $n, k \geq 1$, $\GMq{n}{k} \subset \GMn{k+1}$.
\end{proposition}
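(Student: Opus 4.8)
The plan is to proceed by induction on the level $k$, with essentially all the content concentrated in the base case. The recursive clause of \Cref{def:matchgate_hierarchy} makes the inductive step almost automatic, so the only genuine computation is establishing the very first inclusion $\GMn{1} \subset \GMn{2}$.

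For the inductive step, I would suppose $\GMn{k} \subset \GMn{k+1}$ and take any $U \in \GMn{k+1}$. By definition of the hierarchy, $U c_\mu U^\dagger \in \GMn{k} \cap \Odd{n}$ for every $\mu \in \{1, \ldots, 2n\}$. The induction hypothesis upgrades each such gate to an element of $\GMn{k+1}$, so $U c_\mu U^\dagger \in \GMn{k+1} \cap \Odd{n}$ for all $\mu$; this is precisely the membership condition for $\GMn{k+2}$, giving $U \in \GMn{k+2}$ as required.

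For the base case, I would show directly that conjugation by a first-level gate returns each Majorana operator into $\GMn{1} \cap \Odd{n}$. Writing $U = \sum_\nu a_\nu c_\nu$ with $\|\veca\| = 1$, recall that $U$ is Hermitian and $U^2 = \one$ by \Cref{eq:squarelinci}. Using the anticommutation relations to push $c_\mu$ through $U$ gives $U c_\mu = 2 a_\mu \one - c_\mu U$, whence
\[
U c_\mu U^\dagger \;=\; U c_\mu U \;=\; 2 a_\mu U - c_\mu \;=\; \sum_\lambda \left( 2 a_\mu a_\lambda - \delta_{\mu\lambda} \right) c_\lambda .
\]
This is a real linear combination of single Majoranas — hence manifestly odd — whose coefficient vector is the $\mu$-th row of the symmetric matrix $2\veca\veca^{\mathsf{T}} - I$. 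I would then argue this image again lies in $\GMn{1}$, so that $U \in \GMn{2}$.

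The main obstacle is this base case, and within it the one nontrivial point is verifying that the transformed coefficient vector still has unit norm. This reduces to observing that $2\veca\veca^{\mathsf{T}} - I$ is orthogonal — indeed a Householder-type reflection fixing $\veca$ and negating its orthogonal complement — which follows from $(2\veca\veca^{\mathsf{T}}-I)^2 = I$ using $\veca^{\mathsf{T}}\veca = \|\veca\|^2 = 1$; orthogonality makes every row a unit vector, placing each $U c_\mu U^\dagger$ in $\GMn{1} \cap \Odd{n}$. Everything else is bookkeeping against the definition.
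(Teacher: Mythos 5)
Your proof is correct and follows essentially the same route as the paper's: both reduce everything to the base case $\GMn{1} \subset \GMn{2}$ (the inductive step being automatic from the recursive definition) and both compute $U c_\mu U^\dagger$ for a first-level $U$, arriving at the same reflection formula with coefficient matrix $2\veca\veca^{\mathsf{T}} - I$. Your derivation is a touch slicker — using $U^\dagger = U$, $U^2 = \one$ and the push-through identity $U c_\mu = 2a_\mu \one - c_\mu U$ instead of the paper's four-way split of the double sum, and verifying unit norm via orthogonality of the Householder reflection (which the paper only observes in a remark afterwards) rather than by direct expansion — but these are computational shortcuts within the same argument, not a different approach.
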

\begin{proof}
It is enough to show that $\GMq{n}{1} \subset \GMq{n}{2}$, and the remaining inclusions then follow by induction on the level.
Consider the conjugation of a Majorana operator $c_\mu$ by a first-level gate $\sum_\lambda a_\lambda c_\lambda$ with $\sum_\lambda a_\lambda^2=1$:
\[
    \left(\sum_\lambda a_\lambda c_\lambda\right)c_\mu\left(\sum_\nu a_\nu c_\nu\right)^\dagger = \sum_{\lambda,\nu}a_\lambda a_\nu c_\lambda c_\mu c_\nu .
\]
We show that this is itself in $\GMq{n}{1} \cap \Odd{n} = \GMq{n}{1}$.
We break the sum on the right-hand side into four parts,
 according to whether the indices $\lambda, \nu$ match $\mu$:
 \[\lambda=\mu \text{ and } \nu = \mu, \qquad \lambda=\mu \text{ and } \nu \neq \mu, \qquad \lambda\neq\mu \text{ and } \nu = \mu, \qquad \lambda\neq\mu \text{ and } \nu \neq \mu.\] 
Using $c_\mu^2=\one$, this yields:
\[
    a_\mu^2 c_\mu
    +
    \sum_{\nu \neq \mu} a_\mu a_\nu c_\nu
    +
    \sum_{\lambda \neq \mu} a_\lambda a_\mu c_\lambda
    +
    \sum_{\lambda, \nu \neq \mu} a_\lambda a_\nu c_\lambda c_\mu c_\nu.
\]
The middle terms are identical while the last can be rewritten as
\[
\sum_{\lambda, \nu \neq \mu} a_\lambda a_\nu c_\lambda c_\mu c_\nu
\;=\;
- \sum_{\lambda, \nu \neq \mu} a_\lambda a_\nu c_\mu c_\lambda c_\nu 
\;=\;
- c_\mu \big(\sum_{\lambda \neq \mu}a_\lambda c_\lambda\big)^2  
\;=\;
-c_\mu \sum_{\lambda \neq \mu} a_\lambda^2 
\;=\;
(a_\mu^2-1)c_\mu
\]
using anticommutation of $c_\mu$, $c_\lambda$ given $\lambda\neq\mu$, then \Cref{eq:squarelinci}, and finally the fact that the vector of coefficients
for the first level gate has unit norm, \ie\ $\sum_\lambda a_\lambda^2=1$.
Combining the observations above, we obtain:
\begin{equation}\label{eq:conjugation_c_mu_by_first_level}
    \left(\sum_\lambda a_\lambda c_\lambda\right)c_\mu\left(\sum_\nu a_\nu c_\nu\right)^\dagger 
    \;=\; (2a_\mu^2 - 1) c_\mu + \sum_{\lambda \neq \mu} (2 a_\mu a_\lambda) c_\lambda,
\end{equation}
which is a linear combination of Majorana operators.
It remains to check that its coefficients are normalised:
\begin{align*}
    (2 a_\mu^2 - 1)^2 + \sum_{\lambda \neq \mu}(2a_\mu  a_\lambda)^2
    \;=\;
    1 - 4a_\mu^2 + 4a_\mu^2 a_\mu^2  + 4a_\mu^2 \sum_{\lambda\neq \mu} a_\lambda^2
    \;=\;
    1 + 4a_\mu^2\big(-1 + \overbrace{a_\mu^2 + \sum_{\lambda\neq \mu} a_\lambda^2}^1\big)
    \;=\; 1. 
\end{align*}
\end{proof}

\begin{remark}
From \Cref{eq:conjugation_c_mu_by_first_level} in the proof above one can derive the general  form of the action of $\GMq{n}{1}$ on itself by conjugation: 
\[
\left(\sum_\lambda a_\lambda c_\lambda\right)\left(\sum_\mu b_\mu c_\mu\right)\left(\sum_\nu a_\nu c_\nu\right)^\dagger 
=
\sum_{\mu,\lambda} 2 a_\mu b_\mu a_\lambda c_\lambda  -  \sum_{\mu}b_\mu c_\mu 
=
2  \left(\sum_{\lambda}  a_\lambda c_\lambda\right) \left(\sum_\mu a_\mu b_\mu\right)  -  \sum_{\mu}b_\mu c_\mu  .
\]
In terms of the coefficient vectors in $\RR^{2n}$ that determine the first-level gates,
the action by conjugation of (the first-level gate determined by vector) 
$\vec{a} \in \RR^{2n}$ is represented by the linear map $\RR^{2n} \to \RR^{2n}$ 
mapping $\vec{b} \in \RR^{2n}$ to 
\[2  \vec{a} \langle \vec{a},\vec{b}\rangle  - \vec{b}
=
2  \vec{a} (\vec{a}^\intercal \vec{b})  - \vec{b}
=
(2  \vec{a} \vec{a}^\intercal - \one) \vec{b}.
\]
This map $2  \vec{a} \vec{a}^\intercal - \one$ is a reflection  through $\vec{a}$ in $\RR^{2n}$.
\end{remark}

We now consider the relation between levels of the hierarchy across different number of qubits.
We first observe how the Majorana operators decompose in terms of Majoranas for fewer fermionic modes (or qubits).
Given $m, n \in \NN$, the $(n+m)$-mode Majorana operators $\{c^{(n+m)}_\mu\}_{\mu=1}^{2(n+m)}$ act on $n+m$ qubits
and have the following form:
\[
c^{(n+m)}_\mu =
\begin{cases}
c^{(n)}_\nu \otimes \one^{\otimes m} & \text{\small if $1 \leq \mu \leq 2n$ putting $\nu = \mu$ }
\\
Z^{\otimes n} \otimes c^{(m)}_\nu  & \text{\small if $2n+1 \leq \mu \leq 2n+2m$ putting $\nu = \mu-2n$ .}
\end{cases} \]
We now show that the levels of the hierarchy (except the first) are closed under tensor product.
\begin{proposition} \label{prop: U_tensor_V_k_level}
For $n, m \geq 1$ and $k \geq 2$, $U \in \GMq{n}{k}$ and $V \in \GMq{m}{k}$ implies $U \otimes V \in \GMq{n+m}{k}$.
\end{proposition}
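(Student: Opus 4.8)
The plan is to induct on the level $k \geq 2$, using as the key structural input the decomposition of the $(n+m)$-mode Majorana operators recalled just above the statement: for $1 \le \mu \le 2n$ one has $c^{(n+m)}_\mu = c^{(n)}_\mu \otimes \one^{\otimes m}$, while for $2n+1 \le \mu \le 2n+2m$ one has $c^{(n+m)}_\mu = Z^{\otimes n} \otimes c^{(m)}_{\mu - 2n}$. To decide membership of $U \otimes V$ at a given level, I conjugate these two families by $U \otimes V$ and check that the results land in the previous level intersected with $\Odd{n+m}$. Since the tensor factors act on disjoint qubits, conjugation factorises: for the first family $(U \otimes V) c^{(n+m)}_\mu (U \otimes V)^\dagger = (U c^{(n)}_\mu U^\dagger) \otimes \one$, and for the second $(U \otimes V) c^{(n+m)}_\mu (U \otimes V)^\dagger = (U Z^{\otimes n} U^\dagger) \otimes (V c^{(m)}_{\mu-2n} V^\dagger)$. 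Because every hierarchy gate is fermionic (by the Corollary to \Cref{lemma:unitary_parity_preserving_conjugation_implies_evenorodd}), $U$ commutes or anticommutes with the parity operator $Z^{\otimes n}$, so $U Z^{\otimes n} U^\dagger = \pm Z^{\otimes n}$; this is what keeps the second family tractable.

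For the base case $k = 2$, I would use that $U c^{(n)}_\mu U^\dagger = \sum_{\nu} a_\nu c^{(n)}_\nu$ and $V c^{(m)}_\rho V^\dagger = \sum_{\sigma} b_\sigma c^{(m)}_\sigma$ are first-level gates, i.e.\ unit-norm real combinations of single Majoranas. Re-expressing through the decomposition, $\left(\sum_\nu a_\nu c^{(n)}_\nu\right) \otimes \one = \sum_{\nu=1}^{2n} a_\nu c^{(n+m)}_\nu$ and $\pm Z^{\otimes n} \otimes \left(\sum_\sigma b_\sigma c^{(m)}_\sigma\right) = \pm \sum_{\sigma} b_\sigma c^{(n+m)}_{2n+\sigma}$; both are unit-norm real combinations of $(n+m)$-mode Majoranas (the overall sign stays within this family), hence lie in $\GMq{n+m}{1}$. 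As $\GMq{n+m}{1} \subseteq \Odd{n+m}$, the odd-parity requirement is automatic, giving $U \otimes V \in \GMq{n+m}{2}$.

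For the inductive step, I assume the statement at level $k$ and take $U \in \GMq{n}{k+1}$, $V \in \GMq{m}{k+1}$, so that $U c^{(n)}_\mu U^\dagger \in \GMq{n}{k} \cap \Odd{n}$ and $V c^{(m)}_\rho V^\dagger \in \GMq{m}{k} \cap \Odd{m}$. To invoke the induction hypothesis on the factorised conjugations, I observe that the constant tensor factors themselves sit at level $k$: the identity is Gaussian, hence in $\GMq{m}{2} \subseteq \GMq{m}{k}$ by \Cref{prop:nested}, and $Z^{\otimes n}$ is Gaussian since conjugation by it sends $c_\mu \mapsto -c_\mu$, the special orthogonal map $-\one \in \SO(2n)$, so $Z^{\otimes n} \in \Mq{n}{2} \subseteq \GMq{n}{k}$; both are even. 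The induction hypothesis then yields $(U c^{(n)}_\mu U^\dagger) \otimes \one \in \GMq{n+m}{k}$ and $Z^{\otimes n} \otimes (V c^{(m)}_\rho V^\dagger) \in \GMq{n+m}{k}$, with the stray $\pm$ in the second family absorbed by closure under phases (\Cref{prop:GMnk-closed-phase}). Each result is an odd operator tensored with an even one, hence odd, so it lies in $\GMq{n+m}{k} \cap \Odd{n+m}$, establishing $U \otimes V \in \GMq{n+m}{k+1}$.

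The main obstacle is the second block of Majorana operators, where the $Z^{\otimes n}$ prefactor intervenes: the argument only closes because $U$ is fermionic (so conjugation merely flips a sign) and because $Z^{\otimes n}$ itself belongs to level $k$ (so the induction hypothesis applies to the tensor product). Pinning down these two facts — fermionicity of the hierarchy gates and Gaussianity of the parity operator — is the crux; once they are secured, the two cases run in parallel and the remaining parity bookkeeping is routine.
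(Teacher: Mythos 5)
Your proof is correct and takes essentially the same route as the paper's: the same decomposition of the $(n+m)$-mode Majorana operators, the same factorised conjugation using fermionicity of $U$ to reduce $U Z^{\otimes n} U^\dagger$ to $\pm Z^{\otimes n}$, and the same induction on the level, with $\one^{\otimes m}$ and $Z^{\otimes n}$ identified as second-level gates so that \Cref{prop:nested} and the induction hypothesis apply. Your more explicit base case and your absorption of the stray sign via \Cref{prop:GMnk-closed-phase} merely spell out steps the paper leaves implicit.
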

\begin{proof}
Consider the conjugation of the $(n+m)$-qubit Majorana operators by the matrix $U \otimes V \in \U(2^{n+m})$. 
For a Majorana of the form $c^{(n)}_\nu \otimes \one^{\otimes m}$, it yields
\begin{equation}\label{eq:conj_UtensorV_1}
(U \otimes V)(c^{(n)}_\nu \otimes \one^{\otimes m})(U^\dagger \otimes V^\dagger)
\;=\;
U c^{(n)}_\nu U^\dagger \otimes V\one^{\otimes m}V^\dagger
\;=\;
Uc^{(n)}_\nu U^\dagger \otimes \one^{\otimes m} ;
\end{equation}
while for one of the form $Z^{\otimes n} \otimes c^{(m)}_\nu$,
it yields
\begin{equation}\label{eq:conj_UtensorV_2}
(U \otimes V)(Z^{\otimes n}\otimes c^{(m)}_\nu)(U^\dagger \otimes V^\dagger) 
\;=\;
 UZ^{\otimes n}U^\dagger \otimes V c^{(m)}_\nu V^\dagger 
\;=\;
\pm Z^{\otimes n} \otimes V c^{(m)}_\nu V^\dagger,
\end{equation}
where we use the fact that $U$ is fermionic (even or odd).

Now, if $U$ and $V$ are $(k+1)$-level gates for some $k \geq 1$,
then \Cref{eq:conj_UtensorV_1}
yields
a unitary in $(\GMq{n}{k} \cap \Odd{n}) \otimes \one^{\otimes m}$,
an odd $n$-qubit $k$-level gate tensored with the $m$-qubit identity on the right, 
while \Cref{eq:conj_UtensorV_2}
yields
a gate in  $Z^{\otimes n} \otimes (\GMq{m}{k-1} \cap \Odd{m})$,
an odd $m$-qubit $k$-level gate tensored with the $n$-qubit parity operator on the left.
Since both $\one^{\otimes m}$ and $Z^{\otimes n}$ are even, the gates in \Cref{eq:conj_UtensorV_1,eq:conj_UtensorV_2} are both odd, \ie in $\Odd{n+m}$.

We establish the main claim by induction on the level.
For the base case, $k+1=2$, we use the fact that tensoring with parity on the left or identity on the right maps Majorana operators to Majorana operators, and thus the gates in \Cref{eq:conj_UtensorV_1,eq:conj_UtensorV_2} are in $\GMq{n+m}{1}$.
For the inductive step, we assume that the property holds for a level $k$, with $k\geq 2$, and show it for $k+1$.
Observe that both $\one^{\otimes n}$ and $Z^{\otimes n}$
are in $\GMq{n}{2}$ -- and thus also in $\GMq{n}{k}$ by \Cref{prop:nested} -- as they induce by conjugation the linear actions $c_\mu \mapsto c_\mu$ and $c_\mu \mapsto - c_\mu$, respectively.
One can thus apply the induction hypothesis to conclude that the gates in \Cref{eq:conj_UtensorV_1,eq:conj_UtensorV_2} are in $\GMq{n+m}{k}$, as required.
\end{proof}

\section{Hierarchy gate teleportation using magic states}\label{sec:teleportation_protocol}

Inspired by the gate teleportation construction proposed in Ref.~\cite{Gottesman_1999} for the deterministic implementation of \emph{any}
gate in the Clifford hierarchy,
we construct an analogous gate teleportation protocol for implementing gates in the matchgate hierarchy using matchgate circuits.
We first present the protocol for the case of two-qubit gates and then its generalisation to an arbitrary number of qubits.

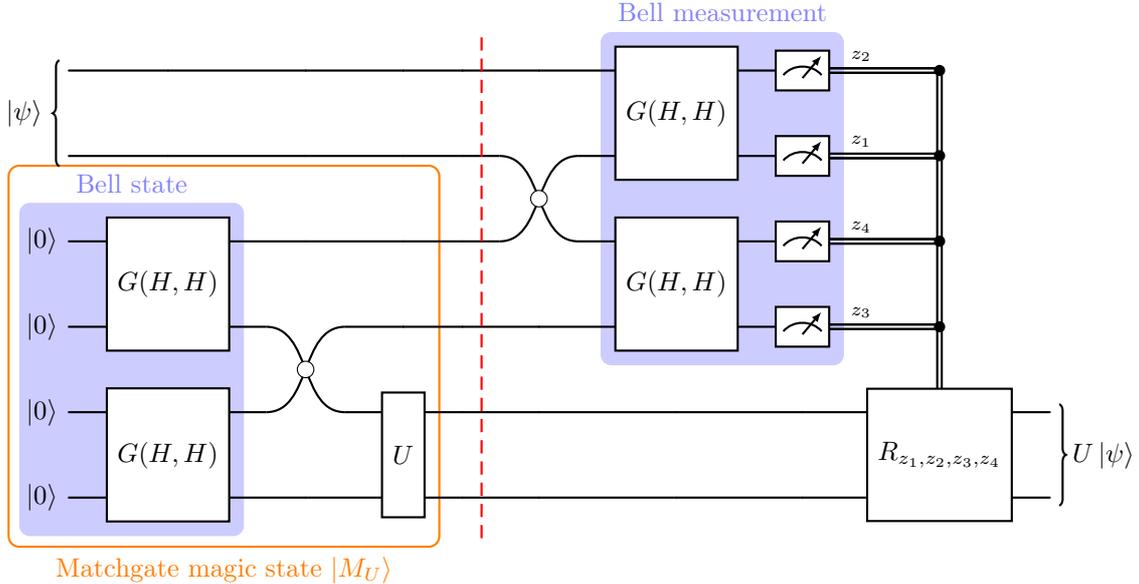
\begin{figure}[!ht]
\centering
\tikzset{my label/.append style={above right,xshift=0.45cm,yshift=-0.25cm}}
\begin{quantikz}
\lstick[2]{$\ket{\psi}$} &  &  &  & \slice{} &  & \gate[2]{\textsc{$G(H, H)$}}
\gategroup[4,steps=2,style={draw=none,
rounded corners,fill=blue!20, inner xsep=2pt, inner ysep=2pt},
background]{{\textcolor{blue!50}{Bell measurement}}} & \meter{} &  \ctrl[vertical wire=c]{1} \setwiretype{c}\wire[l][1]["z_2"{above,pos=0.7}]{a} 
\\
& & & & &
\gate[2,fswap,style={color=white}]{}  
&  & \meter{} & \ctrl[vertical wire=c]{1} \setwiretype{c}\wire[l][1]["z_1"{above,pos=0.7}]{a}  
\\
\gategroup[4,steps=4,
            style={color=orange, rounded corners,fill=none, inner xsep=10.5pt, xshift=-8.5pt, inner ysep=11pt, yshift=5pt},
            background,
            label style={label position=below,anchor=north,yshift=-0.2cm}]{{\textcolor{amber}{Matchgate magic state $\ket{M_U}$}}}
\gategroup[4,steps=2,
            style={draw=none, rounded corners,  fill=blue!20, inner xsep=8.5pt, xshift=-6.5pt, inner ysep=2pt},
            background]{{\textcolor{blue!50}{Bell state}}}
\lstick{$\ket{0}$}
& \gate[2]{\textsc{$G(H, H)$}}
&  &  &  &  & \gate[2]{\textsc{$G(H, H)$}} & \meter{} & \ctrl[vertical wire=c]{1} \setwiretype{c}\wire[l][1]["z_4"{above,pos=0.7}]{a} 
\\
\lstick{$\ket{0}$} & &
\gate[2,fswap,style={color=white}]{}  
& & & &  & \meter{} &   \ctrl[vertical wire=c]{1} \setwiretype{c}\wire[l][1]["z_3"{above,pos=0.7}]{a}  \\
\lstick{$\ket{0}$} &  \gate[2]{\textsc{$G(H, H)$}} &  & \gate[2]{U} & & & & & \gate[wires=2]{R_{z_1,z_2, z_3, z_4}} & \rstick[wires=2]{$U \ket{\psi}$}  \\
\lstick{$\ket{0}$} & & & & & & & & & 
\end{quantikz}
\caption[Circuit to implement a gate $U \in \GMn{k+1}$ (here depicted for $n=2$) by quantum gate teleportation]{Circuit to implement a gate $U \in \GMn{k+1}$ (here depicted for $n=2$) by quantum gate teleportation,
using only nearest-neighbour matchgates
$\fSWAP$ 
(depicted as \raisebox{1.8pt}{\resizebox{!}{11pt}{\begin{quantikz}\gate[2,fswap,style={color=white}]{}\\ \end{quantikz}}})
and $G(H,H)$, a pre-prepared resource `matchgate-magic' state, and `correction'  operators $R_{\vz} \defeq i^{\alpha(z)} U(\prod_{\mu} c_\mu^{z_\mu})U^\dagger$ indexed by measurement outcomes $\vz \in \{0,1\}^{2n}$, which are sequences of gates in the lower level $\GMn{k}$ of the hierarchy.
For $n=2$, this sequence is a single gate in $\GMq{2}{k}$ since all levels are groups; see \Cref{rem: efficiency_2_qubit_protocol} ahead following the characterisation of the gates in $\GMq{2}{k}$.
}
\label{fig:teleportation_protocol}
\end{figure}

\subsection{Protocol for two-qubit gates}
The proposed protocol is depicted in \Cref{fig:teleportation_protocol} for the two-qubit gate case.
Its effect is to \emph{deterministically} implement a two-qubit gate $U$ from the hierarchy, applying it to the input received on the top two wires. The protocol consumes a four-qubit matchgate-magic state
\begin{equation}\label{eq:MU}
    \ket{M_U} \;=\; (\one\otimes U) \, B \ket{0000} 
\end{equation}
where $U$ is the target gate and 
\begin{align*}
B
& \;=\; \fSWAP_{[2,3]} G(H,H)_{[1,2]} G(H,H)_{[3,4]} \\
& \;=\; (\one \otimes \fSWAP \otimes \one) \, (G(H,H) \otimes G(H,H)) ,
\end{align*}
where we recall from \cref{fn:fswap} that $\fSWAP = G(Z,X)$ is the fermionic swap.

The protocol proceeds by measuring four of the qubits: the two input wires and one half of the magic state.
The four qubits are measured in the computational basis after the application of the matchgate circuit $B^\dagger$.
When all the obtained outcomes equal zero, we know that the gate $U$ has been implemented.
So far, this gives a method for \emph{probabilistically} implementing any unitary gate $U$ (not necessarily in the hierarchy).
As with the analogous protocol using Clifford circuits, the need for post-selection renders the approach impractical as the probability of success drops exponentially with the number of non-matchgates to be implemented in a given circuit.

The fact that $U$ is in the hierarchy guarantees that the output can be \emph{corrected} based on the obtained outcomes.
The necessary correction is, in general, a sequence of gates from a lower level in the matchgate hierarchy.\footnote{For the case of two-qubit gates, at least, from the results in \Cref{sec:2qubit_characterisation} ahead, we know that this correction sequence actually reduces to a single gate. This guarantees that resource requirements for implementing a two-qubit gate grow linearly with its level in the hierarchy; see \Cref{rem: efficiency_2_qubit_protocol}.}
Each of these gates is either a (directly implementable) matchgate or it can in turn be implemented by another instance of the same protocol. 
Therefore, recursive application of the protocol allows one to \emph{deterministically} implement \emph{any} gate in the matchgate hierarchy
using only matchgate circuits and a supply of (pre-prepared) matchgate `magic' states. 

We thus also provide a whole family of \emph{deterministic} magic states for matchgate computation,
greatly generalising the one known deterministic matchgate-magic state used to implement the $\SWAP$ gate \cite{hebenstreit_2019}.

\begin{remark}[Using matchgate magic states]\label{rem:using_matchgate_magic}
Our protocol uses a resource state $\ket{M_U}$ from \Cref{eq:MU} in order to implement a gate $U$.
Despite not being Gaussian states, meaning that they cannot be prepared using a matchgate circuit, such
resource states can be prepared offline beforehand, \ie independently of the input to the circuit,
and even independently of the specific circuit itself that is being implemented.
In using such matchgate magic states, a crucial aspect to consider is that matchgate circuits are built out of gates acting on neighbouring qubit lines, with the $\SWAP$ gate not being freely available.
This places a singular emphasis on the geometric arrangement of qubits,\footnote{See Ref.~\cite{Brod_2012} for a discussion of computational models with nearest-neighbour matchgate interactions among qubits laid out in alternative geometric arrangements to the usual one-dimensional chain, most of which achieve quantum universality.}
as qubit states cannot in general be freely moved around the qubit lines to undergo an interaction with a distant qubit.
In order to use our gate teleportation protocol for implementing a gate $U$ -- which may be required at an arbitrary place within a circuit --,
the preprepared magic state $\ket{M_U}$ must be moved into position next to the qubit lines where the gate is to be implemented (and likely also moved back out to be discarded).
For that, one must be able to swap the magic state $\ket{M_U}$ through arbitrary states via the use of free gates only, \ie by a matchgate circuit.
Hebenstreit et al.\ \cite[Lemma 1]{hebenstreit_2019} showed that this can be done if and only if the magic state is fermionic.
In turn, our magic state $\ket{M_U}$ from \Cref{eq:MU} is fermionic when the gate $U$ is itself fermionic, since
\begin{align*}
Z^{\otimes 2n} \ket{M_U}
&\;=\; (Z^{\otimes n} \otimes Z^{\otimes n}) (\one \otimes U) B \ket{0000}
\;=\; (Z^{\otimes n} \otimes Z^{\otimes n} U) B \ket{0000}
\\
&\;=\; (Z^{\otimes n} \otimes \pm U Z^{\otimes n}) B \ket{0000}
\;=\; \pm (\one \otimes U) (Z^{\otimes n} \otimes Z^{\otimes n}) B \ket{0000}
\\
&\;=\; \pm (\one \otimes U) B Z^{\otimes 2n} \ket{0000}
\;=\; \pm (\one \otimes U) B \ket{0000}
\\
&\;=\; \pm \ket{M_U}
\end{align*}
where the plus or minus depends on whether the gate $U$ is even or odd, respectively.
The more structural reason why this works is that fermionic maps are closed under composition and tensor products.
As a consequence, the protocol can be used to implement any circuit using gates from $\GMq{n}{k}$ in any level $k$ of the hierarchy.
\end{remark}

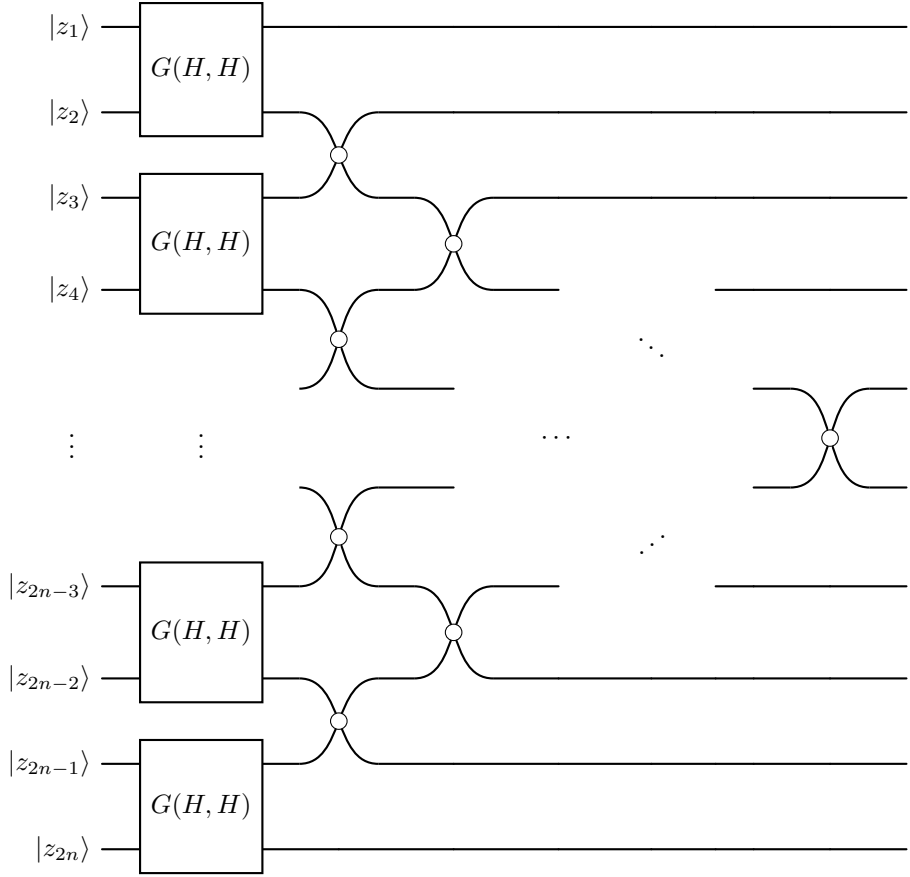
\begin{figure}[h!]
\centering
\tikzset{my label/.append style={above right,xshift=0.45cm,yshift=-0.25cm}}
\begin{quantikz}
\lstick{$\ket{z_1}$} & \gate[2]{\textsc{$G(H, H)$}} &  &  & &  & & & &  
\\
\lstick{$\ket{z_2}$} & & \gate[2,fswap,style={color=white}]{}   & &  &  & & & &  
\\
\lstick{$\ket{z_3}$} &  \gate[2]{\textsc{$G(H, H)$}} &  &
\gate[2,fswap,style={color=white}]{}  
&  & & & & & 
\\
\lstick{$\ket{z_4}$} & & \gate[2,fswap,style={color=white}]{}  
 & & &  \gate[2,style={fill=white,draw=white}]{\ddots}
 \setwiretype{n}
 & & \setwiretype{q} & & 
\\
\setwiretype{n}
\lstick[2,brackets=none]{$\vdots$} 
&  \gate[2,style={fill=white,draw=white}]{\vdots} 
&  
&  \setwiretype{q}& \gate[2,style={fill=white,draw=white}]{\cdots}\setwiretype{n}
&  
& &  & \gate[2,fswap,style={color=white}]{} \setwiretype{q} 
 &  
\\
\setwiretype{n}
& 
& \gate[2,fswap,style={color=white}]{}
& \setwiretype{q} & \setwiretype{n} & 
\gate[2,style={fill=white,draw=white}]{\iddots} &  & & \setwiretype{q}  &  
\\
\lstick{$\ket{z_{2n-3}}$} &  \gate[2]{\textsc{$G(H, H)$}} & \swap{-1} & \gate[2,fswap,style={color=white}]{} &  &  
\setwiretype{n} & & \setwiretype{q} & & 
\\
\lstick{$\ket{z_{2n-2}}$} & & \gate[2,fswap,style={color=white}]{}   &   & &  & & & & 
\\
\lstick{$\ket{z_{2n-1}}$} &  \gate[2]{\textsc{$G(H, H)$}} &  & & & & & & & 
\\
\lstick{$\ket{z_{2n}}$} & &  &  & & & & & & 
\end{quantikz}
\caption{Gate $B^{(n)}$ used in the $n$-qubit protocol.}
\label{fig:teleportation_protocol_N}
\end{figure}

\subsection{Protocol for \texorpdfstring{$n$}{n}-qubit gates}
The protocol generalises to implement matchgate hierarchy gates over an arbitrary number $n\geq 2$ of qubits.
The occurrences of the four-qubit unitary $B$ and of its adjoint in the circuit of \Cref{fig:teleportation_protocol} are replaced by the matchgate circuit over $2n$ qubits
shown in \Cref{fig:teleportation_protocol_N} and its adjoint:
\begin{equation}\label{eq:Bn}
B^{(n)} \;=\; S^{(n)} \, \left(\prod_{k=1}^n G(H,H)_{[2k-1,2k]}\right)
\end{equation}
where
\begin{align*}
   S^{(n)} \; = \; &  \left(\fSWAP_{[n,n+1]}\right)
   \\        &  \left(\fSWAP_{[n-1,n]} \, \fSWAP_{[n+1,n+2]}\right)
   \\ &  \; \cdots
   \\ &  \left(\fSWAP_{[3,4]} \, \cdots \, \fSWAP_{[2n-3,2n-2]}\right)
   \\ &  \left(\fSWAP_{[2,3]} \, \cdots \, \fSWAP_{[2n-2,2n-1]}\right) .
\end{align*}
This matchgate circuit $B^{(n)}$
is built out of a layer of $G(H,H)$ gates applied on consecutive pairs,
followed by the circuit $S^{(n)}$ consisting of $n-1$ layers of fermionic swap gates, $\fSWAP = G(Z,X)$, arranged in a triangle so that
each such layer has one fewer gate than the preceding one, 
for a total of 
${n(n-1)}/{2}$ fermionic swaps.
In essence, the $S^{(n)}$ part of the circuit implements a permutation of the wires whereby odd-numbered qubits get listed first (in the original order) and are then followed by even-numbered qubits (also in their original order). This permutation is depicted in \Cref{fig:fswapmultiple}.
However, $S^{(n)}$ is not an ordinary permutation of qubits,
since the $\SWAP$ gate is not available in matchgate circuits. Instead, each transposition of neighbouring qubits is implemented by a fermionic swap gate, which picks up a $-1$ phase whenever both modes being swapped are occupied, \ie on the computational basis state $\ket{11}$.

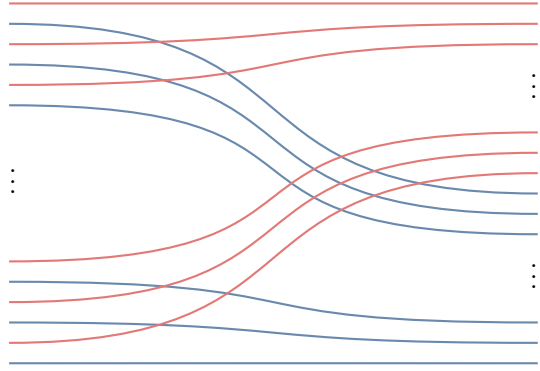
\begin{figure}[htb]
\definecolor{col1}{HTML}{E57A77} 
\definecolor{col2}{HTML}{7CA1CC} 
\centering
\begin{tikzpicture}[scale=1.8]
\centering
\def\step{-.15}
\def\lstep{-.5}
\def\wid{4}
\def\loo{1.7}
\centering
\path      (0,0)    node (c1) {}
        ++(0,\step) node (c2) {}
        ++(0,\step) node (c3) {}
        ++(0,\step) node (c4) {}
        ++(0,\step) node (c5) {}
        ++(0,\step) node (c6) {}
        ++(0,\lstep) node[right] (mid) {$\vdots$}
        ++(0,\lstep+\step) node (cm6) {}
        ++(0,\step) node (cm5) {}
        ++(0,\step) node (cm4) {}
        ++(0,\step) node (cm3) {}
        ++(0,\step) node (cm2) {}
        ++(0,\step) node (cm1) {};

        \path (\wid,0) node (u1) {}
        ++(0,\step) node (u2) {}
        ++(0,\step) node (u3) {}
        ++(0,0.5*\lstep) node[left] (midup) {$\vdots$}
        ++(0,0.5*\lstep+\step) node (um3) {}
        ++(0,\step) node (um2) {}
        ++(0,\step) node (um1) {}
        ++(0,\step) node  (d1) {}
        ++(0,\step) node (d2) {}
        ++(0,\step) node (d3) {}
        ++(0,0.5*\lstep) node[left] (midup) {$\vdots$}
        ++(0,0.5*\lstep+\step) node (dm3) {}
        ++(0,\step) node (dm2) {}
        ++(0,\step) node (dm1) {};

\draw[thickness=2,color=col1!99!black]  (c1)  to[out=0,in=180,looseness=0]  (u1);
\draw[thickness=2,color=col2!85!black]  (c2)  to[out=0,in=180,looseness=\loo-0.2]  (d1);
\draw[thickness=2,color=col1!99!black]  (c3)  to[out=0,in=180,looseness=\loo-0.2]  (u2);
\draw[thickness=2,color=col2!85!black]  (c4)  to[out=0,in=180,looseness=\loo-0.1]  (d2);
\draw[thickness=2,color=col1!99!black]  (c5)  to[out=0,in=180,looseness=\loo-0.1]  (u3);
\draw[thickness=2,color=col2!85!black]  (c6)  to[out=0,in=180,looseness=\loo]  (d3);
\draw[thickness=2,color=col1!99!black]  (cm6) to[out=0,in=180,looseness=\loo]  (um3);
\draw[thickness=2,color=col2!85!black]  (cm5) to[out=0,in=180,looseness=\loo-0.1]  (dm3);
\draw[thickness=2,color=col1!99!black]  (cm4) to[out=0,in=180,looseness=\loo-0.1]  (um2);
\draw[thickness=2,color=col2!85!black]  (cm3) to[out=0,in=180,looseness=\loo-0.2]  (dm2);
\draw[thickness=2,color=col1!99!black]  (cm2) to[out=0,in=180,looseness=\loo-0.2]  (um1);
\draw[thickness=2,color=col2!85!black]  (cm1) to[out=0,in=180,looseness=0]  (dm1);
\end{tikzpicture}
\caption{Permutation of wires determined by the fermionic swaps in $S^{(n)}$.}\label{fig:fswapmultiple}
\end{figure}

\subsection{Verification of the protocol}
To show correctness of this protocol, we consider the action of each gate on the computational basis states.
We perform these calculations for the case of two-qubit gates, but the proof for arbitrary number of qubits is similar.

The matchgate $G(H,H)$ acts as follows on a computational basis state $\ket{x,y}$:
\begin{equation}
    G(H,H)\ket{x,y} = \frac{\ket{0,x+y} + (-1)^x\ket{1, x+y+1}}{\sqrt{2}}
 = \frac{\sum_a (-1)^{ax}\ket{a, x+y+a}}{\sqrt{2}}
\end{equation}
whereas the action of the fermionic swap gate is
\begin{equation}
    \fSWAP\ket{x,y} = (-1)^{xy}\ket{y,x}.
\end{equation}

The action of the matchgate circuit $B=\fSWAP_{23}G(H,H)_{12}G(H,H)_{34}$
in a general computational basis state $\ket{z_1 z_2 z_3 z_4}$ is then:
\begin{align*}
    &\fSWAP_{[2,3]}G(H,H)_{[1,2]}G(H,H)_{[3,4]}\ket{z_1, z_2, z_3, z_4}
    \\
    = \; & \fSWAP_{[2,3]}\left(\frac{\sum_{a} (-1)^{a z_1}\ket{a, z_1 + z_2 +a}}{\sqrt{2}}\right)\left(\frac{\sum_{b} (-1)^{b z_3}\ket{b, z_3 + z_4 + b}}{\sqrt{2}}\right)
    \\
   = \; &\frac{1}{2} \, \fSWAP_{[2,3]}\left(\sum_{ab}(-1)^{a z_1 + b z_3}\ket{a, z_1 + z_2 +a, b, z_3 + z_4 + b}\right)
   \\
   = \; &\frac{1}{2}\sum_{ab}(-1)^{a z_1 + b z_3 + b (z_1 + z_2 + a)}\ket{a ,b, z_1 + z_2 +a, z_3 + z_4 + b}
   \\
   = \; &\frac{1}{2}\sum_{ab}(-1)^{a z_1 + b z_3 + b z_1 + b z_2 + ab}\ket{a ,b, z_1 + z_2 +a, z_3 + z_4 + b}
\end{align*}

Now consider the whole protocol for the particular case of $U=\one^{(2)}$, the identity gate on two qubits:
\begin{align}\label{eq:protocol_snake}
    &
    (\bra{z_1,z_2,z_3,z_4} \otimes \one^{(2)}) (B^\dagger \otimes \one^{(2)})( \one^{(2)} \otimes B) (\ket{x,y} \otimes \ket{0,0,0,0})
    \notag
    \\
    = \; &
    \frac{1}{4}\left(\sum_{ab}(-1)^{a z_1 + b (z_1+z_2+z_3) + ab}\bra{a, b, a+z_1 + z_2 , b + z_3 + z_4 } \otimes  \one^{(2)}\right)
    \left(\ket{x,y} \otimes \sum_{cd}(-1)^{cd}\ket{c, d, c, d}\right)
    \notag
    \\
    = \; &
    \frac{1}{4}\left(\sum_{ab}(-1)^{a z_1 + b (z_1+z_2+z_3) + ab}\bra{a, b, a+z_1 + z_2 , b + z_3 + z_4 } \otimes  \one^{(2)}\right)
    \left(\sum_{cd}(-1)^{cd}\ket{x,y,c, d, c, d}\right)
    \notag
    \\
    = \; &
    \frac{1}{4}\sum_{abcd}(-1)^{a z_1 + b (z_1+z_2+z_3) + ab+cd}\left(\bra{a, b, a+z_1 + z_2 , b + z_3 + z_4} \otimes \one^{(2)} \right) \ket{x,y,c, d, c, d}
    \notag
    \\ 
    = \; &
    \frac{1}{4}(-1)^{x z_1 + y (z_1+z_2+z_3) + x y+(x+z_1+z_2)(y+z_3+z_4)}\ket{x+z_1+z_2, y + z_3 + z_4}
    \notag
    \\ 
    = \; &
    \frac{1}{4}(-1)^{x z_1 + y z_3 + x (z_3+z_4) + (z_1+z_2)(z_3+z_4)}\ket{x+z_1+z_2, y + z_3 + z_4},
    \end{align}
where for the fourth equality we used the identifications
$a=x$, $b=y$, $c=a+z_1+z_2$, and $d = b + z_3+z_4$,
which label the only non-zero term in the preceding sum.

On the other hand, the action of the Majorana operators for two qubits is given as follows on computational basis states $\ket{x,y}$:
\begin{align*}
    c^{z_2}_1 \ket{x,y} &= \ket{x  + z_2, y}
    &c^{z_1}_2 \ket{x,y} &= i^{z_1}(-1)^{x z_1}\ket{x + z_1, y} \\
    c^{z_4}_3 \ket{x,y} &= (-1)^{x z_4}\ket{x, y + z_4}
    & c^{z_3}_4 \ket{x,y} &= i^{z_3}(-1)^{x z_3 + y z_3}\ket{x, y + z_3}
\end{align*}
The product 
 $c^{z_2}_1 c^{z_1}_2 c^{z_4}_3 c^{z_3}_4$ then yields:
\[
c^{z_2}_1 c^{z_1}_2 c^{z_4}_3 c^{z_3}_4 \ket{x,y} = i^{z_1+z_3}(-1)^{xz_1+yz_3}(-1)^{x(z_3 + z_4)} \ket{x + z_1 + z_2, y + z_3 + z_4}.
\]
The output of the circuit in \Cref{eq:protocol_snake},
post-selected on the observation of classical outcomes  $z_1, z_2, z_3, z_4$, is
\begin{equation} \label{correction_2}
   (-i)^{z_1 + z_3}\\(-1)^{(z_1 + z_2)(z_3 + z_4)} c^{z_2}_1 c^{z_1}_2 c^{z_4}_3 c^{z_3}_4 \ket{x,y}. 
\end{equation}

More generally, for $n$ qubits, we replace $B$ by the unitary $B^{(n)}$ in \Cref{eq:Bn}.
We can use \Cref{fig:fswapmultiple} to visualise the fermionic crossings and track the phases.
Measurement outcomes $\vz \in\{0,1\}^{2n}$ witness the implementation of the gate
\begin{multline} \label{correction_N}
    (-i)^{z_1 +  z_3 + \cdots + z_{2n-1}}
    (-1)^{(z_1 + z_2)(z_3 + \cdots+z_{2n})+(z_3+z_4)(z_5+\cdots+z_{2n}) + \cdots + (z_{2n-3}+z_{2n-2})(z_{2n-1}+z_{2n})}
    \\
    c_1^{z_2} c_2^{z_1} \ldots c_{2n-1}^{z_{2n}} c_{2n}^{z_{2n-1}}.
\end{multline}
In order to accomplish ordinary teleportation, \ie for $U$ equal to the identity matrix, it is necessary to perform the correction to undo the gate in \Cref{correction_N}. 

When the goal is to implement a general $n$-qubit $U$, one uses the $2n$-qubit magic state  $\ket{M_U} = (\one\otimes U)B^{(n)}\ket{0}^{\otimes {2n}}$,
and the required post-measurement correction is then the conjugation of the correction from \Cref{correction_N} by the unitary $U$.
If $U$ is in level $k+1$ of the matchgate hierarchy,
then the corrections can be implemented by a sequence of $k$-level gates.
Iterating this procedure, one eventually needs only generalised matchgate circuits with adaptive measurements to \emph{deterministically}
implement any gate in the hierarchy.

\section{Characterisation of two-qubit gates in the hierarchy}\label{sec:2qubit_characterisation}

\subsection{General form}

In the two-qubit case, we obtain a neat closed-form characterisation of the
gates at each level of the matchgate hierarchy.
Recall that all the gates in the hierarchy are fermionic, \ie either odd or even.
All even two-qubit gates are of the form $G(A,B)$ for unitaries $A, B \in \U(2)$; see \Cref{eq:G_AB}.
Similarly, all odd two-qubit gates are of the form 
\begin{equation}\label{eq:J_AB}
J(A,B) \defeq \begin{pmatrix}
0 & A_{11} & A_{12} & 0 \\
B_{11} & 0 &  0 &  B_{12} \\
B_{21} & 0 & 0  & B_{22}  \\
0 & A_{21} & A_{22} & 0 
\end{pmatrix}
\end{equation}
for unitaries $A, B \in \U(2)$.
Note that these are more general than two-qubit (generalised) matchgates in that there is no restriction that the determinants of $A$ and $B$ be equal.
The two-qubit gates at each level of the hierarchy are characterised precisely
in terms of the relation between the determinants of their components $A$ and $B$.

We start by giving a general form for the two-qubit gates in the first level of the hierarchy.

\begin{proposition}\label{prop:2qubit_firstlevel_characterisation}
$\GMq{2}{1}$ is the set of unitaries of the form $J(A,A^\dagger)$ with $|A|=-1$.
\end{proposition}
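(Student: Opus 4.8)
The plan is to prove the two directions separately: first show that every unit-norm real linear combination of Majorana operators has the form $J(A, A^\dagger)$ with $|A| = -1$, and then show conversely that every such $J(A, A^\dagger)$ arises this way. The key computational input is the explicit Jordan--Wigner form of the four Majorana operators for $n=2$ from \Cref{eq:JW}, namely $c_1 = X_1 = X \otimes \one$, $c_2 = Y_1 = Y \otimes \one$, $c_3 = Z_1 X_2 = Z \otimes X$, and $c_4 = Z_1 Y_2 = Z \otimes Y$. So a first-level gate is $U = a_1 c_1 + a_2 c_2 + a_3 c_3 + a_4 c_4$ with $\sum_\mu a_\mu^2 = 1$.

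First I would write out the $4 \times 4$ matrix of $U = \sum_\mu a_\mu c_\mu$ explicitly by plugging in the JW matrices above. Each $c_\mu$ is odd, so $U$ is odd and hence of the shape $J(A,B)$ by \Cref{eq:J_AB}; the computation simply reads off the blocks $A$ and $B$ as functions of $(a_1, a_2, a_3, a_4)$. I expect to find that $B = A^\dagger$ automatically (this is the structural heart of the statement) and that $|A|$ comes out to $-\sum_\mu a_\mu^2 = -1$, using the normalisation. Concretely, the anti-diagonal $A$-block and the central $B$-block should each be built from the $a_\mu$ in such a way that their entries are complex conjugates arranged so $B = A^\dagger$, and the determinant of the resulting $2\times 2$ unitary $A$ evaluates to $-1$. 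This establishes the forward inclusion, including that the result is genuinely unitary (which also follows from \Cref{eq:squarelinci}).

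For the converse, I would take an arbitrary $J(A, A^\dagger)$ with $|A| = -1$ and show it lies in $\GMq{2}{1}$, i.e. that it can be written as a real combination of the $c_\mu$. The cleanest route is a dimension/parametrisation count: the forward map sends the unit sphere $\{\veca \in \RR^4 : \|\veca\|=1\}$ into the set of $J(A,A^\dagger)$ with $|A|=-1$, and I would check that this target set is exactly parametrised by four real numbers subject to one normalisation constraint, matching $S^3$. The condition $|A| = -1$ together with $A$ being (a scalar multiple determined by) the relevant block pins down $A$ up to the same four real parameters; inverting the explicit formulae from the forward direction recovers $(a_1,a_2,a_3,a_4)$ from the entries of $A$, and $|A|=-1$ forces $\sum a_\mu^2 = 1$. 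Thus the two sets coincide.

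The main obstacle I anticipate is purely bookkeeping: getting the block structure and the conjugate-transpose relationship $B = A^\dagger$ to fall out correctly requires care with the signs and factors of $i$ coming from the $Y$ Pauli in $c_2$ and $c_4$, and with the placement of entries in the anti-diagonal $J$-form versus the standard matrix layout. There is no conceptual difficulty once the JW matrices are substituted; the risk is an arithmetic slip in identifying which linear combination of the $a_\mu$ lands in which matrix entry. I would organise the calculation by writing $U$ as $a_1 (X\otimes\one) + a_2 (Y\otimes\one) + a_3 (Z\otimes X) + a_4 (Z\otimes Y)$ and expanding block-by-block in the $2\times 2$ tensor structure, so that the even/odd block decomposition and the relation $B=A^\dagger$ become transparent rather than emerging from a brute-force $4\times 4$ expansion.
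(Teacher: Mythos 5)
Your proposal is correct and takes essentially the same route as the paper: expanding $\sum_\mu a_\mu c_\mu$ via the Jordan--Wigner matrices to read off $B = A^\dagger$ and $|A| = -\|\veca\|^2 = -1$, and for the converse inverting the explicit entry formulas --- the paper does precisely this, using unitarity together with $|A| = -1$ to write $A = \left(\begin{smallmatrix} z & w \\ \bar{w} & -\bar{z} \end{smallmatrix}\right)$ with $|z|^2+|w|^2=1$ and setting $a_1 = \Re(w)$, $a_2 = -\Im(w)$, $a_3 = \Re(z)$, $a_4 = -\Im(z)$. One minor caution: the dimension/parametrisation count you mention would not by itself suffice (equal dimensions do not imply equal sets); what closes the converse is the explicit inversion, which requires first observing that unitarity plus $|A| = -1$ forces the conjugate-symmetric form of $A$ --- a step you gesture at but should make explicit.
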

\begin{proof}
By definition, a general gate in $\GMq{2}{1}$ has the form
$a_1 c_1 + a_2 c_2 + a_3 c_3 + a_4 c_4$ with 
$\vb{a} \in \mathbb{R}^4$ and $\|\vb{a}\| = 1$.
Calculating the matrix explicitly, and using that
\[
c_1 = J(X, X), \quad c_2 = J(Y, Y), \quad c_3 = J(Z,Z), \quad c_4 = J(-i \one,i \one),
\]
one finds that it equals $J(A,B)$
with
\[A =\begin{pmatrix} a_3 -i a_4 & a_1 -i a_2 \\ a_1 +i a_2 & - (a_3 +i a_4)  \end{pmatrix}
\qquad \text{ and } \qquad
B 
=\begin{pmatrix} a_3 +i a_4 & a_1 -i a_2 \\ 
a_1 +i a_2 & - a_3 +i a_4 \end{pmatrix} = A^\dagger.
\]
Moreover, the determinant of $A$ equals:
\begin{align*}
    |A| &= - (a_3 - i a_4)(a_3 + i a_4) - (a_1 -i a_2)(a_1 +i a_2)
    \\ &= - (a_3^2 - (i a_4)^2) - (a_1^2 - (ia_2)^2)  \\
    &= - (a_1^2 + a_2^2 + a_3^2 + a_4^2) 
    \\&= -1 .
\end{align*}

Conversely, assume we have an odd matrix of the form $J(A, A^\dagger)$ with $|A| = -1$.
From the determinant condition, the matrix $A$ has the form
\[A = \begin{pmatrix} z & w \\ \Bar{w} & -\Bar{z} \end{pmatrix}
=
\begin{pmatrix} \Re(z) + i \Im(z) & \Re(w) + i \Im(w) \\ \Re(w) - i \Im(w) & -\Re(z) + i \Im(z) \end{pmatrix}.
\]
for some $w, z \in \CC$ with $|w|^2+|z|^2=1$.
Setting
\[a_1 = \Re(w), \quad a_2 = - \Im(w), \quad a_3 = \Re(z), \quad a_4 = -\Im(z), \]
we see that $J(A, A^\dagger) = a_1c_1 + a_2 c_2 + a_3 c_3 + a_4c_4$ and $\|\vb{a}\| = |w|^2+|z|^2 = 1$. 
\end{proof}

We now obtain a characterisation of all the two-qubit gates at higher levels of the hierarchy.

\begin{proposition} \label{prop:2qubit_hierarchy_characterisation}
For any $k \geq 2$, $\GMq{2}{k}$ is the set of unitaries of the form $G(A,B)$ or $J(A,B)$ with $|A|^{2^{k-2}} = |B|^{2^{k-2}}$.
\end{proposition}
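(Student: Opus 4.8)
The plan is induction on $k$, with the whole argument resting on a single computation: how conjugation by a $G$- or $J$-type gate moves a Majorana operator, tracked purely through determinants. The key structural input is that, relative to the even/odd parity splitting, $G(A,B)$ is block-diagonal and $J(A,B)$ is block-antidiagonal, which yields the composition and adjoint rules
\[
G(A,B)G(C,D)=G(AC,BD), \quad J(A,B)J(C,D)=G(AD,BC),
\]
\[
G(A,B)J(C,D)=J(AC,BD), \quad J(A,B)G(C,D)=J(AD,BC),
\]
together with $G(A,B)^\dagger = G(A^\dagger,B^\dagger)$ and $J(A,B)^\dagger = J(B^\dagger,A^\dagger)$. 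These I would record first, as routine consequences of the block form.

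The technical heart is the following lemma. Writing each Majorana as $c_\mu = J(P_\mu, P_\mu^\dagger)$ -- recall every $c_\mu$ is Hermitian and unitary with $|P_\mu| = -1$ -- and given any $U$ of the form $G(A,B)$ or $J(A,B)$, the composition rules give that $U c_\mu U^\dagger$ is again an odd, Hermitian gate $J(A'_\mu, (A'_\mu)^\dagger)$ whose first block has determinant $|A'_\mu| = -|A|/|B|$. The point to stress is that, using unitarity ($|A^\dagger| = |A|^{-1}$) and $|P_\mu| = -1$, this determinant equals $-r$ with $r \defeq |A|/|B|$ \emph{independently of $\mu$}; the second block then has determinant $-r^{-1}$, so the determinant ratio of the conjugated Majorana is $r^2$. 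Because this value does not depend on $\mu$, the universally quantified membership condition in \Cref{def:matchgate_hierarchy} imposes a single constraint, and both inclusions of the characterisation fall out of one biconditional.

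With the lemma established, the induction is short. By the corollary that all hierarchy gates are fermionic, any $U \in \GMq{2}{k}$ is of the form $G(A,B)$ or $J(A,B)$, and since $U c_\mu U^\dagger$ is automatically odd the condition $U c_\mu U^\dagger \in \GMq{2}{k-1} \cap \Odd{2}$ collapses to $U c_\mu U^\dagger \in \GMq{2}{k-1}$. For the base case $k=2$, \Cref{prop:2qubit_firstlevel_characterisation} says $J(A',(A')^\dagger) \in \GMq{2}{1}$ precisely when $|A'| = -1$; by the lemma this reads $-r = -1$, i.e.\ $|A| = |B|$, matching $|A|^{2^0} = |B|^{2^0}$. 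For the step from $k-1$ to $k$ (with $k \geq 3$), the inductive hypothesis characterises $\GMq{2}{k-1}$ by $|A'|^{2^{k-3}} = |(A')^\dagger|^{2^{k-3}}$; substituting $|A'| = -r$ and $|(A')^\dagger| = -r^{-1}$ and cancelling the common factor $(-1)^{2^{k-3}}$ yields $r^{2^{k-3}} = r^{-2^{k-3}}$, i.e.\ $r^{2^{k-2}} = 1$, which is exactly $|A|^{2^{k-2}} = |B|^{2^{k-2}}$.

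The main obstacle is bookkeeping rather than conceptual: getting the four composition rules and the two adjoint rules exactly right, and then carrying the determinants through unitarity and the persistent $-1$ coming from $|P_\mu| = -1$. It is worth flagging that $r$ is a unit-modulus complex number, so $r^{2^{k-2}} = 1$ genuinely permits $r$ to be any $2^{k-2}$-th root of unity (in particular $r=-1$ is allowed once $k \geq 3$) -- this is precisely why the levels strictly grow -- and that the base case is the only place where the sharper ``$|A'| = -1$'' condition of $\GMq{2}{1}$ (as opposed to a mere determinant ratio) is invoked.
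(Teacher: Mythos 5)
Your proposal is correct and follows essentially the same route as the paper's proof: conjugate a first-level gate $J(C,C^\dagger)$ with $|C|=-1$ by a fermionic gate, observe that the determinant ratio of the blocks of the result is $(|A|/|B|)^2$ independently of which Majorana is conjugated, and induct on $k$. If anything, you are more explicit in two places the paper leaves terse --- you carry out the odd case $U=J(A,B)$ rather than asserting ``a similar calculation'', and you derive the base case $k=2$ directly from \Cref{prop:2qubit_firstlevel_characterisation} instead of appealing to the identification of $\GMq{2}{2}$ with generalised matchgate circuits --- but these are refinements of the same argument, not a different one.
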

\begin{proof}
Let $U=G(A,B)$ be an even two-qubit gate where $A, B \in \U(2)$.
We consider its action by conjugation on an arbitrary Majorana operator $c_\mu$ -- and, in fact, on any two-qubit gate in the first level of the hierarchy.  By \Cref{prop:2qubit_firstlevel_characterisation}, such an operator has the form $J(C,C^\dagger)$ for 
$C$ be a one-qubit gate with $|C|=-1$.
Conjugating $J(C,C^\dagger)$ by $U$ yields
\begin{equation}\label{eq:conjugate_firstlevel_by_even}
U J(C,C^\dagger) U^\dagger \;=\; G(A,B)J(C,C^\dagger)G(A^\dagger,B^\dagger) \;=\; J(AC, BC^\dagger)G(A^\dagger, B^\dagger) \;=\; J(ACB^\dagger,BC^\dagger A^\dagger).
\end{equation}
Comparing the determinants of its components, 
\[
\frac{|ACB^\dagger|}{|B{C^\dagger}{A^\dagger}|}
\;=\;
\frac{|A| \, |B^\dagger|}{|B| \, |A^\dagger|}
\;=\;
\frac{|A| \, \overline{|B|}}{|B| \, \overline{|A|}}
\;=\;
\frac{|A|^2}{|B|^2}.
\]
where the last equality follows from unitarity of $A$ and $B$, which implies that their determinants have modulus one and thus are complex numbers $z$ satisfying $\bar{z} = 1/z$. 
For an odd gate $U=J(A,B)$ one can perform a similar calculation.

The proof of the statement now proceeds by induction.
The base case, $k=2$, follows from the definition of matchgates.
For the induction step, suppose that the result holds for a given $k$.
By the induction hypothesis, the gate in \Cref{eq:conjugate_firstlevel_by_even} is in $\GMq{2}{k}$ if and only if $|A C B^\dagger|^{2^{k-2}} \!= |B C^\dagger A^\dagger|^{2^{k-2}}$,
which by the calculation above is equivalent to 
$(|A|^2)^{2^{k-2}} \!= (|B|^2)^{2^{k-2}}$ and so to
$|A|^{2^{k-1}} \!= |B|^{2^{k-1}}$.
\end{proof}

Characteristic examples that illustrate this result are the $\SWAP = G(\one,X)$  and $CZ = G(Z, \one)$ gates, which are third-level gates and thus satisfy $|A|  = - |B|$.

\begin{remark}[On the efficiency of the two-qubit protocol] \label{rem: efficiency_2_qubit_protocol}
The characterisation provided by \Cref{prop:2qubit_hierarchy_characterisation} has a neat consequence for the structure of each level of the two-qubit matchgate hierarchy, namely that each of the sets $\GMq{2}{k}$ is a subgroup of $U(2^n)$.
This, in turn, has a significant impact on the efficiency of \Cref{sec:teleportation_protocol}'s protocol when implementing a two-qubit gate $U$ at level $k$ of the hierarchy.
Recall that the necessary post-measurement correction is the conjugation of \Cref{correction_2} by $U$.
In general, this is a product of up to four gates that belong to a strictly lower level of the matchgate hierarchy than $k$.
Each of these (up to four) gates may, in turn, if they are not yet matchgates, require further application of the gate teleportation protocol consuming further resource magic states.
This potentially four-way branching spells trouble, as the overall efficiency for implementing a gate in the hierarchy appears to scale exponentially with the level $k$. The fact that $\GMq{2}{k}$ is a group means that the multiplication of the (up to four) correction gates itself belongs to the hierarchy at a level strictly lower than $k$.
Implementing it may still require further application of the protocol, but the fact that the correction is now a single gate eliminates the branching that could occur when descending the levels of the hierarchy.
Consequently, in the worst case, the protocol scales linearly with the level $k$ of the hierarchy when implementing two-qubit gates.
\end{remark}

\subsection{Matchgate-equivalence classes}

In a setting where matchgates are treated as `free' operations,
it is natural to study unitary transformations up to an equivalence relation that identifies those that are interconvertible resources.
Two gates are said to be \emph{(generalised-)matchgate-equivalent} which can be obtained from one another by multiplying on both sides with (generalised) matchgate circuits.

\Cref{prop:2qubit_hierarchy_characterisation} provides the basis for a classification of two-qubit (two-mode) fermionic gates in the hierarchy under such equivalence relations.
In summary, any even two-qubit gate $G(A,B)$ is classified up to matchgate equivalence by the phase $\phi \in [0,2\pi)$ given by $e^{i\phi} = |A|/|B|$,
with the $k$-level gates corresponding to the $2^{k-2}$-th roots of unity.
This correspondence is depicted in \Cref{fig:levels_unit_circles}.

\begin{proposition}\label{prop:2qubit_even_MGequiv}
Any two-qubit even unitary gate is matchgate-equivalent to the controlled-phase gate $C_\phi \defeq G(P_\phi, \one)$,
where $P_\phi = \diag(1,e^{i\phi})$, for a unique phase $\phi \in [0,2\pi)$.
\end{proposition}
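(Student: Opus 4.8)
The plan is to show that any two-qubit even unitary $G(A,B)$ can be brought into the canonical form $C_\phi = G(P_\phi, \one)$ by multiplying on both sides with matchgates, and then to argue that the resulting $\phi$ is the unique invariant of this equivalence class. The key structural fact I would exploit is that matchgates act block-diagonally: composing $G(A,B)$ with another matchgate $G(A',B')$ on either side multiplies the even block by $A'$ (resp.~$A$) and the odd block by $B'$ (resp.~$B$), since $G(A',B')G(A,B) = G(A'A, B'B)$. This reduces the whole problem to simultaneously normalising the two $\U(2)$ factors $A$ and $B$ under left and right multiplication by matchgates.

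First I would observe that the invariant preserved under matchgate equivalence is precisely the determinant ratio $|A|/|B|$. Indeed, if we multiply by matchgates $G(A_1,B_1)$ and $G(A_2,B_2)$ on the two sides, the determinant condition $|A_i| = |B_i|$ forces the new ratio $|A_2 A A_1| / |B_2 B B_1| = |A_2||A||A_1| / (|B_2||B||B_1|) = |A|/|B|$ to be unchanged. So $e^{i\phi} \defeq |A|/|B|$, a unit complex number by unitarity, is a genuine invariant; this pins down which $\phi$ can appear and establishes that distinct $\phi \in [0,2\pi)$ give inequivalent gates.

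Next I would carry out the reduction explicitly. Given $G(A,B)$, I want to find matchgates that convert it to $G(P_\phi, \one)$ with $e^{i\phi} = |A|/|B|$. The idea is to use the freedom in $\SU(2)$: any $A \in \U(2)$ can be written as $A = (\text{phase}) \cdot \widetilde{A}$ with $\widetilde{A} \in \SU(2)$, and $\SU(2)$ elements are freely available as matchgate components (take $A' \in \SU(2)$, $B' \in \SU(2)$, which trivially satisfy $|A'|=|B'|=1$). Multiplying on the right by $G(B^{-1}, B^{-1})$ — which requires $|B^{-1}|=|B^{-1}|$, automatically true — sends $G(A,B) \mapsto G(AB^{-1}, \one)$, collapsing the odd block to the identity. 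It then remains to normalise the even block $AB^{-1}$, whose determinant is $|A|/|B| = e^{i\phi}$, into the diagonal form $P_\phi = \diag(1, e^{i\phi})$. Since $AB^{-1}$ and $P_\phi$ have equal determinant $e^{i\phi}$, they differ by left and right multiplication by $\SU(2)$ matrices (one can write $AB^{-1} = U P_\phi V$ with $U,V \in \SU(2)$, \eg via a singular-value-type or Euler-angle argument), and these $\SU(2)$ matrices are realised as matchgates $G(U,U)$, $G(V,V)$.

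The main obstacle I anticipate is the final normalisation step: verifying that any $M \in \U(2)$ with $|M| = e^{i\phi}$ can be written as $U P_\phi V$ with $U, V \in \SU(2)$, and doing so in a way that keeps the odd block equal to $\one$ throughout. The subtlety is that the same matchgates act simultaneously on both blocks, so I cannot normalise the even block without disturbing the odd block unless I choose the conjugating matchgates to be of the form $G(W,W)$ with a single $W$ acting on both — which fixes $\one$ in the odd slot only if $W \in \SU(2)$ leaves $\one$ invariant, \ie only under conjugation $W \one W^{-1} = \one$, which indeed holds for any $W$. So the clean route is: after reaching $G(AB^{-1}, \one)$, conjugate by $G(W, W)$ to diagonalise the $\SU(2)$-part of $AB^{-1}$; conjugation preserves the identity odd block and the determinant $e^{i\phi}$ of the even block, leaving a gate of the form $G(\diag(e^{i\theta}, e^{i(\phi-\theta)}), \one)$, which a final global-phase-free diagonal matchgate adjustment turns into $G(P_\phi,\one) = C_\phi$. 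Uniqueness of $\phi$ then follows immediately from invariance of the determinant ratio established above.
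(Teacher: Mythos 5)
Your proof is correct, and its uniqueness half coincides exactly with the paper's: both extract the determinant ratio $|A|/|B|$ as a complete matchgate-equivalence invariant from block multiplicativity $G(A',B')G(A,B) = G(A'A,B'B)$ together with the matchgate conditions $|A'|=|B'|$. Where you diverge is the existence step. The paper does it in a single move: it left-multiplies by the one matchgate $G(P_\phi A^\dagger, B^\dagger)$, which is a matchgate precisely because $|P_\phi A^\dagger| = e^{i\phi}\,\overline{|A|} = 1/|B| = |B^\dagger|$, so that $G(P_\phi A^\dagger, B^\dagger)\,G(A,B) = G(P_\phi, \one)$ immediately. You instead run a three-step normal-form reduction: right-multiply by $G(B^\dagger,B^\dagger)$ to clear the odd block, conjugate by $G(W,W)$ to diagonalise $AB^\dagger$, then rebalance the diagonal phases. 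This route is sound; for the last step the explicit gate is $G(\diag(e^{-i\theta_1},e^{i\theta_1}),\one)$, a matchgate since its first block has determinant $1 = |\one|$, turning $G(\diag(e^{i\theta_1},e^{i\theta_2}),\one)$ into $C_\phi$ with $\phi = \theta_1+\theta_2$. Note also that the worries in your final paragraph dissolve: you never need an $\mathrm{SU}(2)$ restriction, since any $G(W,W)$ with $W \in \U(2)$ is automatically a matchgate ($|W|=|W|$), and the decomposition $M = U P_\phi V$ is unnecessary because unitary diagonalisation suffices, as you ultimately observe. The trade-off between the two arguments: the paper's single multiplication is shorter but the correcting matchgate appears somewhat out of thin air, whereas your reduction is longer but exhibits $C_\phi$ as a genuine normal form and makes transparent which freedoms (block multiplication versus conjugation) are consumed at each stage.
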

\begin{proof}
Given an even unitary gate $G(A,B)$, both $A$ and $B$ are unitaries and thus their determinants have unit modulus.
Putting $e^{i \phi} = |A|/|B|$,
the gate
$G(P_\phi A^\dagger,B^\dagger)$
is a matchgate, since
\[
|P_\phi A^\dagger|
\;=\;
|P_\phi| \, |A^\dagger|
\;=\;
e^{i \phi} \overline{|A|}
\;=\;
\frac{|A|}{|B|}\,\frac{1}{|A|}
\;=\;
\frac{1}{|B|}
\;=\;
\overline{|B|}
\;=\;
|B^\dagger| .
\]
Moreover,
\[G(P_\phi A^\dagger,B^\dagger) \, G(A,B) \;=\; G(P_\phi A^\dagger A,B^\dagger B) \;=\; G(P_\phi,\one)\]
by unitarity of $A$ and $B$.
Hence,
$G(A,B)$ is matchgate-equivalent to $G(P_\phi,\one)$ for $\phi = \arg(|A|/|B|)$.

For the uniqueness,
suppose that $C_\phi$ and $C_{\phi'}$ are matchgate equivalent, \ie
that there exist matchgates
 $G(C, D)$ and $G(E, F)$ such that $G(C, D) C_{\phi} G(E, F) = C_{\phi^{\prime}}$.
 Taking the determinants on both sides yields
\[
    |C| \, |E| \, e^{i \phi} = e^{i \phi^{\prime}}
    \qquad \text{ and } \qquad 
    |D| \, |F| = 1 \\
\]
Since $|C| = |D|$ and $|E| = |F|$, this leads to $e^{i \phi} = e^{i \phi^{\prime}}$, hence $\phi = \phi^{\prime} \pmod{2 \pi}$. 
In other words, multiplying a gate $G(A,B)$ by matchgates $G(C,D)$ and $G(E,F)$ on each side leaves the quotient of determinants of its components invariant, i.e. $G(C,D)G(A,B)G(E,F) = G(CAE,DBF)$ satisfies
\[\frac{|CAE|}{|DBF|} = \frac{|C| |A| |E|}{|D| |B| |F|} = \frac{|A|}{|B|} .\]
\end{proof}

\begin{figure}[htb]
\centering

\def\scalecirc{.893}
\begin{tikzpicture}
[scale=\scalecirc,cap=round,>=latex]
    \draw[->] (-1.4cm,0cm) -- (1.4cm,0cm)  node[right,fill=white] {{\footnotesize$\mathrm{Re}$}};
    \draw[->] (0cm,-1.4cm) -- (0cm,1.4cm)  node[above,fill=white] {{\footnotesize$\mathrm{Im}$}}; 
    \draw (0cm,0cm) circle(1.0cm);
    \foreach \x in {0} { 
      \filldraw[black] (\x:1) circle (2pt); 
    }
    \path (0,-1.7) node {{\footnotesize $\GMq{2}{2}$}};
\end{tikzpicture}
\begin{tikzpicture}
[scale=\scalecirc,cap=round,>=latex]
    \draw[->] (-1.4cm,0cm) -- (1.4cm,0cm)  node[right,fill=white] {{\footnotesize$\mathrm{Re}$}};
    \draw[->] (0cm,-1.4cm) -- (0cm,1.4cm)  node[above,fill=white] {{\footnotesize$\mathrm{Im}$}}; 
    \draw (0cm,0cm) circle(1.0cm);
    \foreach \x in {0,180} { 
      \filldraw[black] (\x:1) circle (2pt); 
    }
    \foreach \x in {0} { 
      \filldraw[gray!75!white] (\x:1) circle (2pt); 
    }
    \path (0,-1.7) node {{\footnotesize $\GMq{2}{3}$}};
\end{tikzpicture}
\begin{tikzpicture}
[scale=\scalecirc,cap=round,>=latex]
    \draw[->] (-1.4cm,0cm) -- (1.4cm,0cm)  node[right,fill=white] {{\footnotesize$\mathrm{Re}$}};
    \draw[->] (0cm,-1.4cm) -- (0cm,1.4cm)  node[above,fill=white] {{\footnotesize$\mathrm{Im}$}}; 
    \draw (0cm,0cm) circle(1.0cm);
    \foreach \x in {0, 90, 180, 270} { 
      \filldraw[black] (\x:1) circle (2pt); 
    }
    \foreach \x in {0,180} {
      \filldraw[gray!75!white] (\x:1) circle (2pt); 
    }
    \path (0,-1.7) node {{\footnotesize $\GMq{2}{4}$}};
\end{tikzpicture}
\begin{tikzpicture}
[scale=\scalecirc,cap=round,>=latex]
    \draw[->] (-1.4cm,0cm) -- (1.4cm,0cm)  node[right,fill=white] {{\footnotesize$\mathrm{Re}$}};
    \draw[->] (0cm,-1.4cm) -- (0cm,1.4cm)  node[above,fill=white] {{\footnotesize$\mathrm{Im}$}}; 
    \draw (0cm,0cm) circle(1.0cm);
    \foreach \x in {0, 45, 90, 135, 180, 225, 270, 315} { 
      \filldraw[black] (\x:1) circle (2pt); 
    }
    \foreach \x in {0, 90, 180, 270} {
      \filldraw[gray!75!white] (\x:1) circle (2pt); 
    }
    \path (0,-1.7) node {{\footnotesize $\GMq{2}{5}$}};
\end{tikzpicture}
\begin{tikzpicture}
[scale=\scalecirc,cap=round,>=latex]
    \draw[->] (-1.4cm,0cm) -- (1.4cm,0cm)  node[right,fill=white] {{\footnotesize$\mathrm{Re}$}};
    \draw[->] (0cm,-1.4cm) -- (0cm,1.4cm)  node[above,fill=white] {{\footnotesize$\mathrm{Im}$}}; 
    \draw (0cm,0cm) circle(1.0cm);
    \foreach \x in {0, 22.5, 45, 67.5, 90, 112.5, 135, 157.5, 180, 202.5, 225, 247.5, 270, 292.5, 315, 337.5} { 
      \filldraw[black] (\x:1) circle (2pt); 
    }
    \foreach \x in {0, 45, 90, 135, 180, 225, 270, 315} {
      \filldraw[gray!75!white] (\x:1) circle (2pt); 
    }
    \path (0,-1.7) node {{\footnotesize $\GMq{2}{6}$}};
\end{tikzpicture}
\caption{Illustration of the correspondence between equivalence classes of two-qubit $k$-level gates and $2^{k-2}$-th roots of unity.
The complex number $e^{i \phi}$ represents the equivalence class of the gate $C_\phi$ consisting of the gates $G(A,B)$ with $|A|/|B|=e^{i \phi}$.
Each image depicts the equivalence classes that belong to a level of the matchgate hierarchy
(from the second, $\GMq{2}{2}$, to the sixth, $\GMq{2}{6}$), with the darker dots indicating those that appear for the first time at the level at hand.}
\label{fig:levels_unit_circles}
\end{figure}
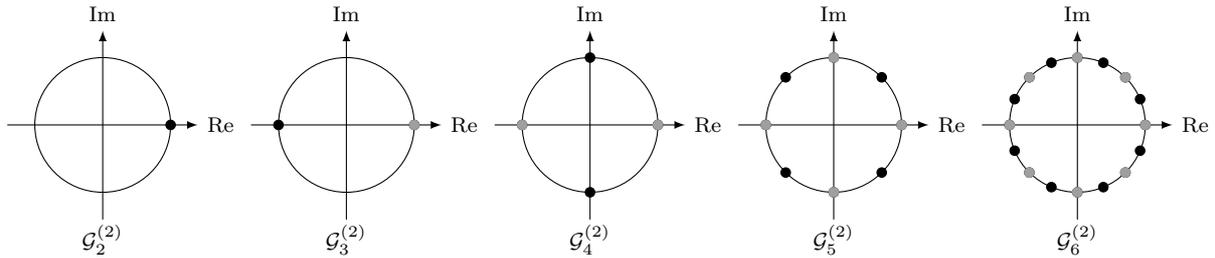

\begin{corollary}\label{cor:2qubit_evenhierarchy_MGequi}
      There are $2^{k-2}$ classes of even two-qubit $k$-level gates, \ie gates in $\Mq{2}{k}$, under matchgate equivalence.
      Representatives for each class are given by the gates $C_\phi$ for $\phi \in \setdef{\frac{2 j \pi}{2^{k-2}}}{j = 0, \ldots, 2^{k-2}-1}$ so that $e^{i\phi}$ are the
      $2^{k-2}$-th roots of unity.
  \end{corollary}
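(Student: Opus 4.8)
The plan is to assemble this statement from the two preceding results: the closed-form level characterisation of \Cref{prop:2qubit_hierarchy_characterisation} and the matchgate-equivalence normal form of \Cref{prop:2qubit_even_MGequiv}. The single idea driving the proof is to translate the algebraic condition defining level $k$ into a root-of-unity condition on the invariant phase $e^{i\phi} = |A|/|B|$ attached to each even gate $G(A,B)$.

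First I would recall that the even $k$-level gates form $\Mq{2}{k} = \GMq{2}{k} \cap \Even{2}$, which by \Cref{prop:2qubit_hierarchy_characterisation} are exactly the $G(A,B)$ satisfying $|A|^{2^{k-2}} = |B|^{2^{k-2}}$. Since $A$ and $B$ are unitary, their determinants have unit modulus, so this is equivalent to $(|A|/|B|)^{2^{k-2}} = 1$; that is, putting $e^{i\phi} = |A|/|B|$, the gate lies in $\Mq{2}{k}$ precisely when $e^{i\phi}$ is a $2^{k-2}$-th root of unity. Next I would invoke \Cref{prop:2qubit_even_MGequiv}, which sends every even gate $G(A,B)$ to a \emph{unique} phase $\phi \in [0,2\pi)$ with $e^{i\phi} = |A|/|B|$ such that $G(A,B)$ is matchgate-equivalent to $C_\phi$. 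The key compatibility — already isolated in the uniqueness argument of that proposition — is that conjugating, or more precisely multiplying on either side, by matchgates $G(C,D)$ and $G(E,F)$ leaves the determinant ratio invariant, since $|C|=|D|$ and $|E|=|F|$ give $|CAE|/|DBF| = |A|/|B|$. Hence matchgate equivalence both preserves this phase and preserves membership in $\Mq{2}{k}$, so the assignment $G(A,B) \mapsto e^{i\phi}$ descends to a well-defined bijection between matchgate-equivalence classes of gates in $\Mq{2}{k}$ and the $2^{k-2}$-th roots of unity.

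To finish I would count and verify the converse. The $2^{k-2}$-th roots of unity are exactly $e^{2\pi i j/2^{k-2}}$ for $j = 0, \ldots, 2^{k-2}-1$, yielding the claimed representatives $C_\phi$ with $\phi = \frac{2\pi j}{2^{k-2}}$; distinct roots give distinct phases and hence, by the uniqueness in \Cref{prop:2qubit_even_MGequiv}, distinct classes, so there are exactly $2^{k-2}$ of them. For the converse inclusion I would check that each such $C_\phi = G(P_\phi, \one)$ genuinely sits in $\Mq{2}{k}$: here $|P_\phi| = e^{i\phi}$ and $|\one| = 1$, so the level condition reads $e^{i\phi\,2^{k-2}} = 1$, which holds precisely for these $\phi$. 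I do not expect a real obstacle, since the statement is essentially a bookkeeping corollary of the two propositions; the only point needing genuine care is confirming that matchgate equivalence is compatible with the hierarchy — that the determinant-ratio invariant is preserved by multiplication with matchgates — so that the correspondence is defined on equivalence classes and not merely on individual gates.
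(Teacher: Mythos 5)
Your proposal is correct and follows essentially the same route as the paper, which states this corollary as an immediate consequence of \Cref{prop:2qubit_hierarchy_characterisation} and \Cref{prop:2qubit_even_MGequiv}: the level condition $|A|^{2^{k-2}} = |B|^{2^{k-2}}$ is exactly the statement that the invariant $e^{i\phi} = |A|/|B|$ is a $2^{k-2}$-th root of unity, and the determinant-ratio invariance isolated in the uniqueness argument of \Cref{prop:2qubit_even_MGequiv} makes the correspondence well defined on equivalence classes. Your added verification that each representative $C_\phi$ itself lies in $\Mq{2}{k}$ is the right (if routine) completeness check, and nothing in the argument needs repair.
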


For odd gates, the picture is slightly more complicated.
We know that an arbitrary two-qubit odd gate $J(A,B)$ can be decomposed as $G(A,B) \, J(\one,\one)$.
This provides an already mentioned bijection between even and odd gates,
and thus there are corresponding equivalence classes of odd $k$-level gates under matchgate equivalence.
But considering generalised-matchgate equivalence, where odd matchgate circuits are also treated as ``free'' operations,
there are some further identifications.
Note that $J(A,B) = G(A,B) \, J(\one,\one) = J(\one,\one) \, G(B,A)$.
Therefore, in terms of the quotient $|A|/|B|$ of a gate $G(A,B)$,
multiplying by an odd matchgate might either leave it invariant or invert it, changing it to $|B|/|A|$.
Consequently, the equivalence classes of $C_\phi$ and $C_{-\phi}$ collapse.
This discussion is summarised by the following statements.

\begin{proposition}
Any two-qubit fermionic unitary gate is matchgate-equivalent to the controlled-phase gate $C_\phi \defeq G(P_\phi, \one)$,
where $P_\phi = \diag(1,e^{i\phi})$, for a unique phase $\phi \in [0,\pi]$.
\end{proposition}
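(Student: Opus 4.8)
The plan is to build on \Cref{prop:2qubit_even_MGequiv}, which already classifies the \emph{even} two-qubit gates up to matchgate equivalence by the quotient $|A|/|B| = e^{i\phi}$ with $\phi \in [0,2\pi)$, and to supply the two extra ingredients needed to pass to \emph{generalised}-matchgate equivalence on all fermionic gates: folding odd gates into the even case, and folding $\phi$ together with $-\phi$. Throughout, I use the $k=2$ instance of \Cref{prop:2qubit_hierarchy_characterisation}, which tells us that every two-qubit generalised matchgate is of the form $G(C,D)$ or $J(C,D)$ with $|C|=|D|$.

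First I would reduce the odd case to the even one. Since $J(A,B) = J(\one,\one)\,G(B,A)$ and $J(\one,\one)$ is an odd matchgate (it is $J(C,D)$ with $|C|=|D|=1$, hence lies in $\GMq{2}{2}$), every odd gate $J(A,B)$ is generalised-matchgate-equivalent to the even gate $G(B,A)$. Thus it suffices to treat even gates, to which \Cref{prop:2qubit_even_MGequiv} applies directly.

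Next I would establish the key collapse $C_\phi \sim C_{-\phi}$. Conjugating an even gate by $J(\one,\one)$ swaps its two blocks: using the identities $G(A,B)\,J(\one,\one) = J(A,B)$ and $J(\one,\one)\,G(B,A) = J(A,B)$ (both verified by a direct $4\times 4$ multiplication), together with $J(\one,\one)^2=\one$, one obtains $J(\one,\one)\,G(A,B)\,J(\one,\one) = G(B,A)$. In particular $J(\one,\one)\,C_\phi\,J(\one,\one) = G(\one,P_\phi)$, whose quotient is $1/e^{i\phi} = e^{-i\phi}$; combined with \Cref{prop:2qubit_even_MGequiv} this yields $C_\phi \sim C_{-\phi}$. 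Existence of the claimed normal form follows: an arbitrary fermionic gate is first reduced to some $C_\psi$ with $\psi\in[0,2\pi)$ as above, and if $\psi>\pi$ one replaces it by $C_{2\pi-\psi}=C_{-\psi}$, landing in $[0,\pi]$.

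The main work is the uniqueness, i.e.\ that $\phi\in[0,\pi]$ is a complete invariant. For this I would show that the two-sided generalised-matchgate action on even gates sends the quotient $q=|A|/|B|$ either to $q$ or to $q^{-1}$, so that the orbit invariant is the unordered pair $\{e^{i\phi},e^{-i\phi}\}$, which pins down a unique $\phi\in[0,\pi]$ by injectivity of $\phi\mapsto\{e^{i\phi},e^{-i\phi}\}$ there. Concretely, multiplying $G(A,B)$ on either side by an even factor $G(C,D)$ with $|C|=|D|$ preserves $q$ (exactly as in the proof of \Cref{prop:2qubit_even_MGequiv}), while multiplying by an odd factor inverts it and flips the parity; e.g.\ the same block-swap computation gives $J(C,D)\,G(A,B)=J(CB,DA)$ with quotient $|CB|/|DA|=q^{-1}$. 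The one delicate point, and the step I expect to be the main obstacle, is the bookkeeping of the full two-sided, many-factor action: one must confirm that composing an arbitrary product of even and odd matchgate factors on the left and right can only preserve or invert $q$ and never produce anything else, so that $\{q,q^{-1}\}$ is genuinely a complete invariant. Once that is in hand, the injectivity observation closes the argument.
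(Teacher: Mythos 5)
Your proposal is correct and takes essentially the same route as the paper's own (discussion-level) argument: the paper likewise reduces odd gates via $J(A,B) = G(A,B)\,J(\one,\one) = J(\one,\one)\,G(B,A)$, observes that odd matchgate factors either preserve or invert the determinant quotient $|A|/|B|$, and concludes that the classes of $C_\phi$ and $C_{-\phi}$ collapse, combined with \Cref{prop:2qubit_even_MGequiv} for the even case. The ``delicate point'' you flag about many-factor products dissolves immediately: by \Cref{prop:2qubit_hierarchy_characterisation} with $k=2$, every two-qubit generalised matchgate circuit is itself a single gate of the form $G(C,D)$ or $J(C,D)$ with $|C|=|D|$, so two-sided equivalence involves only one factor per side, and the four parity cases you already computed show the unordered pair $\{q,q^{-1}\}$ is a complete invariant.
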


\begin{corollary}\label{cor:2qubit_hierarchy_GMGequi}
      There are $2^{k-3}+1$ classes of fermionic two-qubit $k$-level gates, \ie gates in $\GMq{2}{k}$, under generalised matchgate equivalence.
      Representatives for each class are given by the gates $C_\phi$ for
      $\phi \in \setdef{\frac{2 j \pi}{2^{k-2}}}{j = 0, \ldots, 2^{k-3}}$
      so that $e^{i\phi}$ are the
      $2^{k-2}$-th roots of unity with nonnegative imaginary part.
  \end{corollary}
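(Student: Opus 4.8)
The plan is to derive the corollary as a counting consequence of the preceding proposition, in exact parallel with how Corollary~\ref{cor:2qubit_evenhierarchy_MGequi} is obtained from Proposition~\ref{prop:2qubit_even_MGequiv}. That proposition identifies the generalised-matchgate-equivalence classes of fermionic two-qubit gates with the phases $\phi \in [0,\pi]$ through the representatives $C_\phi$. So it suffices to determine for which such $\phi$ the representative $C_\phi$ lies in the level $\GMq{2}{k}$, and then to count those $\phi$.

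First I would check that membership in $\GMq{2}{k}$ is an invariant of the generalised-matchgate-equivalence class, so that the level of a gate is well-defined on classes. This uses that each $\GMq{2}{k}$ is a group (Remark~\ref{rem: efficiency_2_qubit_protocol}) that contains all generalised matchgates, the latter forming $\GMq{2}{2} \subseteq \GMq{2}{k}$ by Proposition~\ref{prop:nested}. Hence if $V = G\,U\,G'$ for generalised matchgates $G,G'$, then $U \in \GMq{2}{k}$ if and only if $V = G\,U\,G' \in \GMq{2}{k}$ (and conversely $U = G^{-1} V G'^{-1}$); in particular the whole class of $C_\phi$ sits inside $\GMq{2}{k}$ precisely when $C_\phi$ does.

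Next I would pin down which representatives are $k$-level. Writing $C_\phi = G(P_\phi,\one)$ with $|P_\phi| = e^{i\phi}$ and $|\one| = 1$, Proposition~\ref{prop:2qubit_hierarchy_characterisation} gives $C_\phi \in \GMq{2}{k}$ if and only if $|P_\phi|^{2^{k-2}} = |\one|^{2^{k-2}}$, i.e.\ $(e^{i\phi})^{2^{k-2}} = 1$, i.e.\ $e^{i\phi}$ is a $2^{k-2}$-th root of unity. It then remains to count such roots with $\phi \in [0,\pi]$: the inequality $\tfrac{2\pi j}{2^{k-2}} \le \pi$ holds exactly for $j = 0, 1, \dots, 2^{k-3}$, yielding $2^{k-3}+1$ admissible phases, which are precisely the $2^{k-2}$-th roots of unity with nonnegative imaginary part, as claimed.

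Since the preceding proposition already does the structural work, the remainder is a counting exercise and I do not expect a serious obstacle; the only point needing care is the bookkeeping at the two real roots $e^{i\phi} = \pm 1$ (i.e.\ $\phi = 0, \pi$). These are exactly the fixed points of the collapse $\phi \mapsto -\phi$ that passes from the $2^{k-2}$ even classes of Corollary~\ref{cor:2qubit_evenhierarchy_MGequi} to the fermionic classes. This also furnishes a sanity check on the count: the $2^{k-2}-2$ non-real roots pair up into $2^{k-3}-1$ classes, and adjoining the two self-paired real roots recovers the total $(2^{k-3}-1) + 2 = 2^{k-3}+1$.
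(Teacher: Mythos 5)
Your proposal is correct and takes essentially the same route as the paper, which presents this corollary as a direct consequence of the preceding proposition (the $\phi \in [0,\pi]$ classification obtained by collapsing $C_\phi \sim C_{-\phi}$ via odd matchgates) combined with the determinant-power criterion of \Cref{prop:2qubit_hierarchy_characterisation} and the same count of $2^{k-2}$-th roots of unity with $\phi \in [0,\pi]$. Your explicit verification that membership in $\GMq{2}{k}$ is invariant on equivalence classes -- using the group structure from \Cref{rem: efficiency_2_qubit_protocol} together with $\GMq{2}{2} \subseteq \GMq{2}{k}$ from \Cref{prop:nested} -- makes rigorous a step the paper leaves implicit, and is a welcome addition rather than a deviation.
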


\section{Hierarchical fermionic Stone--von Neumann theorem} \label{sec:fermionic_stone_von_neumman_theorem}

We prove a hierarchy-aware fermionic version of the Stone--von Neumann theorem;
cf.\ the analogous result for the Clifford hierarchy in Ref.~\cite{de_Silva_2021}.
This can be seen as a stepping stone to elucidating the structure of the matchgate hierarchy for arbitrary number of qubits.
We start by considering a general statement without reference to the hierarchy, from which the hierarchical version follows.

Consider a set $\{c_\mu\}_{\mu=1}^{2n}$ of $2n$ Hermitian operators satisfying the CAR. The monomials of the form
$c_{\mu_1} \cdots c_{\mu_\lmu}$ with $\mu_1 < \cdots < \mu_\lmu$ are linearly independent
and form a basis of the space $M_{2^n}(\CC)$  of $n$-qubit matrices \cite[Theorem 2]{lee1947clifford}.

\begin{theorem}[fermionic Stone--von Neumann theorem]\label{thm:fermionicSvN}
Given two sets $\{c_{\mu}\}_{\mu=1}^{2n}$ and $\{d_{\mu}\}_{\mu=1}^{2n}$ of
$n$-qubit operators which satisfy the canonical anticommutation relations (CAR), there is an $n$-qubit unitary $U \in \U(2^n)$, unique up to a phase, such that $U^\dagger c_{\mu}U = d_{\mu}$.
\end{theorem}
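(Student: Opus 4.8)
The plan is to recognise the two families $\{c_\mu\}$ and $\{d_\mu\}$ as two copies of the same irreducible representation of the Clifford algebra on $\CC^{2^n}$, and to exploit the simplicity of $M_{2^n}(\CC)$ to intertwine them, upgrading the intertwiner to a unitary afterwards.

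First I would construct an algebra automorphism carrying the $c$'s to the $d$'s. Since the ordered monomials $c_S \defeq c_{\mu_1}\cdots c_{\mu_\lmu}$ (with $\mu_1<\cdots<\mu_\lmu$) form a basis of $M_{2^n}(\CC)$, and likewise the $d_S$, I define a linear map $\alpha\colon M_{2^n}(\CC)\to M_{2^n}(\CC)$ on this basis by $\alpha(c_S)\defeq d_S$. The crucial point is that $\alpha$ is an algebra homomorphism: reducing a product $c_S c_T$ to a single ordered monomial $\pm c_R$ uses only the CAR (anticommutation and $c_\mu^2=\one$), and since the $d_\mu$ satisfy the \emph{identical} relations, the same reduction gives $d_S d_T = \pm d_R$ with the \emph{very same} sign, so $\alpha(c_S c_T)=\alpha(c_S)\alpha(c_T)$. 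As $\alpha$ sends one basis to another it is bijective, hence an automorphism of $M_{2^n}(\CC)$ with $\alpha(c_\mu)=d_\mu$.

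Next I would make $\alpha$ inner. Every automorphism of the full matrix algebra $M_{2^n}(\CC)$ is inner (Skolem--Noether), so there is an invertible $V$ with $V M V^{-1}=\alpha(M)$ for all $M$; in particular $V c_\mu V^{-1}=d_\mu$. The main remaining step, and the one needing care, is to turn $V$ into a unitary. For this I would use Hermiticity: taking adjoints in $V c_\mu V^{-1}=d_\mu$ and using $c_\mu^\dagger=c_\mu$, $d_\mu^\dagger=d_\mu$ gives $(V^\dagger)^{-1} c_\mu V^\dagger = d_\mu = V c_\mu V^{-1}$, whence the positive operator $P\defeq V^\dagger V$ satisfies $c_\mu P = P c_\mu$ for every $\mu$. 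Since the $c_\mu$ generate all of $M_{2^n}(\CC)$, the operator $P$ is central, so $P=\lambda\one$ with $\lambda>0$; thus $V/\sqrt{\lambda}$ is unitary, and setting $U\defeq (V/\sqrt{\lambda})^\dagger$ yields a unitary with $U^\dagger c_\mu U = d_\mu$. (Alternatively, to avoid quoting Skolem--Noether, one can produce $V$ explicitly by averaging a generic matrix $X$ over the finite Majorana group generated by the $c_\mu$, sending it to the corresponding group element for the $d_\mu$; the standard reindexing argument shows this intertwines, and it is generically invertible since both families act irreducibly by the basis property.)

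Finally, uniqueness up to a phase follows from the same commutant observation that powered the unitarity step. If $U_1,U_2$ both satisfy $U_i^\dagger c_\mu U_i = d_\mu$, then $W\defeq U_1 U_2^\dagger$ obeys $W c_\mu W^\dagger = c_\mu$, so $W$ commutes with every $c_\mu$, hence $W=e^{i\phi}\one$ and $U_1=e^{i\phi}U_2$. I expect the only genuinely delicate points to be the sign-bookkeeping that makes $\alpha$ a homomorphism and the commutant argument that both upgrades the algebraic intertwiner to a unitary one and forces uniqueness; both places are exactly where the simplicity of $M_{2^n}(\CC)$ (equivalently, the triviality of the commutant of the $c_\mu$) does the real work.
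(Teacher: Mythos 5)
Your proposal is correct and follows essentially the same route as the paper's proof: define the map on the ordered-monomial basis, observe that the sign bookkeeping in reducing products uses only the CAR (so the map is an algebra automorphism sending $c_\mu \mapsto d_\mu$), and invoke Skolem--Noether on the full matrix algebra $M_{2^n}(\CC)$. The only differences are minor: where the paper additionally verifies adjoint-preservation on the basis (the $(-1)^{\binom{\lmu}{2}}$ computation) so that the inner automorphism is implemented by a unitary, you instead unitarise the Skolem--Noether intertwiner directly by showing $V^\dagger V$ is central and rescaling -- an equivalent argument -- and you make explicit the uniqueness-up-to-phase step via the trivial commutant, which the paper records in the proof of \Cref{cor:fSvN-bijective} rather than in the theorem's proof itself.
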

\begin{proof} Set $\phi(c_{\mu_1} \cdots c_{\mu_\lmu}) = d_{\mu_1}\cdots d_{\mu_\lmu}$ for $1 \leq \mu_1 < \cdots < \mu_\lmu \leq 2n$.
By the observation above, this sends a basis of $M_{2^n}(\CC)$ to another, so it extends uniquely to a linear automorphism $\phi \colon M_{2^n}(\CC) \to M_{2^n}(\CC)$.

We now show that it preserves adjoints and multiplication. It is enough to check this on basis elements, \ie on the monomials of the form above.
For the adjoint, note that given $\mu_1 < \cdots < \mu_\lmu$,
\[
     \left(c_{\mu_1}  \dots c_{\mu_\lmu}\right)^{\dagger}
    \;\;=\;\; 
     c_{\mu_\lmu}^\dagger  \cdots c_{\mu_1}^\dagger
    \;\;=\;\;
     c_{\mu_\lmu}  \cdots c_{\mu_1}
    \;\;=\;\;
     (-1)^{\binom{\lmu}{2}}\, c_{\mu_1}  \cdots c_{\mu_\lmu} .
\]
The derivation only uses the fact that $\{c_\mu\}$ satisfy the CAR, and so the same holds for $\{d_\mu\}$.
Hence,
\begin{align*}
\phi\left(\left(c_{\mu_1}  \dots c_{\mu_\lmu}\right)^{\dagger}\right)
 \;\;=\;\; &
\phi \left( (-1)^{\binom{\lmu}{2}}\, c_{\mu_1}  \cdots c_{\mu_\lmu}\right)
\\ \;\;=\;\; &
(-1)^{\binom{\lmu}{2}}\, \phi\left(c_{\mu_1}  \cdots c_{\mu_\lmu}\right)
\\  \;\;=\;\; &
(-1)^{\binom{\lmu}{2}}\, d_{\mu_1}  \cdots d_{\mu_\lmu}
\\  \;\;=\;\; &
\left( d_{\mu_1}  \cdots d_{\mu_\lmu}\right)^\dagger
\\  \;\;=\;\; &
\phi\left(c_{\mu_1}  \cdots c_{\mu_\lmu}\right)^\dagger
\end{align*}
For the multiplication, similarly note that given $\mu_1 < \cdots < \mu_\lmu$ and $\nu_1 < \cdots < \nu_\lnu$,
\[
(c_{\mu_1} \dots c_{\mu_\lmu})\,(c_{\nu_1} \dots c_{\nu_\lnu})
=
(-1)^{\#(\mu > \nu)}c_{\lambda_1} \cdots c_{\lambda_\llam}
\]
where  the set $\{\lambda_1, \dots, \lambda_\llam\}$, with $\lambda_1 < \dots < \lambda_\llam$, is the symmetric difference of the sets $\{\mu_1, \dots, \mu_\lmu\}$ and $\{\nu_1, \dots, \nu_\lnu\}$
and
$\#(\mu > \nu) = |\setdef{(i,j)}{\mu_i > \nu_j}|$
counts the number of swaps needed to reorder the monomial.
Again,  this derivation uses the CAR alone, so the same holds for $\{d_\mu\}$, too.
Hence, 
\begin{align*}
    \phi\left((c_{\mu_1} \dots c_{\mu_\lmu})\,(c_{\nu_1} \dots c_{\nu_\lnu})\right)
    \;\;=\;\; & \phi\left((-1)^{\#(\mu > \nu)} \, c_{\lambda_1} \cdots c_{\lambda_\llam}\right)
    \\ \;\;=\;\; & (-1)^{\#(\mu > \nu)} \, \phi\left(c_{\lambda_1} \cdots c_{\lambda_\llam}\right) 
    \\ \;\;=\;\; & (-1)^{\#(\mu > \nu)}d_{\lambda_1} \cdots d_{\lambda_\llam}
    \\ \;\;=\;\; &
     (d_{\mu_1} \dots d_{\mu_\lmu})\,(d_{\nu_1} \cdots d_{\nu_\lnu}) 
    \\ \;\;=\;\; &
    \phi(c_{\mu_1} \cdots c_{\mu_\lmu})\,\phi(c_{\nu_1} \cdots c_{\nu_\lnu}) .
\end{align*}

We have established that $\phi$ is a $*$-automorphism of the matrix algebra $M_{2^n}(\CC)$.
Therefore, by the Skolem--Noether theorem, it is an inner automorphism, induced by a unitary $U$.
\end{proof}

The unitary $U$ in the above statement is given explicitly as follows:
for each $\vz \in \ZZ_2^{2n}$, the action of $U$ on the corresponding computational basis vector satisfies
\[
    U \ket{\vz} = \tc_1^{z_1} \tc_3^{z_2} \cdots \tc_{2n-1}^{z_n} \; U \ket{\vec{0}} = \left(\textstyle\prod\nolimits_{k=1}^{n} \tc_{2k-1}^{z_k}\right) \;  U\ket{\vec{0}} ,
\]
and
$U \ket{\vec{0}}$ is the simultaneous $+1$-eigenvector 
of the operators $(-i\tc_{2k-1}\tc_{2k})$ for all $k \in \{1, \ldots, n\}$.

\begin{corollary}\label{cor:fSvN-bijective}
There is a bijective correspondence between $n$-qubit unitary maps up to phase and tuples of $n$-qubit operators $(\tilde{c}_{\mu})_{\mu=1}^{2n}$ satisfying the CAR.
\end{corollary}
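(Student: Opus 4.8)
The plan is to read the bijection off directly from \Cref{thm:fermionicSvN} by fixing one of the two CAR-tuples once and for all. Concretely, I would take $\{c_\mu\}_{\mu=1}^{2n}$ to be the standard Majorana operators of \Cref{eq:JW} and define a map $\Phi$ sending the phase-equivalence class $[U]$ of a unitary $U \in \U(2^n)$ to the tuple $(U c_\mu U^\dagger)_{\mu=1}^{2n}$. The corollary then amounts to showing that $\Phi$ is a well-defined bijection onto the set of CAR-tuples, with each of the three clauses—well-definedness, injectivity, surjectivity—corresponding to one half of the theorem.

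First I would check that $\Phi$ is well-defined. Independence of the chosen representative is immediate, since conjugation is insensitive to a global phase (as in \Cref{prop:GMnk-closed-phase}). That the image lands among the CAR-tuples follows because conjugation by a unitary is a $*$-automorphism: it preserves Hermiticity, and $\{U c_\mu U^\dagger, U c_\nu U^\dagger\} = U \{c_\mu, c_\nu\} U^\dagger = 2\delta_{\mu\nu}\one$, so the target tuple again satisfies the CAR.

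Next, for injectivity, suppose $\Phi([U]) = \Phi([V])$, \ie $U c_\mu U^\dagger = V c_\mu V^\dagger$ for every $\mu$. Then $V^\dagger U$ commutes with each $c_\mu$, and hence with every monomial $c_{\mu_1}\cdots c_{\mu_\lmu}$. Since these monomials form a basis of $M_{2^n}(\CC)$, the operator $V^\dagger U$ lies in the centre of the full matrix algebra and is therefore a scalar multiple of the identity; unitarity forces that scalar to be a phase, so $U = e^{i\phi}V$ and $[U]=[V]$. Finally, surjectivity is exactly the existence part of \Cref{thm:fermionicSvN}: given any CAR-tuple $(\tc_\mu)$, applying the theorem to $\{c_\mu\}$ and $\{\tc_\mu\}$ yields a unitary $W$ with $W^\dagger c_\mu W = \tc_\mu$, and setting $U = W^\dagger$ gives $\Phi([U]) = (\tc_\mu)$.

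I do not anticipate a genuine obstacle, as the statement essentially repackages the theorem; the only point demanding care is keeping the two directions straight. Surjectivity invokes the existence clause (together with a single adjoint to reconcile the conjugation convention $U \,\cdot\, U^\dagger$ with the theorem's $U^\dagger \,\cdot\, U$), while well-definedness and injectivity together encode the uniqueness-up-to-phase clause. The one substantive ingredient is that the fixed $\{c_\mu\}$ generate $M_{2^n}(\CC)$, a fact already recorded in the excerpt, which is precisely what collapses the commutant to scalars.
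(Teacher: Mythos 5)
Your proposal is correct and takes essentially the same route as the paper: both define the map by conjugating the standard Majorana operators, obtain well-definedness from phase-invariance and CAR-preservation under conjugation, derive injectivity from the fact that the Majorana monomials span $M_{2^n}(\CC)$ (so equal conjugation actions force equality up to a phase), and get surjectivity from \Cref{thm:fermionicSvN}. The only differences are cosmetic: you use the convention $U c_\mu U^\dagger$ where the paper uses $U^\dagger c_\mu U$, and you spell out explicitly the commutant-reduces-to-scalars step that the paper leaves implicit.
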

\begin{proof}
    Given a unitary $U$, take $\tilde{c}_\mu \defeq U^{\dagger} c_\mu U$.
    These operators satisfy the CAR, as such relations are preserved under conjugation, an (inner) automorphism.
    Moreover, its conjugation action on the Majorana operators, since these generate $M_{2^n}(\CC)$ as an algebra,
    fully determines the action on the whole matrix algebra. Hence, it determines the unitary $U$ up to phase,
    showing that the map from unitaries up to phase to CAR tuples is injective.
    \Cref{thm:fermionicSvN} above shows that it is surjective.
\end{proof}

One can restrict these results to characterise the fermionic (even or odd) unitaries.
By \Cref{lemma:unitary_parity_preserving_conjugation_implies_evenorodd}, these are precisely the unitaries that preserve matrix parity under conjugation.
We thus arrive at the following versions of
\Cref{thm:fermionicSvN} and \Cref{cor:fSvN-bijective}

\begin{corollary} \label{cor:fSvN_odd_tuples}
Given two sets $\{c_{\mu}\}_{\mu=1}^{2n}$ and $\{d_{\mu}\}_{\mu=1}^{2n}$ of odd
$n$-qubit operators that satisfy the CAR,
there is an $n$-qubit fermionic unitary $U \in \U(2^n)$, unique up to a phase, such that $U^\dagger c_{\mu}U = d_{\mu}$.
\end{corollary}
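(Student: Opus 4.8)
The plan is to obtain the unitary from the non-fermionic version, \Cref{thm:fermionicSvN}, and then upgrade it to a fermionic one using the parity characterisation of \Cref{lemma:unitary_parity_preserving_conjugation_implies_evenorodd}. Since odd operators satisfying the CAR are in particular operators satisfying the CAR, \Cref{thm:fermionicSvN} immediately supplies a unitary $U \in \U(2^n)$, unique up to a phase, with $U^\dagger c_\mu U = d_\mu$. The uniqueness-up-to-phase clause is then inherited verbatim, so the only thing left to establish is that this $U$ is fermionic.

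To that end, I would show that conjugation by $U$ preserves matrix parity and then invoke \Cref{lemma:unitary_parity_preserving_conjugation_implies_evenorodd}. The key observation is that a product of $m$ odd operators has a definite parity governed by the parity of $m$: since each $d_\mu$ is odd, it anticommutes with $Z^{\otimes n}$, whence
\[
Z^{\otimes n}\,(d_{\mu_1}\cdots d_{\mu_m})\,Z^{\otimes n} \;=\; (-1)^m\, d_{\mu_1}\cdots d_{\mu_m},
\]
so a degree-$m$ monomial in the $d_\mu$ lies in $\Even{n}$ for $m$ even and in $\Odd{n}$ for $m$ odd, and likewise for the $c_\mu$.

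Now, because $\{d_\mu\}$ satisfies the CAR, the monomials $d_{\mu_1}\cdots d_{\mu_m}$ with $\mu_1 < \cdots < \mu_m$ form a basis of $M_{2^n}(\CC)$ by the result of \cite{lee1947clifford} recalled above; a dimension count then shows that the even-degree (resp.\ odd-degree) monomials form a basis of $\Even{n}$ (resp.\ $\Odd{n}$). Reading the defining relation as $U d_\mu U^\dagger = c_\mu$, conjugation by $U$ sends each such monomial to the corresponding monomial in the $c_\mu$ of the \emph{same} degree, since $U (d_{\mu_1}\cdots d_{\mu_m}) U^\dagger = (U d_{\mu_1} U^\dagger)\cdots(U d_{\mu_m} U^\dagger) = c_{\mu_1}\cdots c_{\mu_m}$, which has the same parity by the previous paragraph. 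Hence conjugation by $U$ maps $\Even{n}$ into $\Even{n}$ and $\Odd{n}$ into $\Odd{n}$, and \Cref{lemma:unitary_parity_preserving_conjugation_implies_evenorodd} lets us conclude that $U$ is fermionic.

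The main obstacle is precisely the bridge of the last paragraph: from the mere fact that conjugation interchanges the generating tuples $d_\mu \leftrightarrow c_\mu$ one must promote parity-preservation to \emph{all} operators. This hinges on two facts working in tandem — that the monomials furnish a basis adapted to the even/odd decomposition, and that conjugation is an algebra map and so respects the degree (hence parity) of monomials. Once these are in place, the remainder is a direct appeal to the two earlier results.
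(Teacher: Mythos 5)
Your proposal is correct and follows essentially the same route as the paper, which derives this corollary precisely by combining \Cref{thm:fermionicSvN} with \Cref{lemma:unitary_parity_preserving_conjugation_implies_evenorodd}; the paper leaves the parity-preservation step implicit, and you have filled it in correctly via the degree-parity of CAR monomials together with the dimension count $\dim \Even{n} = \dim \Odd{n} = 2^{2n-1}$, which matches the number of even- (resp.\ odd-) degree monomials. No gaps.
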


\begin{corollary} \label{cor:fSvN-bijective_odd_tuples}
There is a bijective correspondence between $n$-qubit fermionic unitary maps up to phase and tuples of $n$-qubit odd operators $(\tilde{c}_{\mu})_{\mu=1}^{2n}$ satisfying the CAR.
\end{corollary}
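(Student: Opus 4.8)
The plan is to mirror the proof of \Cref{cor:fSvN-bijective}, simply restricting both sides of the correspondence: on the one side from all unitaries up to phase to the \emph{fermionic} ones, and on the other from all CAR tuples to those consisting of \emph{odd} operators. As in that proof, the candidate map sends a fermionic unitary $U$ to the tuple $(\tc_\mu)_{\mu=1}^{2n}$ with $\tc_\mu \defeq U^\dagger c_\mu U$.

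First I would check that this map is well defined, i.e.\ that it really lands among the odd CAR tuples. That the $\tc_\mu$ satisfy the CAR is immediate, since conjugation is an (inner) automorphism and thus preserves the defining relations, exactly as in \Cref{cor:fSvN-bijective}. The extra point to verify is that each $\tc_\mu$ is odd. This follows because a fermionic unitary satisfies $Z^{\otimes n} U Z^{\otimes n} = \pm U$, and hence, taking adjoints, $Z^{\otimes n} U^\dagger Z^{\otimes n} = \pm U^\dagger$ with the same sign; combining this with the oddness of $c_\mu$, namely $Z^{\otimes n} c_\mu Z^{\otimes n} = - c_\mu$, one computes $Z^{\otimes n} \tc_\mu Z^{\otimes n} = (\pm U^\dagger)(-c_\mu)(\pm U) = -\tc_\mu$, so that $\tc_\mu \in \Odd{n}$ regardless of whether $U$ is even or odd.

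For injectivity, I would reuse the argument from \Cref{cor:fSvN-bijective} verbatim: the conjugation action of $U$ on the Majorana operators determines its action on all of $M_{2^n}(\CC)$, as the monomials in the $c_\mu$ form a basis, and an inner automorphism fixes its inducing unitary up to a global phase; this reasoning is insensitive to the fermionic restriction. For surjectivity, given an arbitrary odd CAR tuple $(\tc_\mu)_{\mu=1}^{2n}$, I would invoke \Cref{cor:fSvN_odd_tuples} with the standard odd Majoranas $\{c_\mu\}$ and the tuple $\{\tc_\mu\}$: it supplies a fermionic unitary $U$, unique up to phase, with $U^\dagger c_\mu U = \tc_\mu$, which is precisely a preimage of the given tuple.

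The only real subtlety, and the step worth dwelling on, is the matching of the two restrictions: one must confirm that the forward map outputs an \emph{odd} tuple, handled by the sign computation above, and that surjectivity is witnessed by a genuinely \emph{fermionic} unitary, which is guaranteed precisely by invoking the fermionic \Cref{cor:fSvN_odd_tuples} rather than the general \Cref{thm:fermionicSvN}. Once these two constraints are aligned, the bijection follows with no further work.
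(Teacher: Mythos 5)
Your proof is correct and takes essentially the same route as the paper, which leaves this corollary implicit as a restriction of \Cref{cor:fSvN-bijective}: well-definedness via the parity sign computation (the content of \Cref{lemma:unitary_parity_preserving_conjugation_implies_evenorodd}'s converse direction), injectivity reused verbatim, and surjectivity from the fermionic \Cref{cor:fSvN_odd_tuples}. Your explicit check that $Z^{\otimes n}\tilde{c}_\mu Z^{\otimes n} = -\tilde{c}_\mu$ regardless of the sign of $U$ is exactly the detail the paper glosses over, so no gap remains.
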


The following version then follows immediately from the definition of the matchgate hierarchy.

\begin{corollary}[Hierarchical fermionic Stone--von Neumann theorem]
Given a set $\{\tc_{\mu}\}_{\mu=1}^{2n}$ of odd operators in $\GMn{k}$ that satisfy the canonical anticommutation relations,
there exists a unitary $U \in \GMn{k+1}$, unique up to a phase,
such that $U^\dagger c_{\mu} U = \tc_{\mu}$ for all $\mu$. 
This $U$ is given explicitly as follows:
for each $\vec{z} \in \ZZ_2^n$, the action of $U$ on the corresponding computational basis vector satisfies
\[
    U \ket{\vec{z}} = \tc_1^{z_1} \tc_3^{z_2} \cdots \tc_{2n-1}^{z_n} \; U \ket{\vec{0}} = \left(\textstyle\prod\nolimits_{k=1}^{n} \tc_{2k-1}^{z_k}\right) \;  U\ket{\vec{0}} ,
\]
and
$U \ket{\vec{0}}$ is the simultaneous $+1$-eigenvector 
of the operators $(-i\tc_{2k-1}\tc_{2k})$ for all $k \in [n]$.
\end{corollary}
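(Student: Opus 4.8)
The plan is to extract the unitary wholesale from the fermionic Stone--von Neumann theorem and then read off membership at level $k+1$ from the definition, so that almost nothing new has to be proved. The standard Majorana operators $\{c_\mu\}$ are odd and satisfy the CAR, and by hypothesis so are the $\{\tc_\mu\}$; hence \Cref{cor:fSvN_odd_tuples} applies directly and produces a \emph{fermionic} unitary $U \in \U(2^n)$, unique up to a phase, that intertwines the two tuples by conjugation. The explicit description of $U$ — with $U\ket{\vec 0}$ the simultaneous $+1$-eigenvector of the $(-i\tc_{2k-1}\tc_{2k})$ and $U\ket{\vec z}=\left(\prod_{k} \tc_{2k-1}^{z_k}\right)U\ket{\vec 0}$ — is inherited verbatim from the explicit form accompanying \Cref{thm:fermionicSvN}, specialised to $d_\mu=\tc_\mu$. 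Thus existence, uniqueness up to phase, fermionicity, and the explicit formula all come for free; the only genuinely new assertion is that this $U$ sits at level $k+1$.

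For the hierarchy membership I would simply unfold \Cref{def:matchgate_hierarchy}: by definition $U\in\GMn{k+1}$ exactly when conjugation carries every Majorana operator into $\GMn{k}\cap\Odd{n}$. The hypothesis supplies precisely this, since each $\tc_\mu$ is assumed to be an odd operator lying in $\GMn{k}$, i.e.\ in $\GMn{k}\cap\Odd{n}$. So once the intertwiner of the first paragraph is matched to the defining conjugation relation, the level-$(k+1)$ condition is satisfied on the nose. This is exactly the sense in which the statement ``follows immediately from the definition'': the abstract intertwiner of \Cref{cor:fSvN_odd_tuples} is forced to be a level-$(k+1)$ gate the moment its conjugation images are constrained to lie one level down.

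The single point requiring care — and the only place where an error could creep in — is the bookkeeping of the direction of conjugation, since \Cref{def:matchgate_hierarchy} is phrased through $U c_\mu U^\dagger$ whereas \Cref{thm:fermionicSvN} is phrased through $U^\dagger c_\mu U$. I would pin the convention so that the unitary produced is the one whose \emph{forward} action $c_\mu\mapsto U c_\mu U^\dagger$ equals $\tc_\mu$ (this is also the convention under which the accompanying explicit formula, with $U\ket{\vec 0}$ the $\tc$-vacuum, is the correct one); with this reading the defining condition of $\GMn{k+1}$ holds directly. If one instead insists on the literal relation $U^\dagger c_\mu U=\tc_\mu$, the very same computation instead places $U^\dagger$ in $\GMn{k+1}$, and transferring the conclusion to $U$ itself then requires that each level of the hierarchy be closed under adjoints. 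Establishing that closure — by induction on the level, with the group $\GMn{2}$ as base case and the bijective correspondence of \Cref{cor:fSvN-bijective_odd_tuples} driving the inductive step — is the only ingredient with real mathematical substance, and I would isolate it as a standalone lemma so as to avoid any circularity with the present statement.
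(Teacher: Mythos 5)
Your proposal is correct and takes essentially the same route as the paper, which offers no separate argument for this corollary beyond asserting that it ``follows immediately from the definition'' once \Cref{cor:fSvN_odd_tuples} supplies the fermionic intertwiner, unique up to phase, with the explicit formula inherited from \Cref{thm:fermionicSvN} --- precisely your first two paragraphs. Your conjugation-direction point is moreover a genuine catch rather than mere bookkeeping: the definition of $\GMn{k+1}$ is phrased via $U c_\mu U^\dagger$ while the corollary writes $U^\dagger c_\mu U$, and only your forward reading $\tc_\mu = U c_\mu U^\dagger$ is consistent with the stated vacuum property, since then $(-i\,\tc_{2k-1}\tc_{2k})\, U\ket{\vec{0}} = U Z_k U^\dagger U \ket{\vec{0}} = U \ket{\vec{0}}$, so pinning that convention --- rather than taking the detour through adjoint-closure of the levels, which the paper nowhere establishes --- is the right resolution.
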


\begin{corollary}
There is a bijection between $\GMn{k+1}$ and $2n$-tuples in $\GMn{k} \cap \Odd{n}$ that satisfy the CAR.
Consequently, $\GMn{k}$ spans a subspace of $2^n \times 2^n$ matrices of dimension at most $(2n)^k$.
\end{corollary}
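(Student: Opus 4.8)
The first assertion is essentially a repackaging of the preceding hierarchical Stone--von Neumann corollary (the hierarchical form of \Cref{cor:fSvN_odd_tuples}), and the plan is to exhibit the two mutually inverse maps explicitly. In one direction, send $U \in \GMn{k+1}$ to the tuple $(U c_\mu U^\dagger)_{\mu=1}^{2n}$: by \Cref{def:matchgate_hierarchy} each entry lies in $\GMn{k} \cap \Odd{n}$, and since conjugation by $U$ is an (inner) automorphism it preserves the CAR, so this is indeed a CAR tuple of odd level-$k$ operators. In the other direction, the hierarchical corollary assigns to each such tuple a fermionic unitary in $\GMn{k+1}$, unique up to a phase. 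These are inverse to one another once one quotients by the unavoidable global phase, so the correspondence is a bijection between $\GMn{k+1}$ (taken up to phase) and CAR tuples in $\GMn{k} \cap \Odd{n}$; I would state the phase caveat explicitly, exactly as in \Cref{cor:fSvN-bijective}.

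For the bound I would induct on $k$. The base case $k=1$ is immediate and sharp: by definition $\GMn{1}$ consists of the unit-norm real combinations of the $c_\mu$, whose complex linear span is $\mathrm{span}_{\CC}\{c_1,\dots,c_{2n}\}$, of dimension $2n = (2n)^1$. For the inductive step I would feed the bijection into the explicit reconstruction formula: writing $\tc_\mu = U c_\mu U^\dagger \in \GMn{k} \cap \Odd{n}$, the identity $U\ket{\vec{z}} = \big(\prod_{j=1}^n \tc_{2j-1}^{z_j}\big)\,U\ket{\vec{0}}$, together with the characterisation of $U\ket{\vec{0}}$ as the joint $+1$-eigenvector of the operators $-i\,\tc_{2j-1}\tc_{2j}$, shows that $U$ is assembled entirely from the $2n$ level-$k$ operators $\tc_\mu$. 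Each $\tc_\mu$ is, by the induction hypothesis, captured within a space of dimension at most $(2n)^k$, so the data determining a level-$(k+1)$ gate occupies a space of dimension at most $2n \cdot (2n)^k = (2n)^{k+1}$.

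The step I expect to be the real obstacle is the passage from ``$U$ is determined by $2n$ operators, each drawn from a $(2n)^k$-dimensional space'' to a genuine bound on the dimension of the linear span of $\GMn{k+1}$ inside $M_{2^n}(\CC)$: the reconstruction map from a CAR tuple to its unitary is manifestly \emph{nonlinear} (it combines products of the $\tc_{2j-1}$ with an eigenvector extraction for $U\ket{\vec{0}}$), so a Cartesian-product count of parameters does not, on its face, bound a linear span. To close this gap I would try to linearise the reconstruction --- for instance by expanding $U = \sum_{\vec{z}}\big(\prod_j \tc_{2j-1}^{z_j}\big)\,U\ket{\vec{0}}\bra{\vec{z}}$ and absorbing the fixed computational-basis data into a fixed family of maps, so as to exhibit $\GMn{k+1}$ inside the image of a single fixed linear map applied to the $(2n)^{k+1}$-dimensional tuple space. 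I would be careful to phrase the conclusion as a bound on the \emph{representation} --- the space requirement for encoding gates at each level, as advertised in the introduction --- since that is what the recursive tuple description delivers most directly.
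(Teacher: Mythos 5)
Your bijection argument is exactly the paper's (implicit) one: the corollary carries no separate proof, being read off from the hierarchical Stone--von Neumann corollary precisely as you do, with $U \mapsto (Uc_\mu U^\dagger)_{\mu=1}^{2n}$ well defined by \Cref{def:matchgate_hierarchy} and inverted by the reconstruction. Your insistence on the up-to-phase quotient is correct and genuinely needed: by \Cref{prop:GMnk-closed-phase} every phase multiple of $U$ produces the same tuple, so the map is many-to-one on $\GMn{k+1}$ itself; the paper's wording silently drops the caveat that \Cref{cor:fSvN-bijective} carries, and you are right to restore it.

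On the dimension claim, your suspicion is not merely a presentational worry --- it is decisive, and the linearisation you sketch as a rescue cannot exist. The literal span statement fails for $k \geq 2$: the set $\GMn{2}$ is a group (composing conjugation actions composes the corresponding orthogonal matrices) that contains every Majorana operator $c_\mu$, hence every monomial $c_{\mu_1} \cdots c_{\mu_{l}}$, and by the basis theorem cited before \Cref{thm:fermionicSvN} these monomials span all of $M_{2^n}(\CC)$. With \Cref{prop:nested}, $\mathrm{span}_{\CC}\,\GMn{k} = M_{2^n}(\CC)$ has dimension $4^n$ for every $k \geq 2$, which already exceeds $(2n)^k$ at $n=3$, $k=2$ (since $64 > 36$). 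So no fixed linear map from the $(2n)^{k+1}$-dimensional tuple space can have the required image, and your expansion $U = \sum_{\vec{z}} \bigl(\prod_j \tc_{2j-1}^{z_j}\bigr) U\ket{\vec{0}}\!\bra{\vec{z}}$ cannot be made linear in the tuple --- the factor $U\ket{\vec{0}}$ depends on the $\tc_\mu$ through an eigenvector extraction, and products of the $\tc_{2j-1}$ are themselves nonlinear in them. Your fallback is therefore the only sustainable reading, and it matches what the abstract actually advertises (``space requirements for representing gates''): up to phase, a level-$(k+1)$ gate is encoded by a $2n$-tuple of odd level-$k$ gates, recursively bottoming out in unit vectors of $\RR^{2n}$, so a level-$k$ gate is specified by $(2n)^k$ real coordinates. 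Your recursive argument proves exactly this parametrisation statement (an injection of $\GMn{k}$ modulo phase into a real parameter space of dimension $(2n)^k$); state the conclusion in that form rather than as a bound on a linear span, which is how the corollary's second sentence ought to have been phrased.
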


\section{Outlook}\label{sec:outlook}

In this work we adapted the concept of the Clifford hierarchy to the context of matchgate circuits:
we presented a gate teleportation protocol and an associated hierarchy of fermionic unitary gates, the matchgate hierarchy.
Any $n$-qubit gate in the hierarchy can be deterministically implemented using adaptive matchgate circuits and matchgate-magic states.
Moreover, we gave a complete characterisation of the gates in the matchgate hierarchy for two qubits, whereby matchgate-equivalence classes of even $k$-level gates correspond to the $2^{k-2}$-th roots of unity.
As a consequence of our particular characterisation, we concluded that the number of matchgate-magic states needed as a resource for our protocol grows linearly with the level of the target gate in the hierarchy.
For an arbitrary number of qubits we proved a `hierarchy-aware' fermionic Stone--von Neumann theorem which paves the way for a better understanding of the structure of the matchgate hierarchy.

Various directions for follow-up research arise from the ideas developed in this work.
These fall broadly into two categories.
The first concerns further studying the structure of the matchgate hierarchy
and the gate teleportation protocol that underpins it, as well as variants.
The second relates to the broader context wherein magic state protocols are useful, that of fault-tolerant quantum computing, as well as to other potential applications of the matchgate hierarchy.

\begin{itemize}
\item We provided a complete characterisation of the matchgate hierarchy for two-qubit gates,
built upon a parametrisation of two-qubit fermionic (even and odd) unitaries. A natural question is
to search for a similar characterisation for higher numbers of qubits.

One possible route is by relaxing matchgate identities.
These are equations on the entries of an $n$-qubit unitary matrix that provide necessary and sufficient conditions for it to be Gaussian, \ie implementable by matchgate circuit.
It is conceivable that a relaxation of such identities can be used to identify gates in the matchgate hierarchy.
Note that our characterisation for two-qubit gates in \Cref{sec:2qubit_characterisation} can be thought of in this fashion:
the matchgate identities for two-qubit gates essentially check the matching determinants condition, which is relaxed in a controlled fashion for gates at higher levels of the hierarchy.

\item A possible pay-off from such a characterisation, just as in the two-qubit case, would be a finer analysis of the resource complexity of the gate teleportation protocol for $n$-qubit gates as one goes up the levels of the hierarchy, itself a crucial open question.

\item Another line of inquiry concerns the structure of the matchgate hierarchy.
One may draw inspiration from the extensive work on the Clifford hierarchy.
For instance, Pllaha et al.~\cite{Pllaha2020unweylingclifford} examined the structure gates in the Clifford hierarchy through their Pauli (or Weyl)
expansion into linear combinations of Pauli operators,
in particular characterising the Pauli support of second- and third-level gates.
A similar approach could be followed for the matchgate hierarchy, using the Majorana expansion of unitaries
into a linear combination of monomials of Majoranas, \ie elements of the Majorana group (see paragraph following \Cref{eq:CARgroup}).
One could build on recent findings regarding the structure of this group \cite{bettaque2024}.
A partial result we are able to report in this direction is that the conjugation of a Majorana by a two-qubit third-level gate is supported on degree-3 Majorana monomials; it is not yet clear how this generalises to higher number of qubits.

Other analogies (or disanalogies) with the Clifford hierarchy could be explored.
More broadly, mapping out similarities and differences between the Clifford and matchgate hierarchies
could be fruitful for drawing out some common features,
aspiring towards a general framework that encompasses 
both hierarchies and potentially others in the same spirit, offering a unified perspective on gate hierarchies.

\item The state $B^{(n)}\!\ket{\mathbf{0}}$ could be thought of as the would-be matchgate-magic state for the $n$-qubit identity operator $\one$, in that running our gate teleportation protocol using it as the resource state would implement the identity, as calculated for the case $n=2$ in \Cref{eq:protocol_snake}. Of course, this is not a proper matchgate-magic state since the identity operator is itself Gaussian.
Still, \Cref{eq:protocol_snake} is the core of the gate teleportation protocol for an arbitrary $U$.
It is reminiscent of the standard qubit teleportation protocol, neatly -- and strikingly visually -- captured by the `yanking' equations in categorical quantum mechanics \cite{abramsky2004categorical}. A question is whether $B^{(n)}\!\ket{\mathbf{0}}$
provides the `cup' -- that is, a compact closed structure -- in a suitable monoidal category for fermionic maps; cf.\ \cite{felice2019diagrammatic,carette2023}

\item Parafermions \cite{fradkin1980parafermions,gungordu2014parafermion, hutter_2016} are a generalisation of Majorana fermions
whereby the canonical anticommutation relations of \Cref{eq:CARgroup} give way to
$c_\mu^d = \one$ and $c_\mu c_\nu = \omega c_\nu c_\mu$ for $\mu < \nu$,
where $\omega = e^{2 \pi i / d}$ is the primitive $d$th root of unity, for a fixed arbitrary $d \in \NN$
(Majorana fermions are the $d = 2$ case).
The Jordan--Wigner transform generalises to a mapping between parafermionic operators and qudit Paulis.
It would be interesting to explore to what extent the story carries over to that case.

\item In the stabiliser world, gates in the Clifford hierarchy that are diagonal up to Clifford equivalence (\ie of the form $C_1DC_2$ for $C_i$ Clifford and $D$ diagonal), named \emph{semi-Clifford gates},
admit a more efficient gate teleportation protocol \cite{Zhou_2000} (see also \cite{Appleby_2012,Zeng_2008}).
It uses magic states with half the number of qubits: implementing a semi-Clifford $n$-qubit unitary $U$ consumes an $n$-qubit state rather than a $2n$-qubit state: essentially, the state $U\!\ket{+}$ instead of the state $(\one \otimes U)\!\ket{\Phi_{\text{Bell}}}$.
An open question is whether a similarly more efficient protocol exists for a subclass of gates in the matchgate hierarchy.

\item
The Clifford hierarchy plays a central role in a leading model for fault-tolerant quantum computing
built upon stabiliser circuits (which are by themselves classically simulable) adjoined with so-called magic states,
pre-prepared nonstabiliser states that encode a specific non-Clifford operation (sufficient to reach quantum universality) to be implemented via gate teleportation.
Underpinning this model is a key feature of Clifford operations that they can be implemented fault-tolerantly:
there are Clifford-generating gate sets that can be implemented transversally on logical qubits encoded by a specific stabiliser error-correcting code.
There are studies on error correction to protect quantum information against fermionic errors, notably Majorana fermion codes \cite{Bravyi_2010}, which are also mathematically reminiscent of stabiliser codes but based on the Majorana instead the Clifford group.
However, a fault-tolerant model for (transversal) implementation of matchgate circuits is yet to be developed.

\item 
Another crucial ingredient in such a model is the preparation of the resource states to be used by the matchgate circuit.
On the one hand, it is pertinent to find circuits that prepare magic states fault-tolerantly, using operations that can be implemented transversally under an error-correcting code, typically a different code from the one to be used to implement matchgate circuits, in analogy to what can be done for the usual (Clifford) magic states.
On the other, it would be relevant to study distillation protocols for matchgate-magic states.

\item
Finally, Ref.~\cite{matchgate_hierarchy_sergii} uses the matchgate hierarchy in the context of learning unknown fermionic Gaussian unitaries, analogously to the results of Ref.~\cite{Low_2009} for the Clifford hierarchy.
Given the wealth of applications of the Clifford hierarchy, it is natural to seek other analogous applications for the matchgate hierarchy. 

\end{itemize}


\section*{Acknowledgements}

AB would like to thank Anita Camillini, Raman Choudhary, Antonio Ruiz-Molero, and Wilfred Salmon for valuable discussions. AB also thanks Simon Fraser University (SFU) for the kind hospitality and the group of NdS for many useful comments and fruitful discussions.  

The authors would also like to thank anonymous conference referees for helpful comments and for spotting an error in an earlier version of this work.

This work was supported by the European Union and NSERC -- Natural Sciences and Engineering Research Council of Canada through the Digital Horizon Europe project FoQaCiA, GA no. 101070558 (AB, RSB, NdS).
AB and RSB also acknowledge support from FCT -- Funda\c{c}\~ao para a Ci\^encia e a Tecnologia (Portugal) through PhD Grant SFRH/BD/151310 (AB) and through CEECINST/00062/2018 (RSB).
NdS also acknowledges support from the Canada Research Chair program, NSERC Discovery Grant RGPIN-2022-03103, and the Faculty of Science of Simon Fraser University.

\bibliographystyle{plain}
\bibliography{references}

\end{document}